\documentclass[12pt, draftclsnofoot,onecolumn]{IEEEtran}
\usepackage{color}     
\usepackage{epsf,psfrag,amssymb,amsfonts,latexsym,cite,verbatim,enumerate,subfigure}
\usepackage{srcltx,amsmath,cases, graphicx}
\usepackage{times, multirow, array}
\usepackage[mathscr]{eucal}

\def\psfancypar#1#2{\begingroup\def\par{\endgraf\endgroup\lineskiplimit=0pt}
               \setbox2=\hbox{\large\sc #2}
               \newdimen\tmpht \tmpht \ht2 \advance\tmpht by \baselineskip
               \font\hhuge=Times-Bold at \tmpht
               \setbox1=\hbox{{\hhuge #1}}
               \count7=\tmpht \count8=\ht1
               \divide\count8 by 1000 \divide\count7 by \count8 
               \tmpht=.001\tmpht\multiply\tmpht by \count7 
               \font\hhuge=Times-Bold at \tmpht
               \setbox1=\hbox{{\hhuge #1}}
               \noindent
                \hangindent1.05\wd1
               \hangafter=-2 {\hskip-\hangindent
               \lower1\ht1\hbox{\raise1.0\ht2\copy1}%
                \kern-0\wd1}\copy2\lineskiplimit=-1000pt}

\newcommand{\E}{\mbox{{\rm E}}}

\newcommand{\abf}{\mbox{${\bf a}$}}

 
\def\boxit#1{\vbox{\hrule\hbox{\vrule\kern3pt
        \vbox{\kern3pt#1\kern3pt}\kern3pt\vrule}\hrule}}

\def\reals{ { {\rm  I \kern-0.15em R }  } }
\def\complex{ {\,{{\rm C} \kern-0.50em \raise0.20ex {  |}}\, }}

\def\mubf{\hbox{\boldmath$\mu$\unboldmath}}

\def\Sigmabf{\hbox{$\bf \Sigma$}}

\def\abf{{\bf a}}

\def\ebf{{\bf e}}

\def\hbf{{\bf h}}

\def\ubf{{\bf u}}

\def\wbf{{\bf w}}
\def\xbf{{\bf x}}

\def\xbf{{\bf x}}

\def\Abf{{\bf A}}

\def\Ibf{{\bf I}}

\def\Rbf{{\bf R}}

\def\Cc{{\cal C}}

\def\Nc{{\cal N}}

\def\Rc{{\cal R}}
\def\Sc{{\cal S}}

\def\be{\vskip .3cm \begin{equation}}
\def\ee{\end{equation} \vskip .4cm \noindent}
%

%

\newcommand{\R}{\mbox{$\hat {\bf R}_{N}$}}

\def\Rxx{\Rbf_{\ssstyle X\kern-.1em X}}

\let\ssstyle=\scriptscriptstyle


\def\Kout{\setbox1=\hbox{\Huge\bf K}\hbox to
1.05\wd1{\hspace{.05\wd1}
\def\Sout{\setbox1=\hbox{\Huge\bf S}\hbox to 1.05\wd1{\hspace{.05\wd1}

  \ifx\LabelFigloaded\MYundefined\relax
  \else
    \message{ !!! labelfig.tex ALREADY loaded !!!}
   \fi

  \def\LabelFigloaded{\relax}


  \chardef\LabelFigCatAt\the\catcode`\@
  \catcode`\@=11

 \let\LabelFigwlog@ld\wlog
 \def\wlog#1{\relax}

 \ifx\\\MYundefined@
    \let\\\relax
 \fi


  \def\ms@g{\immediate\write16}

 \def\N@wif{\csname newif\endcsname }
 \def\Temp@ {\N@wif\ifIN@}
 \ifx\INN@\MYundefined@
    \else \let\Temp@\relax
 \fi
 \Temp@

  \def\IN@{\expandafter\INN@\expandafter}
  \long\def\INN@0#1@#2@{\long\def\NI@##1#1##2##3\ENDNI@
    {\ifx\m@rker##2\IN@false\else\IN@true\fi}%
     \expandafter\NI@#2@@#1\m@rker\ENDNI@}
  \def\m@rker{\m@@rker}
 
  \newtoks\Initialtoks@  \newtoks\Terminaltoks@
  \def\SPLIT@{\expandafter\SPLITT@\expandafter}
  \def\SPLITT@0#1@#2@{\def\TTILPS@##1#1##2@{%
     \Initialtoks@{##1}\Terminaltoks@{##2}}\expandafter\TTILPS@#2@}

 \def\Shifted@@#1#2#3{\setbox0=\hbox{#3}%
   \raise -\dp0\vbox {\kern-#2%
       \hbox {\kern#1\unhbox0\kern-#1}%
           \kern#2}}

 \newcount\gridcount
 \newbox\auxGridbox@ \newbox\hGridbox@ \newbox\vGridbox@
 \newbox\Labelbox@ \newbox\auxLabelbox@
 \newbox\Coordinatebox@
 \newtoks\Labeltoks@
 \newdimen\Wdd@ \newdimen\Htt@
 \newdimen\Wddd@ \newdimen\Httt@
 
 \def\Wr@{\immediate\write16}

 \newdimen\GL@wd
 \GL@wd=.02pt
 \def\GridLineWidth#1{\GL@wd=#1}

 \def\gobble#1{}
 \def\EdgeErr@{\Wr@{}%
      \Wr@{\string\Edges\space argument
      1, 10, 100 or 1000 please\string!}%
      }

 \newcount\Edgect@

 \def\Sweepup#1\endSweepup{}

 \def\SetEdges@{%
    \edef\Zr@@s{\expandafter\gobble\number\Edgect@\empty}%
        \count255=0\Zr@@s\relax
        \ifnum\count255=\z@\else\EdgeErr@\show\tailtest\fi
        \count255=1\Zr@@s\relax
        \ifnum\count255=\Edgect@\relax\else\EdgeErr@\show\leadtest\fi
    \EdgGl@b\edef\Zr@s{\expandafter\gobble\Zr@@s\empty}
    \ifnum\Edgect@>\@ne\relax\EdgGl@b\let\L@Dc\empty
        \else\EdgGl@b\edef\L@Dc{\string.}\fi
    \ifnum\Edgect@>\@ne\relax
        \EdgGl@b\edef\Edgescale@##1{\divide##1 by \Edgect@}%
        \else\EdgGl@b\edef\Edgescale@##1{}\fi
    }

 \def\Edges#1{\Edgect@=#1\relax
     \let\EdgGl@b\global \SetEdges@}

 \Edges{1}

 \def\hhrule{\hrule height \GL@wd\vskip-.\GL@wd}

 \def\hRule@{%
   \advance\gridcount -2%
   \vfil\hhrule\vfil
   \llap{\smash{\raise -2.5pt
     \hbox{\L@Dc\number\gridcount\Zr@s\kern2pt}}}%
   \hhrule
   }

\def\vvrule{\vrule width \GL@wd \kern-\GL@wd}

 \def\vRule@{\advance\gridcount 2%
   \hfil\vvrule\hfil
   \setbox\auxGridbox@=\vbox to 0pt
      {\vskip \Htt@\vskip 2pt
        \hbox to 0pt{\hss\L@Dc\number\gridcount\Zr@s\hss}\vss}%
      \wd\auxGridbox@=0pt \box\auxGridbox@
   \vvrule
   }

 \def\PlaceGrid@@{\gridcount=10 
  \setbox\hGridbox@=\hbox{%
        \hbox{%
             \hskip-.4pt\vrule
             \vbox to \Htt@{%
               \offinterlineskip\parindent=\z@\relax
               \hbox to \Wdd@{\hfil}
               \hRule@\hRule@\hRule@\hRule@
               \vfil\hhrule\vfil}%
             \vrule\hskip-.4pt}
    }%
  \gridcount=0%
  \setbox\vGridbox@=\hbox{%
      \vbox{\offinterlineskip\parindent=0pt\hsize=0pt
         \vskip-.4pt\hrule%
         \hbox to \Wdd@{%
                 \vtop to \Htt@{\vfil}%
                 \vRule@\vRule@\vRule@\vRule@
                 \hfil\vvrule\hfil}%
         \hrule\vskip-.4pt}}%
  \wd\hGridbox@=0pt\ht\hGridbox@=0pt
  \wd\vGridbox@=0pt\ht\vGridbox@=0pt
  \hbox{\box\hGridbox@\box\vGridbox@}%
  }

 \def\LabelsGlobal{\def\LabGl@b{\global}}
 \def\LabelsLocal{\def\LabGl@b{}}
 \LabelsGlobal 

 \def\SetLabels#1\endSetLabels{%
   \LabGl@b\Labeltoks@={#1()\\}%
   }

 \LabGl@b\Labeltoks@={()\\}

 \def\ShowGrid{\LabGl@b\let\PlaceGrid@\PlaceGrid@@}
 \def\HideGrid{\LabGl@b\let\PlaceGrid@\relax}
 \def\Grids{\ShowGrid\LabGl@b\let\GridSwitch@\ShowGrid}
 \def\noGrids{\HideGrid\LabGl@b\let\GridSwitch@\HideGrid}

 \noGrids

 \def\bAdjust@@{%
     \setbox\auxLabelbox@=\hbox{\raise \dp\auxLabelbox@
            \box\auxLabelbox@}}
 \def\bAdjust@{\let\vAdjust@\bAdjust@@}

 \def\eAdjust@@{\dimen0=-.5\ht\auxLabelbox@
     \advance\dimen0 by .5\dp\auxLabelbox@
     \setbox\auxLabelbox@=
            \hbox{\raise\dimen0\box\auxLabelbox@}}
 \def\eAdjust@{\let\vAdjust@\eAdjust@@}

 \def\tAdjust@@{%
     \setbox\auxLabelbox@=\hbox{\raise-\ht\auxLabelbox@
            \box\auxLabelbox@}}
 \def\tAdjust@{\let\vAdjust@\tAdjust@@}

 \let\vAdjust@\relax

 \def\lAdjust@{\let\hAdjust@\rlap}
 \def\rAdjust@{\let\hAdjust@\llap}

 \let\hAdjust@\relax\let\vAdjust@\relax

 \def\FetchLabel@#1(#2)#3\\{%
     \IN@0#2@@\ifIN@
        \setbox0=\hbox{\ignorespaces#1#3\unskip}%
        \ifdim\wd0>0pt
           \ms@g{}%
           \ms@g{ !!! Bad label(s)? !!!}%
           \message{ #1(#2)#3}%
        \fi
        \def\LabelMole@##1\endFetchLabel@{%
            \IN@0()\\@##1@%
            \ifIN@\def\Temp@{\FetchLabel@##1\endFetchLabel@}%
            \else\def\Temp@{}%
            \fi
            \Temp@
           }%
     \else
       \ignorespaces#1\unskip
       \setbox\auxLabelbox@=%
         \hbox to 0pt{\hss\ignorespaces\hAdjust@
          {\ignorespaces#3\unskip}\hss}%
       \vAdjust@
       \let\hAdjust@\relax\let\vAdjust@\relax
       \AugmentLabelBox@@{#2}%
       \ht\Labelbox@=0pt\dp\Labelbox@=0pt
       \let\LabelMole@\FetchLabel@%
     \fi\LabelMole@}

 \newtoks\XYSep@ 
 \def\SetXYSeparator#1{%
     \IN@0#1@@\ifIN@\XYSep@{*}%
     \else
     \XYSep@{#1}%
     \fi
     }

 \SetXYSeparator*

 \def\AugmentLabelBox@@#1{%
     \IN@0\the\XYSep@ @#1@\ifIN@
       \SPLIT@0\the\XYSep@ @#1@%
       \setbox\Labelbox@=\hbox to 0pt{%
         \unhbox\Labelbox@
         \Shifted@@{\the\Initialtoks@\Wddd@}%
         {\the\Terminaltoks@\Httt@}%
         {\box\auxLabelbox@}}%
     \else
         \ms@g{}%
         \ms@g{ !!! Bad insertion point. !!!}%
         \message{ (#1\ this point was rejected.)}%
     \fi
    }

 \def\FetchOption@#1[#2]#3\endFetchOption@{%
    \def\temp{#1}
    \ifx\temp\empty
       \Edgect@=#2\relax
       \let\EdgGl@b\relax
       \SetEdges@
       \Cleaner@#3%
    \fi}

 \def\Cleaner@#1[@]{\Labeltoks@{#1}}
     
 \def\PlaceLabels@@{\mathsurround=0pt
     \def\Cr@{\\}%
     \let\L\lAdjust@\let\R\rAdjust@
     \let\B\bAdjust@\let\E\eAdjust@\let\T\tAdjust@
     \expandafter\FetchOption@\the\Labeltoks@[@]\endFetchOption@
     \Wddd@=\Wdd@ \Edgescale@\Wddd@ 
     \Httt@=\Htt@ \Edgescale@\Httt@
     \expandafter\FetchLabel@\the\Labeltoks@\endFetchLabel@
     \box\Labelbox@
     }%

 \let \PlaceLabels@\PlaceLabels@@

 \def\AffixLabels#1{\setbox\Coordinatebox@=\hbox{#1}%
      \Wdd@=\wd\Coordinatebox@ \Htt@=\ht\Coordinatebox@
      \advance\Htt@ \dp\Coordinatebox@
      \hbox{\copy\Coordinatebox@\kern-\Wdd@ 
           \Shifted@@{0pt}{-\dp\Coordinatebox@}%
           {\PlaceLabels@\PlaceGrid@}%
           \kern\Wdd@}%
      \GridSwitch@ 
      \LabGl@b\Labeltoks@{()\\}%
      }
 
   \let\wlog\LabelFigwlog@ld   
   \catcode`\@=\LabelFigCatAt  


 
                                By

              Raymond S\'eroul <A18645@FRCCSC21.BITNET>
                                and 
              Laurent Siebenmann <lcs@topo.math.u-psud.fr>
    
              VERSIONS: July 1991, Oct 1991, Jan 1992, July 1992

INTRODUCTION

      This labelling package is intended for TeX users who
rely on non-TeX sources for for their graphics inserts.  It
provides means for adding TeX labels to such inserts with a
minimum of fuss. 

       For most labels, TeX users have in the past found it
reasonably convenient to rely on non-TeX sources. Typical
occasions when an inescapable need for TeX labels seemed to
arise are

 (a) when the graphics program lacks certain exotic or complex
mathematical symbols

 (b) when the very highest typographical quality is wanted for the
labels

 (c) when labels included with the graphics fail to print, 
 and you cannot figure out why (cf. boxedeps.doc).  The labels
 provided by labelfig.tex are 100

       Since this package first appeared, many users, who in the
past scarcely dreamed of using TeX labels, have come to use
nothing but.  So it is now appropriate to add

Intoxication Warning:  TeX labels may be addictive and expensive. 

     If you have a fast preview you may disagree, and even find
that this package provides an agreeable paste-up environment; see
extra applications at end.

     Note to publishers: It is possible and convenient to ultimately
export the TeX labels produced by labelfig.tex to become an integral
part of the EPS file. This is often desired by a publisher who typically
uses an "upmarket" graphics or page layout program, with which the
staff is skilled in perfecting figures.  See Appendix I for
a recipe.

     The authors are grateful to Patrick Ion of Math Reviews for
helpful comments and encouragement.

BASIC INSTRUCTIONS

    After reading in the macro file using

preview or proof your figure with a coordinate grid printed on
top, by typing the following:

    \ShowGrid  
    \AffixLabels{<the graphics insertion>}

Here <the graphics insertion> is what you would type to insert
the graphics object alone without the grid.  This must provide
for the space around it. For example <the graphics insertion>
might well be \BoxedEPSF{MyFigure scaled 700} using the
boxedeps.tex macro package (from same source); this provides a
TeX box containing the encapsulated PostScript insert specified by
the file MyFigure. \AffixLabels{...} provides the grid (supposing
\ShowGrid is present) and later, once you have specified labels
using the grid, it will "tack on" the labels.

     The grid is a sort of (usually elongated) checkerboard of
ten rows and ten columns and its (internal) partitions are by
default numbered  .1, ... ,.9  both horizontally (X-coordinate
running left to right) and vertically (Y-coordinate running bottom
to top).  Thus the points enclosed by the grid correspond to the
points of the unit square in the cartesian "X-Y" plane, the lower
left corner corresponding to the origin (0,0).  By extrapolation,
the full page corresponds to a larger rectangle in the plane.

     These coordinates serve to position labels as follows.
Before the \AffixLabels{...} command type label specifications:

  \SetLabels
   (<X-coordinate>*<Y-coordinate>) <first label> \\
   .
   .
   .
   (<X-coordinate>*<Y-coordinate>)  <last label> \\
  \endSetLabels

Each row specifies one label and is terminated by \\.  In each
row, the position indicator comes first; it is written as a
standard cartesian point except that the X- and Y- coordinates
are separated by * rather than a comma because TeX allows a
comma as decimal point. There are no dimension units to specify
as the unit is the grid itself.

     By default, this cartesian point specifies where the middle
of the baseline of the label will be located.  However if you precede
the point by \L [or \R] the left [or right] edge of the baseline will
be located there. Similarly you may also precede the point by \T, \E,
or \B to vertically align the top equator or bottom of the label box
at the specified point.  This gives nine standard positions of
the label with respect to the insertion point --- corresponding to
the eight principle points of the compas and the center

                     \L\T     \T      \R\T

                     \L\E     \E      \R\E

                     \L\B     \B      \R\B

But this neglects the default "baseline" level of TeX,
giving potentially three more positions

                     \L    <no tag>   \R

For text, the baseline level is often the preferred. Its relation to
the others is variable. It will often coincide with the bottom level,
as happens for "X".  But it is often distinct, as for "g", in which
case you have in all 12 distinct positions rather than 9.

     It is convenient to think of this specification of label
position as attaching the label by a thumb-tack to the coordinate
grid. There are up to twelve positions of the thumb-tack on the
label, while the position of the thumb-tack on the coordinate grid is
arbitrary.  Normally, one choses the position of the thumb-tack on
the label to be the one that is the closest to the item being
labeled.  There are good reasons for this "rule of thumb":

   (a)  It facilitates correct positioning at first try.

   (b)  If the scale of the figure must be altered after labels
have been affixed, the labels have a good chance of remaining well
positioned.

   (c)  The visible grid need not extend beyond the "bounding box"
for the figure, because the best preferred position is always
(at least almost) within the bounding box .

The second reason is particularly important. Indeed it often
happens that scale has to be altered after labelling begins, in
order to either provide space for the labels, or to adjust
proportions between the labels and the figure.  (The size of labels
is unaffected by scaling.)

     Here is an artificial but self-contained test which uses
TeX rules to make a graphics object.

TEST

    Do not skip this!



 \def\FrameIt#1{\hbox{\vrule$\vcenter {\hrule\kern3pt%
             \hbox {\kern3pt #1\kern3pt}%
               \kern3pt\hrule}$\relax\vrule}}

 \def\Caption#1#2{\FrameIt{%
       \vtop {\hsize=#1\relax \parindent=0pt
         \leftskip=0pt \rightskip=0pt plus15pt
         \parfillskip=0pt
         \lineskip=1pt\baselineskip=0pt
         #2}}}

 \def\FirstQuadrant{\hbox to 100pt{\vrule\vbox to 100pt{%
        \hbox to 100pt{\hfil}\vfil\hrule}\hss}}


  \SetLabels
    \R(.5*.2) $\zeta\,\cdot$\\
    (.9*-.10) $\xi$\\
    \R(-.03*.9) $\eta$\\
    \T(.5*.9) \Caption{70pt}{%
          \it The norm of
          $g(\xi+i\eta)$ is indicated on
          contours of this invisible surface.}\\
  \endSetLabels

  \AffixLabels{\FirstQuadrant}

  \end

  Note that the coordinates to use for labels are indicated on the
edges of the grid (when visible) corresponding to the conventional
x- and y- axes of the Cartesian plane. By default the grid is
1-by-1. However, by the command \Edges{100}, you can change this
to 100-by-100 and many users find this alternative most
convenient. Place the command \Edges{...} in your style file (or
header) since its effect is is global. Other possible edge values
are 10 and 1000.

  If you use the command \Edges{...} at all, do so with care.  For
if you accidentally delete an \Edges{...} command your labels will
abruptly be badly misplaced and may logically but mysteriously
generate "dimension too big" errors under TeX and "off page" errors
under your driver.  

  You can dictate the edgescale for an individual figure by giving
the scale in brackets immediately after \SetLabels.  Thus, to
import into an article using say \Edge{100} a figure labelled using
another edgescale, say the original 1-by-1 default, you can use
\SetLabels[1]...\endSetLabels.


GETTING IT DOWN PAT

     Complicated labeling deserves the same respect as
complicated mathematics.  Do not expect it to come out perfect the
first time!  What is needed in either case is a mechanism to
repeatedly typeset troublesome pieces.

     One mechanism is always available.  One does complicated
labelling in a separate "test" file involving just the figure being
labelled;  a texpert will know how to \dump TeX's current state as
a temporary format that restarts rapidly at each retry.  Usually,
one then pastes the completed labelled figure back into the main
TeX file, but, of course, one can also \input it as an auxiliary
file.

     If you do not have a TeXpert at handy, here is a first
approximation to an efficient setup. By deletions reduce a copy
of your article to just a few lines before and after the figure.
Now label the figure, and finally, copy and paste the labelled
figure to the original article. Then copy the next figure to label
into this testbed and repeat. The TeXpert can improve the  speed
at which TeX starts up, by compiling a format specifically for
your article; just one caution: best NOT include in the format
ephemeral details of setup like \Set<mydriver>ArtSpecials (from
boxedeps.tex because this reads  figure dimensions which you may
change during your work session.

     An improved mechanism to repeatedly typeset troublesome
pieces is now available on the Macintosh; it is called LinoTeX;
see the same ftp sources.  It could be set up on many types
of computer.

     Before using labelfig.tex to attach labels to a graphics
object inserted using boxedeps.tex or BoxedArt.tex, make it a
firm rule to carefully adjust the bounding box using the trimming
commands of these packages, and also at least tentatively scale
and position the object. Beware of changing the grid inadvertently
after the labels have been positioned.  For example, correcting
the bounding box of a PostScript graphics object can foul up the
labels by changing the coordinate grid to which the labels are
attached. This is particularly true for the trimming  commands of
boxedeps.tex and BoxedArt.tex. However, as noted already, change
of scale is much less disruptive, and modest adjustments should be
well tolerated.

     Sometimes the labels protrude so far from the bounding box
of a figure that the figure has to be repositioned.  Best do this
by ad hoc spacing, say using \hglue and \vglue; altering the
bounding box would create a vicious circle.

     Remember that you are responsible for preventing labels
from overlapping. You are responsible for all label typography
including size and style. A label is really just about anything
that can be put in a TeX box. Note that spaces at the beginning
and end of labels will normally be suppressed; if you really want
them you must protect them with TeX braces.

     This package temporarily sets the \mathsurround parameter
of TeX to zero  while the labels are being affixed. This is done
because nonzero \mathsurround space would influence the position
of left and right aligned labels; then, when a texpert or printer
modifies mathsurround, diagram labeling might be disastrously
altered. There is a small price to pay involving labels that are
formatted as caption boxes including mathematics: you  may want or
need to specify an explicit mathsurround space within the caption
box; it will not influence anything outside.

     Those hostile to the use of * as separator between
the X and Y coordinates of label insertion points, are free to
impose another using \SetXYSeparator{<the new separator>}.  
Americans may prefer "," to "*" since they never use a 
comma as a decimal point; on the other hand, * may be more visible.

APPENDIX (I)  MERGING labelfig.tex LABELS INTO AN EPSF GRAPHICS OBJECT.

     As promised in the introduction, here is a recipe useful for
publishers. It works at least on Macintosh and at least for vectorized
graphics and Adobe type1 fonts.  (There is surely a similar recipe for
PCs under MSWindows.)

 (a)  Use boxedeps.tex utility to integrate the figure given by the eps
file, "x.eps" say, with a visible frame around it.  See
\ShowDisplacementBoxes command in boxedeps.tex.  To get precise results
automatically it is important to use the \Trim... commands of
boxedeps.tex making the "DisplacementBox" neatly fit the figure.

 (b)  Use the TeX printer driver and LaserWriter (versions >= 8.1.1) to
export to an EPSF the DVI page containing the integrated, labelled
figure. You now have an EPS file  "xx.eps"  that contains too much, and at
the wrong scale, and at wrong position.

 (c)  Convert the EPSF to an Adode Illustrator format EPSF using
the shareware utility called epsConvert by Sam Weiss
1993-- (currently $25).

 (d)  In Illustrator (or a compatible program), group the labels and the
"DisplacementBox"; copy them to the clipboard and paste them into "x.ps".
This step requires that all the label fonts be "visible to the Macintosh.

 (e)  Translate and scale the pasted group consisting of the labels plus
the "DisplacementBox" so as to make the "DisplacementBox" the bounding
box of (labelless) figure represented by "x.eps".  At this point the
labels will be correctly placed on the figure "x.eps".

 (f)  Ungroup and delete the "DisplacementBox".  The result is the
desired single EPS file, "x+.eps" say, It contains the original figure
plus its labels.  

     Using grouping and ungrouping appropriately in "x+.eps", a
publisher's staff can very efficiently improve label positions etc.

APPENDIX II)  SOME EXOTIC APPLICATIONS

     The grid of labelfig.tex is analogous to a light-table in
classical page makeup with wax or latex glue.  In principle, you
can use it to compose any page from its indivisible parts.  This
even has some of the artisanal charm of classical paste-up
provided you have a fast screen preview to make the process
"interactive".

     In practice labelfig.tex is a tool for nonstandard jobs.
Here are a few going beyond the labelling already discussed.

(I)  GRAPHICS INTEGRATION.

     This is accomplished by treating the imported graphics
objects as labels.  The underlying graphics object is then
typically an empty  \vbox to <dimension>{\vfill} in a TeX
\midinsert...\endinsert construction.  A label line
might be of the form

   (.1*.1) \special{... MyFigure ...}\\

The exact form of the special command varies from driver to
driver.  However, in the case of encapsulated PostScript graphics
(EPSF norm), by relying on boxedeps.tex, one can have the
following standard syntax (independant of driver  (see
boxedeps.doc for details.
  
  (.1*.1) \BoxedEPSF{MyFigure scaled <scale in mils>}\\

This may be slow since it requires TeX to read the PostScript
file to read bounding box using many complex macros.  So you
may want to try

  (.1*.1) \EPSFSpecial{MyFigure}{<scale in mils>}\\

which is fast and driver independant, but it squashes the
bounding box, normally to its lower left corner.

     Similarly for graphics of the Macintosh PICT norm ---
using BoxedArt.tex (same sources) in place of boxedeps.tex.

     This approach to integration is to be recommended when
one is assembling a composite graphics object.

 (II)  COMMUTATIVE DIAGRAM ENHANCEMENT

     Commutative diagrams or arrays of mathematical objects
connected by arrows of various sorts are common in mathematics.
The mathematical objects require the use of TeX.  Recently TeX
acquired a good collection of arrows of all slopes --- that of
LamSTeX --- plus pwerful macros to build the diagrams.

     However, even the LamSTeX collection is often
inadequate; it lacks for example double shafted arrows, dotted
arrows and curved arrows. Fortunately it is possible to produce
such arrows on an individual basis using sophisticated graphics
programs such as Illustrator and AldusFreehand (both serving
the EPSF norm) or using Metafont (with its public domain norm).
Since the creation of each new arrow is a work of love, you
probably want to limit the number of arrows by using LamSTeX
for most arrows. The 40K commutative diagram module of LamSTeX
has been adapted to work with AmSTeX and a copy may be posted
with LabelFig and related files. Unfortunately no one has yet
offered a version that works with Plain TeX or LaTeX.

       Suffice it here to say that when the exotic arrow has
been somehow imported into TeX, labelfig.tex treats it as a
label that one affixes to the commutative diagram.  Two other
steps will be treated in separate notes, namely the matter of
extracting the dimension specifications for the arrow and the
construction of the arrow --- for these steps are far from
unique and often depend intimately on your computer environment. 
Notes for the Macintosh-Textures-Illustrator combination are
found in the file ExoticArrows.doc.

 (III) NESTING 

Ingenuity pays off in exploiting labelfig.tex. One can
mix graphics and typography quite freely.  labelfig.tex is good
for freeform or overlapping arrangements, while boxedeps.tex (or
BoxedArt.tex) is best for regimented non-overlapping
arrangements --- and the two can be combined.

     The default behavior of labelfig.tex is not ideal 
for nesting objects, because to prevent trouble for beginners
the register for labels is globally cleared when \AffixLabels
concludes.  But there are switches available

      \LabelsGlobal      \LabelsLocal

which change this.  To understand this, extend the above test 
by something like:


 \LabelsLocal

 \SetLabels
    (.5*.5) AAA\\
 \endSetLabels

 {
 \SetLabels
    (.5*.5) ZZZ\\
 \endSetLabels
   \AffixLabels{\FirstQuadrant}
 }

   \AffixLabels{\FirstQuadrant}


     There are however potential pitfalls.  Neither
labelfig.tex nor boxedeps.tex has been tested under extreme
conditions. Problems may occur if their procedures are
indiscriminately nested. For boxedeps.tex (not labelfig.tex)
there is a precise cause for worry, namely many of its
variables are "global", which means that TeX braces will not
provide the protection one might expect.

COMMAND SUMMARY FOR labelfig.tex

  Here [...] means optional (one or zero)
       [...]* means any number of such constructs

  \SetLabels
    [[<P>](<X><Sep><Y>) <label> \\]*
  \endSetLabels
  \ShowGrid  
  \AffixLabels{<the figure>}

   --- <P> is tack position, one of eleven or empty
              order irrelevant

                   \L\T      \T      \R\T

                   \L\E      \E      \R\E

                     \L               \R

                   \L\B      \B      \R\B

   --- (<X><Sep><Y>) insertion point;
  <Sep> is separator, = * by default;
  \SetXYSeparator{<Sep>} changes it.
   <X> and <Y> are real numbers

  --- <label> a label to attach 

  --- <the figure> the figure to label 

  \GlobalLabels (default)     
  \LocalLabels  setting for nested constructs.

 \Grids makes ALL grids appear; \HideGrid then makes just next disappear.
 \noGrids returns to default.  The commands are always global.

 \GridLineWidth{<dimension>} adjusts width of grid lines. Default is very
small, to give "hairline" effect. If your grid lines are missing try
setting \GridLineWidth{1pt}.

 \Edges#1 globally changes the edge size of all grids to the numerical 
value #1, which must be 1, 10, 100, or 1000.  The default is 1.

VERSION HISTORY.
 --- Jan 1993: \Edges#1 and [??] option after \SetLabels
 --- July 1992: \Grids, \noGrids, \HideGrid;
       Gridlines become hairlines; \GridLineWidth{<dimension>}.
 --- Oct 1991, Jan 1992: \SetXYSeparator{<Sep>},  \LabelsGlobal,
       \LabelsLocal.
 --- July 1991: first release

Address for bugs and other feedback:

        Raymond S\'eroul
        IREM and Lab. de Typographie Informatise
        Univ. Rene Descartes
        Strasbourg

    Tel 33-88-41-63-45
    Email:  A18645@FRCCSC21.BITNET

        Laurent Siebenmann
        Mathematique, Bat. 425,
        Univ de Paris-Sud,
        91405-Orsay,
        France

    Tel 33-1-6941-7949; 
    Email: lcs@topo.math.u-psud.fr

\def\scalefig#1{\epsfxsize #1\textwidth}

\newcommand {\Ebb}{{\mathbb{E}}}

\newtheorem{theorem}{Theorem}

\newtheorem{lemma}{Lemma}

\newtheorem{corollary}{Corollary}

\setcounter{footnote}{1}


\title{\LARGE {Randomly-Directional Beamforming in Millimeter-Wave Multi-User  MISO Downlink}}

\author{
Gilwon Lee, {\em Student~Member, IEEE},  Youngchul
Sung$^\dagger$\thanks{$^\dagger$Corresponding author}, {\em
Senior~Member, IEEE}, and Junyeong Seo, {\em Student~Member, IEEE} \\
\thanks{The authors are with Dept. of Electrical Engineering,  KAIST, Daejeon 305-701, South
Korea. E-mail:\{gwlee@, ysung@ee., and jyseo@\}kaist.ac.kr.
This research was  supported by Basic Science Research Program through the National Research Foundation of Korea (NRF) funded by the Ministry of Education (2013R1A1A2A10060852). This research was supported by ’The Cross-Ministry Giga KOREA Project’ of The Ministry of Science, ICT and Future Planning, Korea. [GK14N0100, 5G mobile communication system development based on mmWave].}
}

\markboth{\protect\footnotesize Submitted to {\it IEEE Transactions on
Wireless Communications}, }{Lee, Seo, and Sung}

\begin{document}

\maketitle

\begin{abstract}
In this paper, randomly-directional beamforming (RDB) is
considered for millimeter-wave (mm-wave) multi-user (MU)
multiple-input single-output (MISO) downlink systems. By using
asymptotic techniques, the performance of RDB and the MU gain in
mm-wave MISO are analyzed based on the uniform random
line-of-sight (UR-LoS) channel model suitable for highly
directional  mm-wave radio propagation channels. It is shown that
there exists a transition point on the number of users relative to
the number of antenna elements for non-trivial performance of the
RDB scheme, and furthermore sum rate scaling arbitrarily close to linear
scaling with respect to the number of antenna elements can be
achieved under the UR-LoS channel model by opportunistic random beamforming with proper user scheduling if the number of users
increases linearly with respect to the number of antenna elements.
The provided results yield insights into the most effective
beamforming and scheduling choices for mm-wave MU-MISO in various
operating conditions. Simulation results validate our analysis
based on asymptotic techniques for finite
 cases.
\end{abstract}

\begin{keywords}
Millimeter-Wave, Multi-User MIMO,  Massive MIMO,
 Opportunistic Random Beamforming, Randomly-Directional Beamforming
\end{keywords}

\section{Introduction}

{\em Motivation:} ~Recently, mm-wave multiple-input multiple-output (MIMO) operating in the band of
30-300GHz is considered as a promising technology to attain high
data rates for 5G wireless communications. Radio propagation in
the mm-wave band has several intrinsic properties; the propagation
in the mm-wave band is highly directional  with large path loss
and very few multi-paths. To compensate for the large path loss in
the mm-wave band, highly directional beamforming is required based
on large antenna arrays which can easily be implemented in the
mm-wave band due to small wavelength. To perform highly
directional downlink beamforming to a user in the cell, accurate
channel state information (CSI) is required at the base station
(BS). However, the channel is sparse in the arrival angle domain
and downlink channel estimation is difficult
\cite{Alkhateeb&Ayach&Leus&Heath:14JSTSP,Bajwa&Haupt&Sayeed&Nowak:10IEEE,Seo&Sung&Lee&Kim:15SPAWC}.
That is, it is difficult to identify the sparse propagation angle
and gain between the BS and an arbitrary receiver in the cell, and
identifying the sparse channel in the angle domain requires
sophisticated algorithms and heavy training overhead
\cite{Alkhateeb&Ayach&Leus&Heath:13ITA,Alkhateeb&Ayach&Leus&Heath:14JSTSP,Bajwa&Haupt&Sayeed&Nowak:10IEEE,Seo&Sung&Lee&Kim:15SPAWC}.
However, the focus of the existing channel estimation methods is
single-user mm-wave MIMO systems which do not have MU diversity.
Suppose directional downlink beamforming with a large uniform
linear array (ULA) of antenna elements at the BS. Although the
downlink beam is highly directional, it still has some beam width
because the number of antenna elements is finite in practice.
Thus, one might ask what happens if there are many users in the
cell and the BS just selects the transmission beam direction
randomly in the angle domain and looks for a receiver that happens
to be in the beam width of the selected beam of the BS. Of course,
if there exists only a single receiver in the cell, such {\em
randomly-directional beamforming (RDB)} with a narrow beam width
will not perform well because it will miss the receiver in most
cases. However,  if there exist more than one receivers randomly
located in the cell, the RDB scheme may perform reasonably well
with a sufficient number
 of users in the cell.
Then, a natural question is ``how many users in the cell are
enough for reasonable performance of such simple RDB and RDB with
multiple beams in the mm-wave band?'' In this paper, we
investigate the performance of RDB and the associated MU gain in
the mm-wave band to answer the above question.

{\em Channel model for mm-wave MIMO systems :} ~ Since the
performance of RDB depends on the channel model, answering the
above question should be based on a meaningful channel model. In
conventional lower band  MIMO communication, many MU gain analyses
were performed with the assumption of rich scattering, i.e.,
mostly under the independent and identically distributed (i.i.d.)
Rayleigh fading channel model or its variants such as correlated
fading or one-ring channel model
\cite{Sharif&Hassibi:05IT,Yoo&Goldsmith:06JSAC,TomasoniEtAl:09ISIT,AlNaffouri&Sharif&Hassibi:09COM,Hur&Tulino&Caire:12IT,Marzetta:10WC,Nam&Adhikary&Ahn&Caire:14JSTSP,Lee&Sung:14ITsub,Chung&Hwang&Kim&Kim:03JSAC,Viswanath&Tse&Laroia:02IT}.
However, the propagation in the mm-wave band is quite different
from that in the lower band; propagation in the mm-wave band is
highly directional and there are very few multi-paths in
propagation channels
\cite{Alkhateeb&Ayach&Leus&Heath:14JSTSP,Rappaport&BenDor&Murdock&Qiao:12ICC,Seo&Sung&Lee&Kim:15SPAWC,Alkhateeb&Ayach&Leus&Heath:13ITA}.
To model wireless channels in the mm-wave band,  {\em the UR-LoS channel model} was proposed in
\cite{Sayeed&Brady:13GC,Ngo&Larsson&Marzetta:14EUSIPCO}. The
UR-LoS channel model well captures the highly directional
propagation in the mm-wave band and is still analytically
tractable \cite{Sayeed&Brady:13GC,Ngo&Larsson&Marzetta:14EUSIPCO}.
Under the UR-LoS channel model,  the channel vector of each user
in the cell has a single LoS path component with a random
direction (or angle) and a random path gain. Since there is only
one  path in each user's channel under the UR-LoS channel model,
the UR-LoS channel model is a simplified channel model capturing
LoS propagation environments. To gain insights into random
beamforming in the mm-wave band and make performance analysis
tractable, we adopt the UR-LoS channel model in this paper even
though the actual channel may lie somewhere between the UR-LoS
channel model and the i.i.d. Rayleigh\footnote{Note that the
i.i.d. Rayleigh fading channel model for large antenna arrays is a
simplified model too. It is highly unlikely that each element of
the channel vector is i.i.d. when the channel vector size is very
large as in massive MIMO.} fading channel model.

{\em Summary of Results:} ~The MU gain under rich
scattering environments has been investigated extensively during
the last decade
\cite{Sharif&Hassibi:05IT,Yoo&Goldsmith:06JSAC,TomasoniEtAl:09ISIT,AlNaffouri&Sharif&Hassibi:09COM,Hur&Tulino&Caire:12IT,Marzetta:10WC,Nam&Adhikary&Ahn&Caire:14JSTSP,Lee&Sung:14ITsub,Chung&Hwang&Kim&Kim:03JSAC,Viswanath&Tse&Laroia:02IT}. However,  not much work has been done yet regarding  the MU gain
 in mm-wave MU-MISO/MIMO systems.
Recently, in \cite{Ngo&Larsson&Marzetta:14EUSIPCO}, Ngo {\it et al.}
 simplified the UR-LoS channel model as an urn-and-ball model  and numerically showed that
user scheduling can improve the worst-user performance. This work
provides an intuitive and insightful observation regarding the MU
gain in mm-wave MU-MISO, but the urn-and-ball channel model seems a
bit oversimplified compared to the UR-LoS channel model since the
urn-and-ball model does not consider non-orthogonal regions of
UR-LoS. (See Fig. \ref{fig:correlation}.)
In this paper, we rigorously analyze the RDB scheme, the associated
MU gain, and user scheduling  in mm-wave MU-MISO in an asymptotic regime in which the number of antenna elements tends to infinity, under the UR-LoS
channel model and the assumption
 of a ULA at the BS, and provide guidelines
for optimal operation in highly directional mm-wave MU-MISO
systems. The results of this paper are summarized in the below.

1) When $K=c_u M^q$ with $q\in(\frac{1}{2},1)$, where $K$ is
the number of users in the cell, $M$ is the number of antenna
elements, $q$ is the fraction order of $M$ for $K$, and $c_u$ is
some positive constant, the simple RDB scheme (in which the BS
transmits only one random beam, selects the user with the maximum
received signal power, and transmits to the selected user)
achieves $2q-1$ fraction of the rate performance with the
knowledge of perfect CSI as $M\to\infty$.  On the other hand, if
$K=c_u M^q$ with $q\in(0, \frac{1}{2})$, the simple RDB rate
converges to zero as $M\rightarrow \infty$. Hence, $K=c_u
\sqrt{M}$ is the transition point for the two distinct behaviors
of the RDB scheme.

2) When the BS sequentially transmits $S=c_b M^\ell$ beams
equi-spaced in the normalized angle domain with a uniform random
offset, selects the best beam among the $S$ beams that has the
maximum received power reported among all beams and all users, and
transmits  data with the best beam to the best user, this
multi-beam single-user RDB scheme achieves $2(q+\ell)-1$ fraction
of the optimal rate with perfect beamforming with perfect CSI as
$M\to \infty$, for $K=c_u M^q, S=c_b M^{\ell}$ ($q,\ell \in
(0,1)$), if $q+\ell \in (\frac{1}{2},1)$.

3) In the case of multi-beam and multiple-user selection RDB with
the UR-LoS channel model, sum rate scaling arbitrarily close to
linear scaling with respect to (w.r.t.) the number of antenna
elements can be achieved  by RDB with proper user scheduling. This
result is contrary to the existing result in rich scattering
environments that opportunistic random beamforming with user
selection does not provide a gain in the regime of a large number
of antennas under rich scattering environments
\cite{Sharif&Hassibi:05IT,TomasoniEtAl:09ISIT,Hur&Tulino&Caire:12IT}.

4) Combining the above results, we suggest optimal operation
for random beamforming in highly-directional mm-wave MISO
depending on the antenna array size and the number of users in the
cell, based on a newly defined metric named the fractional rate
order (FRO).

{\em Notations and Organization:}  ~~ Vectors and matrices are written
in boldface with matrices in capitals. For a matrix $\Abf$,
$\Abf^T$, $\Abf^H$, and $\mbox{tr}(\Abf)$ indicate the transpose,
conjugate  transpose, and trace  of $\Abf$, respectively. $\Ibf_n$
stands for the identity matrix of size $n$. (The subscript will be
omitted if unnecessary.)   The notation $\xbf\sim
\Cc\Nc(\mubf,\Sigmabf)$ means that $\xbf$ is complex Gaussian
distributed with mean vector $\mubf$ and covariance matrix
$\Sigmabf$, and $\theta \sim \mathrm{Unif}[a,b]$ means that
$\theta$ is uniformly distributed over the range $[a,b]$.
$\Ebb[\cdot]$ denotes the expectation. $|\Sc|$ denotes the
cardinality of $\Sc$. $\iota:=\sqrt{-1}$ and ${\mathbb{Z}}$ is the
set of integers. $a\uparrow b$ indicates that $a$ converges
to $b$ from the below.

The remainder of this paper is organized as follows. In Section
\ref{sec:systemmodel}, the system model and preliminaries are
described. In Section \ref{sec:randomdirecbeam}, the considered
RDB scheme is explained.  The asymptotic performance is analyzed
for the single beam case in Section \ref{sec:singlerandombeam} and
for the multiple beam case with single user selection or multiple
user selection in Section \ref{sec:multibeam}. Numerical results
are provided in Section \ref{sec:NumericalResult}, followed by
conclusions in Section \ref{sec:conclusion}.

\section{System Model and Preliminaries}
\label{sec:systemmodel}

We consider a single-cell mm-wave MU-MISO downlink system in which
a BS equipped with an  ULA of $M$ transmit antennas communicates
with $K$ single-antenna users. The received signal at user $k$ is
then given by
\begin{equation} \label{eq:received_signal}
y_k = \hbf_k^H\xbf + n_k, ~~~k=1,2,\cdots,K,
\end{equation}
where  $\hbf_k=[h_{k,1},h_{k,2},\cdots,h_{k,M}]^T$ is the channel
vector of user $k$, $\xbf$ is the transmitted signal vector
subject to a power constraint
$\mathrm{tr}(\mathbb{E}\{\xbf\xbf^H\}) \le P_t$, and $n_k \sim
\Cc\Nc(0,1)$  is the additive noise  at user $k$.

\subsection{Channel Model}
\label{subsec:ChannelModel}

For a typical mm-wave channel, there exist very few multipaths due
to the highly directional and quasi-optical nature of
electromagnetic wave propagation in the mm-wave band. In general,
a mm-wave channel is composed of a line-of-sight (LoS) propagation
component and a set of few single-bounce multipath components, and hence the mm-wave
channel for ULA systems can be modeled as
\cite{Rappaport&BenDor&Murdock&Qiao:12ICC}
\begin{equation}\label{eq:channel_vector_with_NLOS}
\hbf_k = \alpha_k \sqrt{M} \abf(\theta_k) + \sum_i\alpha_{k,i}
\sqrt{M}\abf(\theta_{k,i}), ~~ \text{for} ~~ k=1,\cdots,K,
\end{equation}
where $\alpha_k$ and $\theta_k$ are the complex gain and
 normalized direction of the LoS path for user $k$, $\{\alpha_{k,i}\}$ and
$\{\theta_{k,i}\}$ represent the complex gains and normalized directions of non-LoS (NLoS)
paths for user $k$, and $\abf(\theta)$ is the array steering vector given by
\begin{equation} \label{eq:array_steering_vec}
\abf(\theta) = \frac{1}{\sqrt{M}}[1,e^{-\iota \pi \theta},\cdots,e^{-\iota \pi(M-1)\theta}]^T.
\end{equation}
Here, the normalized direction $\theta$ is connected with the
physical angle of departure $\phi \in [-\pi/2,\pi/2]$ as $\theta =
\frac{2d\sin(\phi)}{\lambda}$, where $d$ and $\lambda$ are the
distance between two adjacent antenna elements and the carrier
wavelength, respectively. We assume the critically-sampled
environment, i.e., $\frac{d}{\lambda}=\frac{1}{2}$ in this paper.
Note that the array steering vector in
\eqref{eq:array_steering_vec} has unit norm and thus the
normalization factor $\sqrt{M}$ is included in
\eqref{eq:channel_vector_with_NLOS}.

For mm-wave channels with LoS links, the effect of NLoS
links is marginal since the path
loss of NLoS components is much larger than that of the LoS component;
 the power $|\alpha_{k,i}|^2$ associated with
NLoS paths is typically $20$dB weaker than the LoS component $|\alpha_{k}|^2$
\cite{Rappaport&BenDor&Murdock&Qiao:12ICC}. Hence,
we neglect the  NLoS components and consider the
LoS component only here, i.e., $\alpha_{k,i} = 0$ for $\forall i$
\cite{Sayeed&Brady:13GC,Bai&Desai&Heath:14ICNC}. We assume  that the LoS link gain is Gaussian-distributed, i.e., $\alpha_k \overset{\text{i.i.d.}}{\sim} \Cc\Nc(0,1)$  and that
 the normalized direction $\theta_k$ for each user $k$ is independent and identically
distributed (i.i.d.) with  $\theta_k \stackrel{i.i.d.}{\sim}\mathrm{Unif}[-1,1]$.
From the above assumptions, the mm-wave channel model \eqref{eq:channel_vector_with_NLOS} can be re-written as
\begin{equation} \label{eq:ChModelURLOS}
\hbf_k = \alpha_k \sqrt{M} \abf(\theta_k), ~~\text{for} ~~
k=1,\cdots,K.
\end{equation}
This channel model is the UR-LoS model considered in \cite{Ngo&Larsson&Marzetta:14EUSIPCO,Sayeed&Brady:13GC}. In this paper, we also adopt this channel model. Note that the power of  the UR-LoS channel model \eqref{eq:ChModelURLOS} is given by ${\mathbb{E}}\{||\hbf_k||^2\}=M$. Thus, the channel power linearly increases w.r.t. $M$ as in the i.i.d. Rayleigh channel model $\hbf_k \sim \Cc\Nc({\mathbf{0}},\Ibf)$. This means that the power radiated in the space is collected by the receiver antennas.

\subsection{Review of Opportunistic Random Beamforming in Rich Scattering Environments}

Before introducing the considered RDB  for large mm-wave MIMO
systems with the UR-LoS channel model, we briefly review the
random (orthogonal) beamforming (RBF) scheme in
\cite{Sharif&Hassibi:05IT} devised  for rich scattering
environments under which each element $h_{k,j}$ in the channel
vector $\hbf_k$ has an i.i.d. Rayleigh fading:
\begin{equation}  \label{eq:richscatteringgkj}
h_{k,j} \stackrel{i.i.d.}{\sim} \Cc\Nc(0,\sigma_h^2) ~\text{for}
~j=1,\cdots,M.
\end{equation}
 In the RBF scheme, the BS
constructs a set of $S$ random orthonormal beam vectors $\{\ubf_1,\cdots,\ubf_S\}$
and transmits each beam sequentially to the $K$ users in the cell
during the training period. Then, each user $k$
computes the signal-to-interference-plus-noise ratio (SINR) for each beam direction at the end of the training period, given by $\mathrm{SINR}_{k,i} = \frac{\frac{P_t}{S}|\hbf_k^H\ubf_i|^2}{
1+\frac{P_t}{S}\sum_{j\neq i}|\hbf_k^H\ubf_j|^2}$ for $i=1,\cdots,S$.
After the training period, each user $k$ feeds back its maximum
SINR value, i.e., $\max_{1\le i\le S} \mathrm{SINR}_{k,i}$, and
the beam index $i$ at which the SINR is maximum. Then, after the
feedback the BS assigns each beam $i$ to the user $k^\prime(i)$
with the highest SINR for  beam $i$, i.e., $k^\prime(i) =
\mathop{\arg\max}_{1\le k \le K}\mathrm{SINR}_{k,i}$, and transmits $S$ data streams to the selected $S$ users. In
\cite{Sharif&Hassibi:05IT}, Sharif and Hassibi derived several
scaling laws of this RBF scheme in the case of $S=M$ with the {\em
small-scale}\footnote{In small-scale MIMO systems, $M$ is small
and $K$ is relatively large. Hence, the authors of
\cite{Sharif&Hassibi:05IT} focused on the asymptotic scenario in
which $K$ grows to infinity with fixed $M$ or  $M$ growing much
slower than $K$. Note that $K=\Theta(e^M)$ for $K$ as a function
of $M$ for the scaling of $M=\Theta(\log K)$ considered in
\cite{Sharif&Hassibi:05IT}.} MIMO in mind, i.e., $M \ll K$, as $K
\rightarrow \infty$. Specifically, they showed
\begin{equation}
\Rc_{RBF} \sim_K \left\{
\begin{array}{lll}
M\log \log K, &\text{as}~ K \rightarrow \infty, &\text{for fixed}~ M, \\
cM,           &\text{as}~ K \rightarrow \infty, &\text{for} ~ M=O(\log K),
\end{array}
\right.
\end{equation}
where $\Rc_{RBF}=\mathbb{E}\left[\sum_{i=1}^M\log\left(1+\max_{1\le k \le K} \mathrm{SINR}_{k,i}\right)\right]$
and $c$ is a positive constant. (Here, $x\sim_K y$ indicates that $\lim_{K\to \infty}x/y=1$.)
 Furthermore, they showed that \cite{Sharif&Hassibi:05IT}
\begin{equation}\label{eq:RBF_sumrate}
\lim_{K \to \infty}\frac{\Rc_{RBF}}{M} = 0,
\end{equation}
if $\lim_{K \to \infty} \frac{M}{\log K} = \infty$ (here, $\lim_{K \to \infty} \frac{M}{\log K} = \infty$
 is equivalent to $\lim_{K \to \infty} \frac{\log K}{M} = 0$).
The above scaling laws state that the sum rate of the RBF scheme
maintains linear scaling w.r.t.  the number $M$ of transmit
antennas when $M$ grows no faster than $\log K$ as $K \rightarrow
\infty$, but this linear scaling with $M$ is not achieved when $M$
grows faster than $\log K$ as $K \rightarrow \infty$. That is, the
RBF scheme performs well, i.e., the RBF data rate grows linearly
w.r.t. the number $M$ of antennas in small-scale MIMO systems with
a large number of users in the cell, but does not show linear
scaling rate w.r.t. $M$ in massive MIMO situations under rich
scattering environments.

Now consider the case of  mm-wave MIMO with the UR-LoS channel
model. Due to large path loss in the mm-wave band, highly
directional beamforming is required to compensate for the large
path loss. This means a large antenna array at the BS, i.e., $M$
is very large. In the following sections, we investigate the
performance of random beamforming under the UR-LoS channel model
in a progressive manner from one single random beam and single
user selection to multiple random (asymptotically-orthogonal)
beams and multiple user selection under a massive MIMO asymptote
in which $M$ goes to infinity. Note that under the UR-LoS channel
model the randomness in beams lies in the beam direction. Thus,
random beamforming under the UR-LoS channel model is named
randomly-directional beamforming (RDB) in this paper.

\section{Randomly-Directional Beamforming in Massive mm-Wave MISO }
\label{sec:randomdirecbeam}

First consider the RDB strategy
in the single beam downlink transmission case. In this case,
during the training period, the BS chooses a normalized direction
$\vartheta$ randomly and transmits the beam $\xbf$ in
\eqref{eq:received_signal}  given by
\begin{equation}  \label{eq:xbfFirstSecIII}
\xbf = \abf(\vartheta)
\end{equation}
where $\vartheta \sim \mathrm{Unif}[-1,1]$ and $\abf(\theta)$ is
given by \eqref{eq:array_steering_vec}. (We simply set $P_t=1$ for
simplicity here.) Then, each user $k$ in the cell composed of $K$
users feeds back the average received power\footnote{To average
out the noise effect, each user can have multiple time samples
$y_k(i)$ during the training period and average the multiple
samples  for the feedback value
$|\bar{y}_k|^2=|\frac{1}{N_s}\sum_{i=1}^{N_s} y_k(i)|^2
\stackrel{(a)}{=}|\hbf_k^H\xbf|^2+\frac{1}{N_s}$. We assume that
sufficient sample average is done and will ignore possible error
in step (a) in this paper.} $|\bar{y}_k|^2 ~(\approx |\hbf_k
\xbf|^2 + \frac{1}{N_s})$ to the BS, where
$|\hbf_k^H\xbf|^2=|\alpha_k|^2 \cdot
M|\abf(\theta_k)^H\abf(\vartheta)|^2$. After the feedback period
is over, the BS selects the user that has maximum signal power and
transmits a data stream with the beamforming vector $\xbf$ in
\eqref{eq:xbfFirstSecIII} to the user. Then,
 the expected  rate $\Rc_1$ of the RDB scheme  is given by
\begin{equation} \label{eq:expected_rate_s1}
\Rc_1=\mathbb{E}\left[
\log\left(1+ \max_{1\le k \le K}
|\alpha_k|^2 M|\abf(\theta_k)^H\abf(\vartheta)|^2 \right)
\right],
\end{equation}
where the expectation is over $\hbf_k$ and $\xbf$. Consider
the case of $K=1$. In this case, we have an upper bound on
$\Rc_1$ from Jensen's inequality  as
\begin{align}
\Rc_1 &= \mathbb{E}\left[\log\left(1+|\alpha_1|^2M|\abf(\theta_1)^H\abf(\vartheta)|^2\right)\right] \le \log\left(1+\mathbb{E}\left[|\alpha_1|^2 M |\abf(\theta_1)^H \abf(\vartheta)|^2\right]\right) \nonumber \\
&= \log\left(1+\mathbb{E}\left[|\alpha_1|^2\right]
\mathbb{E}\left[M |\abf(\theta_1)^H
\abf(\vartheta)|^2\right]\right) = \log 2.
\end{align}
(It will be shown in the next section that $\Rc_1$ actually goes to zero as $M\to \infty$.)
The last equality holds from
$\mathbb{E}[|\alpha_1|^2]=1$ because $|\alpha_1|^2$ has a chi-square distribution with degree-of-freedom two, i.e.,  $|\alpha_1|^2 \sim \chi^2(2)$ and from
\[
\mathbb{E}[M |\abf(\theta_1)^H \abf(\vartheta)|^2]
=\frac{1}{M}\mathbb{E}\left[\left|\sum_{n=0}^{M-1} e^{-\iota\pi n (\vartheta-\theta_1)}\right|^2\right]
= \frac{1}{M} \mathbb{E}\left[M + \sum_{\substack{n,m \\ n \neq m}}e^{-\iota\pi(m-n)(\vartheta - \theta_1)}\right]
\overset{(a)}{=}1,
\]
where step $(a)$ holds because
$\mathbb{E}[e^{-j\pi(m-n)(\vartheta-\theta_1)}] =
\frac{1}{2}\int_{-1}^1 e^{-j \pi
(m-n)\tilde{\theta}_k}d\tilde{\theta}_k = \frac{\sin
\pi(m-n)}{\pi(m-n)}=0$ for any $(m-n) \in
\mathbb{Z}\backslash\{0\}$\cite{Ngo&Larsson&Marzetta:14EUSIPCO}.
\footnote{We can regard $\tilde{\theta}_k :=\vartheta-\theta_k
\sim \mathrm{Unif}[-1,1]$ in case that $\vartheta-\theta_k$
appears as $e^{\iota\pi l(\vartheta - \theta_k)}$ for any integer
$l$ due to the periodicity of period two. See Appendix A.}
\begin{figure}[t]
\centering
\includegraphics[width=3.5in]{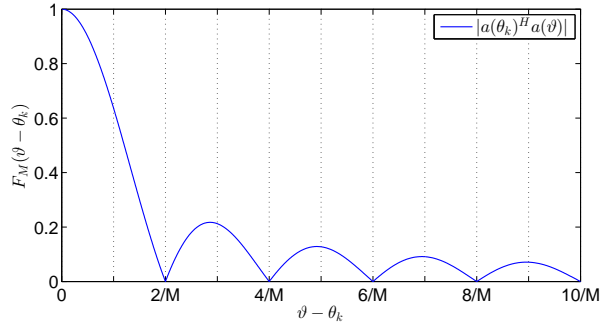}
\caption{$F_M(\vartheta-\theta_k)$ in \eqref{eq:beam_patterngain}  when $M=100$. }
\label{fig:correlation}
\end{figure}
Thus, the rate of the RDB scheme for $K=1$ is insignificant
regardless of the value of $M$. In this case, it is imperative to
obtain the CSI of the single user to achieve the attainable rate
of $\log(1+|\alpha_1|^2 M)\sim_M \log M$
\cite{Alkhateeb&Ayach&Leus&Heath:14JSTSP,Bajwa&Haupt&Sayeed&Nowak:10IEEE,Seo&Sung&Lee&Kim:15SPAWC}.
However, the situation becomes different as $K$ becomes large. In
order to obtain an insight
 into the MU gain in the RDB scheme in mm-wave massive MIMO with the UR-LoS channel model before rigorous analysis in the next section, let us examine the relationship
between $|\abf(\theta_k)^H\abf(\vartheta)|$ and $\{\theta_k, \vartheta\}$:
\begin{align}
|\abf(\theta_k)^H\abf(\vartheta)|
&= \frac{1}{M}\left|
\sum_{n=0}^{M-1} e^{-\iota\pi n (\vartheta - \theta_k)}
\right|
= \frac{1}{M}\left|
\frac{1-e^{-\iota\pi(\vartheta - \theta_k)M}}{1-e^{-\iota\pi(\vartheta - \theta_k)}}
\right| \nonumber\\
&= \frac{1}{M} \left| \frac{\sin \frac{\pi(\vartheta -
\theta_k)M}{2}}{\sin\frac{\pi(\vartheta-\theta_k)}{2}}
\right|=:F_M(\vartheta - \theta_k), \label{eq:beam_patterngain}
\end{align}
which is the  Fej\'er kernel $F_M(\cdot)$ of order $M$
\cite{Strichartz:book}. Fig. \ref{fig:correlation} shows the value
of \eqref{eq:beam_patterngain} versus $\vartheta-\theta_k$. From
\eqref{eq:beam_patterngain}, we have
$|\abf(\theta_k)^H\abf(\vartheta)| \to 0$ as $M \to \infty$ for
fixed $\vartheta$ and $\theta_k$. On the other hand, we have $
|\abf(\theta_k)^H\abf(\vartheta)| \to \left|\frac{2\sin\frac{\pi
\Delta}{2}}{\pi \Delta}\right|$ as $M\to\infty$, provided that
$\vartheta- \theta_k = \frac{\Delta}{M}$ for some $\Delta > 0$
\cite{Ngo&Larsson&Marzetta:14EUSIPCO}. This is because
\begin{equation}
\frac{1}{M} \left| \frac{\sin \frac{\pi(\vartheta -
\theta_k)M}{2}}{\sin\frac{\pi(\vartheta-\theta_k)}{2}}\right|
\overset{(a)}{\approx} \frac{1}{M} \left| \frac{\sin \frac{\pi
\Delta}{2}}{\frac{\pi \Delta}{2M}}\right| \to \left|
\frac{2\sin\frac{\pi \Delta}{2}}{\pi \Delta} \right|
\end{equation}
where $(a)$ holds from $\sin \epsilon \approx \epsilon$ for small
$\epsilon>0$. That is, the asymptotic value of
$|\abf(\theta_k)^H\abf(\vartheta)|$ may not be zero if
$(\vartheta-\theta_k)$ becomes sufficiently small in the order of
$O\left( \frac{1}{M}\right)$ as $M \to \infty$.   On the other
hand, one can show in a similar way that
$|\abf(\theta_k)^H\abf(\vartheta)| \to 0$ as $M\to\infty$,  when
$\vartheta-\theta_k = \frac{\Delta}{M^\alpha}$ for some $\alpha
<1$ and $\Delta$.

Now, suppose that we can find a user $k$ such that $|\vartheta -
\theta_k|<\frac{1}{M}$ almost surely due to  MU diversity. Then,
the rate $\Rc_1$ of the RDB scheme is lower bounded by
\begin{equation}
\Rc_1  \ge \mathbb{E}\left[\log\left(1+|\alpha_k|^2 M \frac{4}{\pi^2}\right)\right]
\sim_M \log M,
\end{equation}
as $M\rightarrow \infty$. In other words, if the number $K$ of
users as a function of $M$ is sufficiently large such that there
exists a user $k$ for whom $|\vartheta-\theta_k|$ is sufficiently
small in the order of $O\left( \frac{1}{M}\right)$ with high
probability, the RDB scheme has  asymptotically good performance.

\section{Asymptotic Analysis of The RDB Rate: The Single Beam  Case}
\label{sec:singlerandombeam}

In this section, we rigorously analyze the asymptotic performance
of the RDB scheme in the single downlink beam case. Direct
computation of $\Rc_1$ in  \eqref{eq:expected_rate_s1} is
difficult since the integral in \eqref{eq:expected_rate_s1} does
not have a closed-form expression. To circumvent this difficulty,
we use several techniques to bound $\Rc_1$  by first assuming that
$\alpha_k = 1$ for all $k$ and focusing on the term
$Z_k:=M|\abf(\theta_k)^H\abf(\vartheta)|^2$ in
$\eqref{eq:expected_rate_s1}$. Then, we will include the term
$\alpha_k \sim \Cc\Nc(0,1)$ in the performance analysis later. We
begin with the following lemma.

\vspace{0.5em}

\begin{lemma}\label{lem:lemma1_bounds_of_Zk}
For any constant $p \in (-1,1)$ and sufficiently large $M$,
we have
\begin{equation}\label{eq:lemma1_bounds_of_tildetheta}
|\tilde{\theta}_k|<\frac{1}{\frac{\pi}{4}M^{(1+p)/2}}
\end{equation}
under the event $\{Z_k > M^p\}$, and furthermore
\begin{equation} \label{eq:lemma1_bounds_of_Zk}
\frac{1}{2\pi M^{(1+p)/2}} < \mathrm{Pr}\{Z_k > M^p\} <
\frac{1}{\frac{\pi}{4}M^{(1+p)/2}},
\end{equation}
 where $\tilde{\theta}_k
=\vartheta-\theta_k$ and $Z_k =
M|\abf(\theta_k)^H\abf(\vartheta)|^2$.
\end{lemma}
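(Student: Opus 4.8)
The plan is to work from the closed form $Z_k = M F_M(\tilde\theta_k)^2 = \dfrac{\sin^2(\pi M\tilde\theta_k/2)}{M\,\sin^2(\pi\tilde\theta_k/2)}$ read off from \eqref{eq:beam_patterngain}, where $\tilde\theta_k=\vartheta-\theta_k$. Since $Z_k$ depends on $\tilde\theta_k$ only through complex exponentials of period $2$, the reduction in Appendix A lets me treat $\tilde\theta_k$ as $\mathrm{Unif}[-1,1]$; in particular $|\tilde\theta_k|\le 1$ and $\mathrm{Pr}\{Z_k>M^p\}=\tfrac12\,\mu\bigl(\{\,\tilde\theta\in[-1,1]:Z_k(\tilde\theta)>M^p\,\}\bigr)$ with $\mu$ Lebesgue measure. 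Throughout, $p\in(-1,1)$ is fixed, so $(1+p)/2>0$ and $(1-p)/2>0$ will both be used.

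For \eqref{eq:lemma1_bounds_of_tildetheta} and the upper bound in \eqref{eq:lemma1_bounds_of_Zk} I would first bound the numerator by $\sin^2(\pi M\tilde\theta_k/2)\le 1$, so that the event $\{Z_k>M^p\}$ forces $\sin^2(\pi\tilde\theta_k/2)<M^{-(1+p)}$, hence $|\sin(\pi\tilde\theta_k/2)|<M^{-(1+p)/2}$. Since $|\tilde\theta_k|\le 1$ gives $\pi|\tilde\theta_k|/2\le\pi/2$, the elementary inequality $|\sin x|\ge\tfrac2\pi|x|$ (valid for $|x|\le\pi/2$) yields $|\tilde\theta_k|\le|\sin(\pi\tilde\theta_k/2)|<M^{-(1+p)/2}$, which is stronger than, and hence implies, \eqref{eq:lemma1_bounds_of_tildetheta}. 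Consequently $\{Z_k>M^p\}\subseteq\{|\tilde\theta_k|<M^{-(1+p)/2}\}$, and because $\tilde\theta_k$ is uniform on $[-1,1]$ with $M^{-(1+p)/2}\le 1$, $\mathrm{Pr}\{Z_k>M^p\}\le M^{-(1+p)/2}<\tfrac{1}{(\pi/4)M^{(1+p)/2}}$, giving the claimed upper bound.

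For the lower bound in \eqref{eq:lemma1_bounds_of_Zk} I would set $c_0=\sqrt2/\pi$ and restrict to $|\tilde\theta_k|\le c_0M^{-(1+p)/2}$. There $\sin^2(\pi\tilde\theta_k/2)\le(\pi\tilde\theta_k/2)^2\le\tfrac{\pi^2c_0^2}{4}M^{-(1+p)}$, so $Z_k\ge\tfrac{4}{\pi^2c_0^2}M^{p}\sin^2(\pi M\tilde\theta_k/2)=2M^{p}\sin^2(\pi M\tilde\theta_k/2)$, whence $Z_k>M^p$ as soon as $\sin^2(\pi M\tilde\theta_k/2)>\tfrac12$. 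In the fast variable $y=\pi M\tilde\theta_k/2$, which sweeps $[-Y,Y]$ with $Y=\tfrac{\pi c_0}{2}M^{(1-p)/2}\to\infty$, the set $\{\sin^2 y>\tfrac12\}$ has mean $\tfrac12$ over every period $\pi$, so its measure inside $[-Y,Y]$ is at least $Y-\pi$; pulling back via $d\tilde\theta_k=\tfrac2{\pi M}\,dy$, the set $\{Z_k>M^p\}$ contains a subset of $\tilde\theta_k$-measure at least $\tfrac2{\pi M}(Y-\pi)=c_0M^{-(1+p)/2}-2M^{-1}=(1-o(1))c_0M^{-(1+p)/2}$, since $M^{-1}=o(M^{-(1+p)/2})$. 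Multiplying by the density $\tfrac12$ gives $\mathrm{Pr}\{Z_k>M^p\}\ge(1-o(1))\tfrac{c_0}{2}M^{-(1+p)/2}=(1-o(1))\tfrac1{\sqrt2\,\pi}M^{-(1+p)/2}$, which exceeds $\tfrac1{2\pi}M^{-(1+p)/2}$ for all large $M$ because $\tfrac1{\sqrt2\,\pi}>\tfrac1{2\pi}$.

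The bound on $|\tilde\theta_k|$ and the probability upper bound are short and robust, relying only on $\sin^2(\cdot)\le 1$, the inequality $|\sin x|\ge\tfrac2\pi|x|$, and the uniformity of $\tilde\theta_k$. The delicate step is the lower bound: $Z_k$ exceeds $M^p$ not merely on the $O(1/M)$ main lobe around $\tilde\theta_k=0$ (which would yield only $\mathrm{Pr}\sim M^{-1}$, far too small), but on a union of $\Theta(M^{(1-p)/2})$ sidelobes, so one must simultaneously (i) control the sidelobe envelope through $\sin^2(\pi\tilde\theta_k/2)\le(\pi\tilde\theta_k/2)^2$ and (ii) argue that over an oscillation window of length $\Theta(M^{(1-p)/2})$ the normalized kernel $\sin^2(\pi M\tilde\theta_k/2)$ lies above $\tfrac12$ on a definite fraction of the window — all while tracking the constant carefully enough that it clears $\tfrac1{2\pi}$.
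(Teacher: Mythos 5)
Your proof is correct and follows essentially the same route as the paper's: the upper bound comes from the necessary condition that $\bigl|\sin(\pi\tilde{\theta}_k/2)\bigr|<M^{-(1+p)/2}$ (numerator at most one), and the lower bound from the sufficient joint event that the denominator is small while $\sin^2(\pi M\tilde{\theta}_k/2)$ exceeds a fixed threshold, which occupies a fixed fraction of each period $2/M$ across the $\Theta(M^{(1-p)/2})$ sidelobes. Your use of Jordan's inequality $|\sin x|\ge \tfrac{2}{\pi}|x|$ and the change of variables $y=\pi M\tilde{\theta}_k/2$ are only cosmetic variations on the paper's $\tfrac{\epsilon}{2}<\sin\epsilon$ bound and its explicit interval listing, and your constants clear the stated bounds just as the paper's do.
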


\begin{proof}
From \eqref{eq:beam_patterngain}, the event $\{ Z_k=M|\abf(\theta_k)^H\abf(\vartheta)|^2 > M^p \}$ is equivalent to
\begin{equation}
\left|\frac{\sin\frac{\pi{\tilde{\theta}}_k M}{2}}{\sin
\frac{\pi {\tilde{\theta}}_k}{2}}\right|
>  M^{(1+p)/2}, \label{eq:lemma1_condition}
\end{equation}
where ${\tilde{\theta}}_k \sim \mathrm{Unif}[-1,1]$ by Appendix \ref{append:uniformdist}.
A necessary condition to satisfy \eqref{eq:lemma1_condition} is that the denominator in the left-hand side (LHS) of
\eqref{eq:lemma1_condition} should be upper bounded as
\begin{equation}\label{eq:lemma1_sineq}
\left|\sin\frac{\pi {\tilde{\theta}}_k}{2}\right| < \frac{1}{M^{(1+p)/2}}
\end{equation}
since the numerator $\left|\sin \frac{\pi {\tilde{\theta}}_k
M}{2}\right| \le 1$ and $M^{(1+p)/2} > 1$ for $p \in (-1,1)$ and $M >1$. For
given $p \in (-1,1)$, the upper bound in the right-hand side (RHS)
of \eqref{eq:lemma1_sineq} goes to zero as $M\to \infty$. Hence,
by the fact that $ \frac{\epsilon}{2} < \sin \epsilon$ for small
$\epsilon>0$, \eqref{eq:lemma1_sineq} implies
\begin{equation}\label{eq:lemmma1_mukupperbound}
| {\tilde{\theta}}_k| < \frac{1}{\frac{\pi}{4}M^{(1+p)/2}}
\end{equation}
for sufficiently large $M$. Therefore,
\eqref{eq:lemma1_bounds_of_tildetheta} holds and we have the upper
bound in \eqref{eq:lemma1_bounds_of_Zk}, since
\eqref{eq:lemmma1_mukupperbound} is a necessary condition for
$\{Z_k > M^p\}$:
\begin{equation}
\mathrm{Pr}\{Z_k > M^p\} < \mathrm{Pr}\left\{ |
{\tilde{\theta}}_k| < \frac{1}{\frac{\pi}{4}M^{(1+p)/2}}
\right\}=\frac{1}{\frac{\pi}{4}M^{(1+p)/2}}
\end{equation}
for sufficiently large $M$, since $\tilde{\theta}_k \sim \mbox{Unif}[-1,1]$.

Now consider the lower bound in \eqref{eq:lemma1_bounds_of_Zk}.
From the fact that $\sin \epsilon < \epsilon$ for $\epsilon>0$, we
have
\begin{equation}  \label{eq:SecIVlem1LB1}
\left|\sin\frac{\pi {\tilde{\theta}}_k}{2}\right| < \frac{1}{2M^{(1+p)/2}}
\end{equation}
if
\begin{equation} \label{eq:SecIVlem1LB11}
\frac{\pi}{2}|{\tilde{\theta}}_k| < \frac{1}{2 M^{(1+p)/2}}.
\end{equation}
If the following equation
\begin{equation} \label{eq:lemma1_sin_pi_muk_m}
\left|\sin\frac{\pi {\tilde{\theta}}_k M}{2}\right| \ge \frac{1}{2}
\end{equation}
is satisfied in addition to \eqref{eq:SecIVlem1LB11} implying
\eqref{eq:SecIVlem1LB1}, then \eqref{eq:lemma1_condition} is
satisfied (i.e., the joint event of \eqref{eq:SecIVlem1LB11} and
\eqref{eq:lemma1_sin_pi_muk_m} is a sufficient condition for
\eqref{eq:lemma1_condition}). It is easy to see that the solution
to \eqref{eq:lemma1_sin_pi_muk_m} is
\begin{equation} \label{eq:condition_of_mu_k}
|{\tilde{\theta}}_k| \in \left\{\left[\frac{2k}{M}+\frac{1}{3M}, \frac{2k}{M}+\frac{5}{3M}  \right], k=0,1,2,\cdots,\right\}.
\end{equation}
Note that  $\left|\sin\frac{\pi {\tilde{\theta}}_k M}{2}\right|$
in \eqref{eq:lemma1_sin_pi_muk_m} has period $\frac{2}{M}$ and the
length of one interval per period contained in the set
\eqref{eq:condition_of_mu_k} is $\frac{4}{3M}$. Hence, the set
\eqref{eq:condition_of_mu_k} occupies $\frac{2}{3}$ length of each
period of $\frac{2}{M}$. Since the term $\frac{1}{M^{(p+1)/2}}$
for given $p \in (-1,1)$ converges to zero  slower than
$\frac{1}{M}$ as $M \rightarrow \infty$, multiple discontinuous
intervals in the set \eqref{eq:condition_of_mu_k} are contained in
the set defined by \eqref{eq:SecIVlem1LB11}, and the length of the
intersection of  the sets \eqref{eq:SecIVlem1LB11} and
\eqref{eq:condition_of_mu_k} is lower bounded by
$\frac{2}{3}\left(\frac{1}{\pi M^{(1+p)/2}} - \frac{2}{M}\right)$,
where minus $\frac{2}{M}$ takes into account the impact of the
last possibly partially overlapping interval. Hence, we have the
lower bound part of  \eqref{eq:lemma1_bounds_of_Zk}:
\begin{align}
\mathrm{Pr}\{Z_k >  M^p\} &\ge \frac{2}{3}\left(\frac{1}{\pi
M^{(1+p)/2}} - \frac{2}{M}\right) > \frac{1}{2}\cdot\frac{1}{\pi
M^{(1+p)/2}}.
\end{align}
 for
sufficiently large $M$. $\hfill \square$
\end{proof}

\vspace{0.5em}

Using Lemma \ref{lem:lemma1_bounds_of_Zk} we have the following
theorem.

\begin{theorem} \label{thm:thm1_asymp_bounds}
For $K= M^q$ and $q \in (0,1)$, we have asymptotic upper and lower bounds for
 $\Rc_1$ in \eqref{eq:expected_rate_s1} when $\alpha_k=1$ for all $k$, given by
\begin{equation} \label{eq:thm1_asymp_bounds}
\log(1+M^{2q-1-\epsilon}) ~\lesssim_M~ \mathbb{E}\left[\log(1+Z)\right] ~\lesssim_M~
\log(1+M^{2q-1+\epsilon})
\end{equation}
for any sufficiently small $\epsilon >0$, where $Z = \underset{1\le k\le K}{\max} Z_k$ and $x\lesssim_M y$
means
$\lim_{M\to\infty} x/y \le 1$.
\end{theorem}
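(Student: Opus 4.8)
The plan is to use the fact that $Z=\max_{1\le k\le K}Z_k$ is a maximum of i.i.d.\ random variables: since the $\theta_k$ are i.i.d., so are the offsets $\tilde\theta_k=\vartheta-\theta_k$ and hence the $Z_k$, so the whole distributional behaviour of $Z$ is controlled by the single-user tail that Lemma~\ref{lem:lemma1_bounds_of_Zk} pins down to within a constant factor, namely $\mathrm{Pr}\{Z_k>M^p\}=\Theta\!\big(M^{-(1+p)/2}\big)$ for $p\in(-1,1)$. Writing $\mathrm{Pr}\{Z\le M^p\}=\big(1-\mathrm{Pr}\{Z_k>M^p\}\big)^K$ with $K=M^q$, one sees that $\mathrm{Pr}\{Z>M^p\}\to 1$ precisely when $K\,M^{-(1+p)/2}\to\infty$, i.e.\ when $p<2q-1$, and $\mathrm{Pr}\{Z>M^p\}\to 0$ when $p>2q-1$. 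Thus $Z$ concentrates, on the exponential scale, at $M^{2q-1}$, which is exactly the scale appearing in \eqref{eq:thm1_asymp_bounds}; what remains is to convert this into two-sided control of $\mathbb{E}[\log(1+Z)]$.

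For the lower bound I would simply retain the event $\{Z>M^{2q-1-\epsilon}\}$: since $\log(1+\cdot)$ is nonnegative and increasing,
\[
\mathbb{E}[\log(1+Z)]\;\ge\;\log\!\big(1+M^{2q-1-\epsilon}\big)\,\mathrm{Pr}\{Z>M^{2q-1-\epsilon}\}.
\]
Applying the lower bound of Lemma~\ref{lem:lemma1_bounds_of_Zk} at $p=2q-1-\epsilon$ (which lies in $(-1,1)$ for $\epsilon$ small), together with $1-x\le e^{-x}$,
\[
\mathrm{Pr}\{Z\le M^{2q-1-\epsilon}\}\;<\;\Big(1-\tfrac{1}{2\pi M^{q-\epsilon/2}}\Big)^{M^{q}}\;\le\;\exp\!\Big(-\tfrac{1}{2\pi}M^{\epsilon/2}\Big)\;\to\;0,
\]
so the probability factor tends to $1$ and the left inequality of \eqref{eq:thm1_asymp_bounds} follows.

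For the upper bound I would split the expectation at the level $M^{2q-1+\epsilon}$:
\[
\mathbb{E}[\log(1+Z)]\;=\;\mathbb{E}\big[\log(1+Z)\,\mathbf{1}\{Z\le M^{2q-1+\epsilon}\}\big]\;+\;\mathbb{E}\big[\log(1+Z)\,\mathbf{1}\{Z>M^{2q-1+\epsilon}\}\big].
\]
The first term is at most $\log(1+M^{2q-1+\epsilon})$. For the second term I would use the deterministic bound $Z_k=M|\abf(\theta_k)^H\abf(\vartheta)|^2\le M$ (Cauchy--Schwarz on the unit-norm steering vectors), so that $\log(1+Z)\le\log(1+M)$, combined with a union bound and the upper bound of Lemma~\ref{lem:lemma1_bounds_of_Zk} at $p=2q-1+\epsilon$:
\[
\mathrm{Pr}\{Z>M^{2q-1+\epsilon}\}\;\le\;K\,\mathrm{Pr}\{Z_k>M^{2q-1+\epsilon}\}\;<\;M^{q}\cdot\frac{4}{\pi}M^{-(q+\epsilon/2)}\;=\;\frac{4}{\pi}M^{-\epsilon/2}.
\]
Hence the second term is $O\!\big(M^{-\epsilon/2}\log M\big)=o(1)$, negligible next to $\log(1+M^{2q-1+\epsilon})$ once that exponent is positive, which gives the right inequality of \eqref{eq:thm1_asymp_bounds}.

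\textbf{Main obstacle.} The delicate part is the upper-tail contribution. One might hope to bound $\mathbb{E}[\log(1+Z)]$ by $\log(1+\mathbb{E}[Z])$ via Jensen, but this is useless here: the Fej\'er kernel has a genuinely heavy tail, so $\mathbb{E}[Z_k]=1$ for \emph{every} $M$ while the typical value of $Z$ is $M^{2q-1}$; thus $Z$ is far from concentrated in $L^1$ and $\mathbb{E}[Z]$ does not see the correct scale. One is therefore forced to truncate and to control $\mathrm{Pr}\{Z>t\}$ over the whole range $t\in(M^{2q-1+\epsilon},M)$ via the per-user estimate of Lemma~\ref{lem:lemma1_bounds_of_Zk} and the i.i.d.\ structure, rather than relying on any moment bound. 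The truncation level must be chosen high enough that $\log(1+M)\cdot\mathrm{Pr}\{Z>M^{2q-1+\epsilon}\}\to 0$ (which is what forces the exponent $q$ to appear in the union bound above), yet low enough to match the exponent $2q-1+\epsilon$ claimed in the theorem; balancing these, and then reinstating the Gaussian gains $\alpha_k$ afterwards, is where the care is needed.
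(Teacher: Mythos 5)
Your proposal is correct and follows essentially the same route as the paper: both arguments combine the two-sided single-user tail estimate of Lemma~\ref{lem:lemma1_bounds_of_Zk} with the i.i.d.\ structure to show $\mathrm{Pr}\{Z>M^{p}\}\to 1$ for $p=2q-1-\epsilon$ and $\mathrm{Pr}\{Z>M^{p}\}=O(M^{-\epsilon/2})$ for $p=2q-1+\epsilon$, then lower-bound $\mathbb{E}[\log(1+Z)]$ by truncation below and upper-bound it by splitting at $M^{2q-1+\epsilon}$ with the crude bound $Z\le M$ on the tail event. The only cosmetic difference is that you use a union bound $K\,\mathrm{Pr}\{Z_k>M^{p}\}$ where the paper expands $(1-\mathrm{Pr}\{Z_k>M^{p}\})^{K}$ exactly; both yield the same $O(M^{-\epsilon/2})$ order, and your caveat that the tail term is negligible relative to $\log(1+M^{2q-1+\epsilon})$ only when that exponent is positive is a point the paper itself glosses over for $q<\tfrac{1}{2}$.
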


\begin{proof}
The probability of the event $\{Z > M^p\}$ for any $p \in (-1,1)$
can be expressed as
\begin{align}
\mathrm{Pr}\left\{ \max_k Z_k > M^p\right\}
&= 1 - \mathrm{Pr}\{Z_k \le M^p\}^K \label{eq:thm1_cdf_zk}\\
&= 1 - \left( 1 - \frac{1}{c_MM^{(1+p)/2}}\right)^K ,
\label{eq:thm1_proof_eq}
\end{align}
where the second equality holds by \eqref{eq:lemma1_bounds_of_Zk}
of Lemma 1 ($c_M$ is bounded between $\frac{\pi}{4}$ and $2 \pi$
for all sufficiently large $M$). We consider the second term in
\eqref{eq:thm1_proof_eq}. Pick $p=2q-1 -\epsilon$ for small
$\epsilon > 0$ such that $p \in (-1,1)$. (For such $\epsilon$,
$2q-\epsilon >0$.) Then the second term can be expressed as
\begin{align}
\left( 1 - \frac{1}{c_M M^{(1+p)/2}} \right)^K
&= \left( 1 - \frac{1}{c_M M^{q-\frac{\epsilon}{2}}} \right)^{M^q} \label{eq:thm1_proof_second}\\
&= e^{M^q \log \left( 1- \frac{1}{c_M M^{q-\frac{\epsilon}{2}}} \right)} \\
&= e^{-\frac{1}{c_M}M^{\epsilon/2} + O\left(
\frac{1}{M^{2q-\epsilon}} \right)} ~\to 0 ~~\text{as}~~ M\to
\infty, \label{eq:thm1_proof_third}
\end{align}
where we used the fact that $\log(1-x) = -x + O(x^2)$ for small $x$
in the third step. Therefore, in this case, we have
\begin{equation}\label{eq:asymp_prop_Z_k}
\mathrm{Pr}\{Z>M^p\}   \to 1, ~~\text{as}~~ M \to\infty
\end{equation} and thus
$\mathbb{E}[\log(1+Z)]$ can be bounded as
\begin{align}
\mathbb{E}[\log(1+Z)]
&\ge \int_{M^p}^M \log(1+z)p(z) dz \\
&\ge \log(1+M^p) \int_{M^p}^M p(z) dz \\
&\sim_M \log(1+M^p), ~~\text{as}~~ M\to\infty,
\end{align}
since $\int_{M^p}^M p(z) dz = \mbox{Pr}\{Z > M^p\} \rightarrow 1$ in this case.
Hence, the claim on the lower bound follows.

Now pick $p=2q-1+\epsilon$ for small $\epsilon > 0 $ such that $p \in (-1,1)$. (For such $\epsilon$, $2q+\epsilon > 0$.)
Then, by the techniques used in \eqref{eq:thm1_proof_second}-\eqref{eq:thm1_proof_third}, the second term in \eqref{eq:thm1_proof_eq} can be computed as
\begin{align}
\left(1-\frac{1}{c_M M^{(1+p)/2}}\right)^K &=
e^{-\frac{1}{c_M}M^{-\epsilon/2}
+ O\left(\frac{1}{M^{2q+\epsilon}}\right)} \\
&\stackrel{(a)}{=} 1 + O\left( -\frac{1}{M^{\frac{\epsilon}{2}}}+\frac{1}{M^{2q+\epsilon}}\right) = 1 - O\left( \frac{1}{M^{\frac{\epsilon}{2}}}\right) ,
\end{align}
where the step (a)  holds by the identity $e^x=1+O(x)$ for small $x$.
Therefore, in this case, the probability of the event $\{Z > M^p\}$ is given by
$O\left(\frac{1}{M^{\epsilon/2}}\right)$. Using
this, we have
\begin{align}
\mathbb{E}[\log(1+Z)]
&= \int_{M^p}^M \log(1+z)p(z)dz
+ \int_0^{M^p} \log (1+z)p(z) dz \\
&\le \log(1+M) O\left(\frac{1}{M^{\epsilon/2}}\right)
+ \log(1+M^p) \\
&\sim_M \log(1+M^p), ~~\text{as}~~ M \to \infty.
\end{align}
In the second step, we used $\int_{M^p}^M p(z)dz=\mbox{Pr}\{ Z > M^p\} =O\left(\frac{1}{M^{\epsilon/2}}\right)$.  Hence, the claim on the upper bound follows. $\hfill \square$
\end{proof}

\vspace{0.5em}

Theorem \ref{thm:thm1_asymp_bounds} states that the single-beam
RDB scheme under the assumption $\alpha_k=1,~\forall~k$ has
asymptotically nontrivial performance, i.e., $\Rc_1 \to \infty$,
as $M\rightarrow \infty$, when $K= M^{q}$ with $q \in
(\frac{1}{2},1)$. On the other hand, when $K= M^{q}$ with $q \in
(0,\frac{1}{2})$, the RDB scheme has trivial performance, i.e.,
$\Rc_1 \to 0$, as $M \rightarrow \infty$. Thus, $q=\frac{1}{2}$ is
the performance transition point for the single-beam RDB scheme
under the UR-LoS channel model.

Now consider the impact of the path gain term $\alpha_k
\overset{i.i.d.}{\sim} \Cc\Nc(0,1)$ on the single-beam RDB rate
$\Rc_1$. In fact, the same is true under the assumption of $\alpha_k
\overset{i.i.d.}{\sim} \Cc\Nc(0,1)$.

\vspace{0.5em}

\begin{theorem}\label{thm3:ratio_of_Rto_PerfCSI}
For $K = M^q$ with $q \in (\frac{1}{2},1)$ and $\alpha_k
\stackrel{i.i.d.}{\sim} \Cc\Nc(0,1)$, we have
\begin{equation} \label{eq:ratio_of_Rto_PerfCSI}
\lim_{M \to \infty}\frac{\Rc_1}{\mathbb{E}\left[\log(1+M\max_k|\alpha_{k}|^2)\right]} = 2q-1,
\end{equation}
where $\Rc_1$ is the optimal single-beam RDB rate defined in
\eqref{eq:expected_rate_s1} considering the random path gain, and
$\mathbb{E}\left[\log(1+M\max_k|\alpha_{k}|^2)\right]$ is the
optimal rate of exact beamforming based on  perfect CSI at the BS.
On the other hand, when $q \in  (0, \frac{1}{2})$, $\Rc_1
\rightarrow 0$ as $M \rightarrow \infty$.
\end{theorem}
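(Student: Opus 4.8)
The plan is to reduce the claim to Theorem~\ref{thm:thm1_asymp_bounds} (the $\alpha_k=1$ case) by showing that the random path gains $\alpha_k\stackrel{i.i.d.}{\sim}\Cc\Nc(0,1)$ perturb $\Rc_1$ only by additive $O(\log\log M)$ terms, which are negligible against the $\Theta(\log M)$ scale of both $\Rc_1$ (when $q>\tfrac12$) and the denominator $\Rc_1^\star:=\mathbb{E}[\log(1+M\max_{1\le k\le K}|\alpha_k|^2)]$. The first preliminary step is to establish $\Rc_1^\star\sim_M\log M$: a lower bound is $\Rc_1^\star\ge\mathbb{E}[\log(1+M|\alpha_1|^2)]\sim_M\log M$ by a direct single-user computation, and an upper bound follows from Jensen's inequality together with the fact that $\max_k|\alpha_k|^2$, being the maximum of $K$ i.i.d.\ unit-mean exponentials, satisfies $\mathbb{E}[\max_k|\alpha_k|^2]=\sum_{j=1}^{K}1/j=O(\log M)$, so that $\Rc_1^\star\le\log(1+M\cdot O(\log M))=\log M+O(\log\log M)$.

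For the upper bound on $\Rc_1$ with $q\in(\tfrac12,1)$, I would use $\max_k|\alpha_k|^2 Z_k\le(\max_k|\alpha_k|^2)(\max_k Z_k)$ together with $\log(1+xy)\le\log(1+x)+\log(1+y)$ to obtain $\Rc_1\le\mathbb{E}[\log(1+\max_k|\alpha_k|^2)]+\mathbb{E}[\log(1+Z)]$, where the first term is $O(\log\log M)$ by the Jensen bound above and the second is $\lesssim_M\log(1+M^{2q-1+\epsilon})\sim_M(2q-1+\epsilon)\log M$ by Theorem~\ref{thm:thm1_asymp_bounds}. Dividing by $\Rc_1^\star\sim_M\log M$ and letting $\epsilon\downarrow0$ gives $\limsup_M\Rc_1/\Rc_1^\star\le 2q-1$.

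The heart of the argument is the matching lower bound for $q\in(\tfrac12,1)$. Fix a small $\epsilon>0$ and set $p:=2q-1-\epsilon\in(-1,1)$. Conditioning on $\vartheta$, the pairs $\{(\theta_k,\alpha_k)\}_{k=1}^{K}$ are i.i.d., $Z_k$ depends only on $\theta_k$, and the lower bound of Lemma~\ref{lem:lemma1_bounds_of_Zk}---valid for every value of $\vartheta$ because $\vartheta-\theta_k$ is uniform over a period of the Fej\'er kernel (Appendix~\ref{append:uniformdist})---gives $\mathrm{Pr}\{Z_k>M^p\}>\frac{1}{2\pi}M^{-(q-\epsilon/2)}$. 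Let $E$ be the event that some index $k$ has $Z_k>M^p$ \emph{and} $|\alpha_k|^2\ge1$ simultaneously. Since the gains are independent of the directions, each index realizes both with probability $r\ge\frac{e^{-1}}{2\pi}M^{-(q-\epsilon/2)}$, so $\mathrm{Pr}(E^c)\le(1-r)^{K}\le e^{-rK}\le\exp\big(-\tfrac{e^{-1}}{2\pi}M^{\epsilon/2}\big)\to0$. On $E$ we have $\max_k|\alpha_k|^2 Z_k\ge M^p$, hence (restricting the expectation to $E$ and using nonnegativity of $\log(1+\cdot)$) $\Rc_1\ge\log(1+M^p)\,\mathrm{Pr}(E)\sim_M(2q-1-\epsilon)\log M$; dividing by $\Rc_1^\star$ and letting $\epsilon\downarrow0$ gives $\liminf_M\Rc_1/\Rc_1^\star\ge 2q-1$, which together with the previous paragraph yields $\lim_M\Rc_1/\Rc_1^\star=2q-1$. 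I expect this to be the main obstacle: one must extract, with probability tending to one, a \emph{single} user that is simultaneously well-aligned ($Z_k>M^p$) and has a non-vanishing gain. It works precisely because Lemma~\ref{lem:lemma1_bounds_of_Zk} forces the expected number of well-aligned users, $K\cdot\mathrm{Pr}\{Z_k>M^p\}=\Theta(M^{\epsilon/2})$, to grow polynomially in $M$ while the gains are independent of alignment.

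Finally, for $q\in(0,\tfrac12)$ I would show $\Rc_1\to0$ directly. Pick $\epsilon>0$ with $p:=2q-1+\epsilon<0$ and split the expectation according to whether $Z\le M^p$ or not. On $\{Z\le M^p\}$, $\log(1+\max_k|\alpha_k|^2 Z_k)\le M^p\max_k|\alpha_k|^2$ (using $\log(1+x)\le x$ and $Z_k\le Z\le M^p$), contributing at most $M^p\,\mathbb{E}[\max_k|\alpha_k|^2]=O(M^p\log M)\to0$. On the complement, which has probability $O(M^{-\epsilon/2})$ by the computation in the proof of Theorem~\ref{thm:thm1_asymp_bounds}, I would use $Z_k\le M$ to bound $\log(1+\max_k|\alpha_k|^2 Z_k)\le\log(1+M)+\log(1+\max_k|\alpha_k|^2)$ and apply Cauchy--Schwarz, so this piece is at most $\big(\mathbb{E}[(\log(1+M)+\log(1+\max_k|\alpha_k|^2))^2]\big)^{1/2}\cdot O(M^{-\epsilon/4})=O(\log M)\cdot O(M^{-\epsilon/4})\to0$; here the second moment is $O((\log M)^2)$ because the union bound $\mathrm{Pr}\{\max_k|\alpha_k|^2>t\}\le Ke^{-t}$ makes all moments of $\log(1+\max_k|\alpha_k|^2)$ polylogarithmic in $M$. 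Adding the two pieces gives $\Rc_1\to0$.
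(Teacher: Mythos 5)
Your proposal is correct, and it shares the paper's overall skeleton -- reduce to Theorem~\ref{thm:thm1_asymp_bounds}, show the random gains contribute only $O(\log\log M)$ corrections against the $\Theta(\log M)$ scale, and divide by $\mathbb{E}[\log(1+M\max_k|\alpha_k|^2)]\sim_M \log M$ -- but the execution differs in two places worth noting. For the lower bound, the paper isolates the single user $k'=\arg\max_k Z_k$, observes that $|\alpha_{k'}|^2\sim\chi^2(2)$ is independent of $Z_{k'}=Z$, and applies a $\beta$-scaled version of Theorem~\ref{thm:thm1_asymp_bounds} (namely $\log(1+\beta M^{2q-1-\epsilon})\lesssim_M \mathbb{E}[\log(1+\beta Z)]$) via the law of iterated expectations, using only that $\mathbb{E}[\log|\alpha_{k'}|^2]$ is finite; you instead run a direct Bernoulli argument exhibiting, with probability tending to one, a single user that is simultaneously well-aligned ($Z_k>M^p$) and has $|\alpha_k|^2\ge 1$. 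Both work, and yours has the merit of being self-contained (it also makes explicit the conditioning on $\vartheta$ needed for the independence of the $Z_k$'s, a point the paper glosses over when writing $\mathrm{Pr}\{\max_k Z_k\le M^p\}=\mathrm{Pr}\{Z_k\le M^p\}^K$), while the paper's route additionally yields the intermediate Lemma~\ref{thm2:ratio_of_R} comparing $\Rc_1$ to $\mathbb{E}[\log(1+|\alpha_{k'}|^2Z_{k'})]$. For the $q\in(0,\tfrac12)$ case the paper applies the modified bound $\mathbb{E}[\log(1+\beta Z)]\lesssim_M \beta M^{2q-1+\epsilon}$ with $\beta=\max_k|\alpha_k|^2$ and $\mathbb{E}[\max_k|\alpha_k|^2]\sim q\log M$ in one step, whereas your split on $\{Z\le M^p\}$ plus Cauchy--Schwarz reaches the same conclusion slightly more laboriously but with the uniformity issues handled more explicitly. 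You also derive the denominator asymptotics from scratch rather than citing the $\log M+\log\log M$ expression from the Sharif--Hassibi reference; either way the ratio is unaffected.
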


\begin{proof}
See Appendix \ref{appen:thm3}. $\hfill \square$
\end{proof}

\vspace{0.5em}

\noindent Note that the ratio $2q-1$ of the RDB rate $\Rc_1$ to the exact beamforming rate is the same for both assumptions $\alpha_k=1$ and $\alpha_k \sim \Cc\Nc(0,1)$.  As seen, the single-beam RDB
strategy achieves
$2q-1$ fraction of the exact beamforming rate based on perfect CSI
at the BS.   The supremum fraction of one can be achieved
arbitrarily closely when the number $K$ of users grows almost
linearly w.r.t. $M$, i.e., $q$ is arbitrarily close to one.

Theorems \ref{thm:thm1_asymp_bounds} and
\ref{thm3:ratio_of_Rto_PerfCSI} hold exactly in the same form when
$K=c_u M^q$ for any constant $c_u > 0$. However, for the
notational simplicity in the proofs, we just used $K=M^q$. Note
that in the single beam case we only have the power gain by the
antenna array, as shown in the maximum rate of
$\mathbb{E}\left[\log(1+M\max_k|\alpha_{k}|^2)\right]$ even by
perfect beamforming.

\section{Asymptotic Analysis of The RDB Rate: The Multiple Beam Case}
\label{sec:multibeam}

In this section, we consider the case in which the number $S$ of
randomly-directional beams is more than one and   allowed to grow
to infinity as a function of $M$, and analyze the corresponding
asymptotic performance. In the multiple beam case, the BS
transmits $S$ random beams equi-spaced in the normalized angle
domain, defined as
\begin{equation}\label{eq:equispaced_beam_def}
\wbf_b = \abf(\vartheta_b) = \abf\left(\vartheta +
\frac{2(b-1)}{S}\right), ~~ \text{for} ~~ b=1,\cdots,S,
\end{equation}
where $\vartheta \sim \mathrm{Unif}[-1,1]$, to the downlink
sequentially during the training period. We assume that the
network is synchronized and thus each user knows the training beam
index $b$ by the corresponding training interval.  Here, the
difference between the normalized directions of two adjacent beams
is $\frac{2}{S}$ and the offset $\vartheta$ is randomly generated
on $(-1,1]$. (Recall from \eqref{eq:array_steering_vec} that
$\abf(\theta)$ is periodic in $\theta$ with period 2.)  Note that
the equi-spaced beams are asymptotically orthogonal to one
another, i.e.,
\begin{equation}  \label{eq:multibeamortho}
\lim_{M \to \infty}|\abf(\vartheta_{b_1})^H\abf(\vartheta_{b_2})|
= 0 ~~\text{for}~~ b_1 \neq b_2,
\end{equation}
when $S=o(M)$.

In the next subsections, we analyze the asymptotic performance of
single user selection based on multiple training beams first and
multiple user selection  based on multiple beams later.

\subsection{The Single User Selection Case}

In the single user selection case, after the training period is
over, each user reports the maximum of its received power values
for the $S$ training beams and the corresponding beam index. Then,
the BS transmits a data stream to the user that has maximum
received power with the corresponding beam $\wbf_b$. In this case,
the rate $\Rc_S$ is given by
\begin{equation}
\Rc_S = \mathbb{E}\left[\log\left( 1 + \max_{1\le k \le K}\max_{1 \le b \le S} |\alpha_k|^2 M|\abf(\theta_k)^H\abf(\vartheta_b)|^2\right)\right].
\end{equation}
 First, consider
the case of $|\alpha_k|=1$ for all $k=1,\cdots,K$ as before. In
this case, we have the following theorem:

\vspace{0.5em}
\begin{theorem}\label{thm:thm_multibeam_sus}
For  $K=M^q$, $S=M^{\ell}$ and any $\ell,q \in (0,1)$ such that
$\ell+q <1$, we have asymptotic lower and upper bounds on $\Rc_S$
in the case of  $|\alpha_k|=1,~\forall~k$, given by
\begin{equation}   \label{eq:thm_multibeam_sus_main}
\log(1+M^{2q+2\ell - 1 - \epsilon}) ~\lesssim_M~
\mathbb{E}\left[\log(1+Z')\right] ~\lesssim_M~ \log(1+M^{2q+2\ell
- 1 + \epsilon})
\end{equation}
for any sufficiently small $\epsilon >0$, where $Z' = \max_k Z_k'$
and $Z_k'=\max_b M|\abf(\theta_k)^H\abf(\vartheta_b)|^2$.
\end{theorem}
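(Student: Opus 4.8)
The plan is to reduce Theorem~\ref{thm:thm_multibeam_sus} to a multi-beam analogue of Lemma~\ref{lem:lemma1_bounds_of_Zk} and then re-run, essentially verbatim, the argument of Theorem~\ref{thm:thm1_asymp_bounds}, with $S=M^{\ell}$ entering only through the per-user tail probability. Concretely, I would first prove that for any $p\in(-1,1)$ with $\ell<(1+p)/2$ and all sufficiently large $M$,
\[
\frac{1}{2\pi}\cdot\frac{S}{M^{(1+p)/2}} \;<\; \mathrm{Pr}\{Z_k'>M^p\} \;<\; \frac{4}{\pi}\cdot\frac{S}{M^{(1+p)/2}}.
\]
To see this, condition on the offset $\vartheta$, so that the beams $\vartheta_b=\vartheta+\frac{2(b-1)}{S}$ are deterministic and, after reduction modulo $2$, form $S$ equi-spaced points on the length-$2$ torus with gap $\frac{2}{S}$, while $\theta_k\sim\mathrm{Unif}[-1,1]$. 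Then $\{Z_k'>M^p\}=\bigcup_{b}\{M|\abf(\theta_k)^H\abf(\vartheta_b)|^2>M^p\}$, and applying the necessary condition of Lemma~\ref{lem:lemma1_bounds_of_Zk} with $\vartheta_b$ in place of $\vartheta$ confines $\theta_k$, under the $b$-th event, to an interval of length $\frac{2}{(\pi/4)M^{(1+p)/2}}$ centred at $\vartheta_b$. Since $\ell<(1+p)/2$ forces $\frac{2}{(\pi/4)M^{(1+p)/2}}<\frac{2}{S}$ for large $M$, these $S$ intervals are pairwise disjoint, so the $S$ events are disjoint and $\mathrm{Pr}\{Z_k'>M^p\}=\sum_b\mathrm{Pr}\{M|\abf(\theta_k)^H\abf(\vartheta_b)|^2>M^p\}$; each summand obeys the two-sided bound of Lemma~\ref{lem:lemma1_bounds_of_Zk} (the distribution of $\vartheta_b-\theta_k$ modulo $2$ being $\mathrm{Unif}[-1,1]$ by Appendix~\ref{append:uniformdist}, independently of $\vartheta$), which gives the display and shows it also holds unconditionally.

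Granting this, I would repeat the proof of Theorem~\ref{thm:thm1_asymp_bounds}. Conditioned on $\vartheta$ the $\theta_k$ are i.i.d., hence so are the $Z_k'$, and
\[
\mathrm{Pr}\{Z'>M^p\}=1-\big(1-\mathrm{Pr}\{Z_k'>M^p\}\big)^{K}=1-\Big(1-\frac{1}{c_M\,M^{(1+p)/2-\ell}}\Big)^{M^q},
\]
with $c_M$ bounded between $\frac{\pi}{4}$ and $2\pi$; writing $S/M^{(1+p)/2}=M^{\ell-(1+p)/2}$, the exponent controlling the limit is $M^{q-(1+p)/2+\ell}$, which diverges when $p<2(q+\ell)-1$ and vanishes when $p>2(q+\ell)-1$. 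For small $\epsilon>0$ the choices $p=2(q+\ell)-1\mp\epsilon$ are admissible: they lie in $(-1,1)$ because $q+\ell\in(0,1)$, and they satisfy $\ell<(1+p)/2=(q+\ell)\mp\epsilon/2<1$ because $q>\epsilon/2$ and $q+\ell<1$. For $p=2(q+\ell)-1-\epsilon$ this yields $\mathrm{Pr}\{Z'>M^p\}\to 1$, and for $p=2(q+\ell)-1+\epsilon$ it yields $\mathrm{Pr}\{Z'>M^p\}=O(M^{-\epsilon/2})$. Splitting $\mathbb{E}[\log(1+Z')]=\int_{M^p}^{M}\log(1+z)p(z)\,dz+\int_0^{M^p}\log(1+z)p(z)\,dz$ and using $Z'\le M$ almost surely (because $|\abf(\theta_k)^H\abf(\vartheta_b)|=F_M(\cdot)\le 1$), exactly as in Theorem~\ref{thm:thm1_asymp_bounds}, then gives $\mathbb{E}[\log(1+Z')]\sim_M\log(1+M^p)$ in each case, that is, the lower bound with $p=2(q+\ell)-1-\epsilon$ and the upper bound with $p=2(q+\ell)-1+\epsilon$.

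The hard part will be the multi-beam tail estimate, and specifically keeping the scale separation honest: one has to verify $\ell<(1+p)/2<1$ for the exponents $p=2(q+\ell)-1\pm\epsilon$ that the outer argument needs — it is exactly here that the hypothesis $\ell+q<1$ (with $q>0$) is consumed — and one must handle with care the torus wrap-around of the beam set and the single, possibly partially overlapping, oscillation interval of $|\sin(\pi(\theta_k-\vartheta_b)M/2)|$ inside each beam neighbourhood, so that the $O(1/M)$ corrections there remain of strictly smaller order than $1/M^{(1+p)/2}$ (which needs $(1+p)/2<1$) and neither constant in the display collapses to $0$ or blows up. Once the display is established, the remainder is a direct transcription of the single-beam proof with $M^{(1+p)/2}$ replaced by $M^{(1+p)/2}/S=M^{(1+p)/2-\ell}$.
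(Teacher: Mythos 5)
Your proposal is correct and follows essentially the same two-step route as the paper: first establish that $\mathrm{Pr}\{Z_k'>M^p\}=\Theta\!\left(S/M^{(1+p)/2}\right)$ for $p>2\ell-1$, then rerun the single-beam machinery of Theorem~\ref{thm:thm1_asymp_bounds} with this modified tail and the exponents $p=2(q+\ell)-1\mp\epsilon$. The only cosmetic difference is how the factor $S$ is extracted: you write $\{Z_k'>M^p\}$ as a disjoint union over beams (disjointness following from the Lemma~\ref{lem:lemma1_bounds_of_Zk} localization radius being smaller than half the beam spacing $2/S$), whereas the paper conditions on which beam attains the maximum and uses the inclusion $\tilde{C}_1(p)\subset C_1$ --- the same underlying observation, yielding the identical bound $S\cdot\mathrm{Pr}\{M|\abf(\theta_k)^H\abf(\vartheta_1)|^2>M^p\}$.
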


\vspace{0.5em}

\begin{proof}
Proof consists of two steps as in the proofs of Lemma
\ref{lem:lemma1_bounds_of_Zk} and Theorem
\ref{thm:thm1_asymp_bounds}: (i) first, we bound
$\mathrm{Pr}\{Z_k' \le M^p\}$ and (ii) then bound
$\mathbb{E}\left[\log(1+Z')\right]$ using the bounds on
$\mathrm{Pr}\{Z_k' \le M^p\}$.

(i) First, consider $\mathrm{Pr}\{A\}:=\mathrm{Pr}\{Z_k'=\max_b M|\abf(\theta_k)^H\abf(\vartheta_b)|^2>M^p\}$
for $p \in (-1,1)$. Let $C_i$ be the event that $i=\arg\max_b
M|\abf(\theta_k)^H\abf(\vartheta_b)|^2$, i.e.,  $C_i$ is the event
that the $i$-th beam is the optimal beam for user $k$. Note that
the distribution of  $M|\abf(\theta_k)^H\abf(\vartheta_b)|^2 = M
\cdot F_M^2\left(\vartheta_b  - \theta_k\right) = M \cdot
F_M^2\left(\vartheta + \frac{2(b-1)}{S} - \theta_k\right)= M \cdot
F_M^2\left({\tilde{\theta}}_k+\frac{2(b-1)}{S}\right)$ is
independent of $b$ since  ${\tilde{\theta}}_k = \vartheta -
\theta_k \sim \mathrm{Unif}[-1,1]$ and the Fej\'er kernel
$F_M({\tilde{\theta}})$ is a periodic function with period $2$.
Hence, the events $C_1, C_2, \cdots, C_{M^\ell}$ are equally
probable as $\mathrm{Pr}\{C_i\}=\frac{1}{S}=\frac{1}{M^{\ell}}$
for every $i=1,\cdots,M^\ell$. Furthermore,  the conditional
events $A|C_1, A|C_2, \cdots, A|C_{M^\ell}$ are also equally
probable, i.e.,
$\mathrm{Pr}\{A|C_1\}=\cdots=\mathrm{Pr}\{A|C_{M^{\ell}}\}$ since
the situation is the same for each  $\vartheta_b$ due to the
periodicity of $F_M(\cdot)$ of period two and ${\tilde{\theta}}_k
\sim \mathrm{Unif}[-1,1]$. Hence,  by the law of total probability
and Bayes' rule, we have
\[
\mathrm{Pr}\{A\} = \sum_{i=1}^{M^{\ell}}
\mathrm{Pr}\{A|C_i\}\mathrm{Pr}\{C_i\}   = \mathrm{Pr}\{A|C_1\} =
M^{\ell}\cdot\mathrm{Pr}\{A,C_1\}.
\]
Thus, to bound  $\mathrm{Pr}\{A\}$, we need to bound
$\mathrm{Pr}\{A,C_1\}$. In order to bound
$\mathrm{Pr}\{A,C_1\}$, we find a sufficient condition for the
event $C_1$. Let $\tilde{C}_1(p)$ be the event
$M|\abf(\theta_k)^H\abf(\vartheta_1)|^2>M^p$. Then,  the event $\tilde{C}_1(p)$
with $p>2\ell - 1$ implies
\begin{equation}\label{eq:thm4_sus_proof}
|\theta_k -
\vartheta_1|\stackrel{(a)}{<}\frac{1}{\frac{\pi}{4}M^{(p+1)/2}}\stackrel{(b)}{=}\frac{1}{\frac{\pi}{4}M^{\ell+\delta/2}}
\end{equation}
for sufficiently large $M$, where $\delta = p-(2\ell - 1)>0$.
(Here, step (a) is  by \eqref{eq:lemma1_bounds_of_tildetheta}
of Lemma \ref{lem:lemma1_bounds_of_Zk} with $p \in (-1,1)$, and
step (b) is by the new additional condition $p>2\ell - 1$.)
Therefore, in this case, $|\theta_k -\vartheta_b|
>
|\frac{2}{M^\ell}-\frac{1}{\frac{\pi}{4}M^{\ell+\delta/2}}|>\frac{1}{M^{\ell}}=\frac{1}{2}\frac{2}{S}$
for any $b\neq 1$ and sufficiently large $M$, and this implies for sufficiently large $M$
\begin{equation} \label{eq:theo3proof51}
\tilde{C}_1(p) ~ \subset ~C_1 ~~\text{for}~~ p \in (-1,1) ~\mbox{and}~ p > 2\ell - 1.
\end{equation}
Now consider
 $\mathrm{Pr}\{A,C_1\}$
\begin{align}
\mathrm{Pr}\{A,C_1\} &=
\mathrm{Pr}\left\{\max_b M|\abf(\theta_k)^H\abf(\vartheta_b)|^2 > M^p, ~
1=\mathop{\arg\max}_b
M|\abf(\theta_k)^H\abf(\vartheta_b)|^2\right\} \nonumber\\
&=
\mathrm{Pr}\{\tilde{C}_1(p),C_1\}.  \label{eq:Theo3mid1}
\end{align}
By using \eqref{eq:theo3proof51} and  \eqref{eq:Theo3mid1}, we have
\begin{equation}
\mathrm{Pr}\{A,C_1\} \stackrel{(c)}{=} \mathrm{Pr}\{\tilde{C}_1(p),C_1\}
\stackrel{(d)}{=} \mathrm{Pr}\{\tilde{C}_1(p)\} =
\mathrm{Pr}\{M|\abf(\theta_k)^H\abf(\vartheta_1)|^2>M^p\}
\label{eq:Theorem3PAC}
\end{equation}
when  $p \in (-1,1)$ and $p>2\ell-1$ (these two conditions are required to apply \eqref{eq:theo3proof51}
for step (d), and step (c) is valid by \eqref{eq:Theo3mid1}). Now by applying \eqref{eq:lemma1_bounds_of_Zk} of
Lemma \ref{lem:lemma1_bounds_of_Zk} to the last term in
\eqref{eq:Theorem3PAC} and using
$\mathrm{Pr}\{A\}=M^{\ell}\cdot\mathrm{Pr}\{A,C_1\}$, we have for
 $p\in (-1,1)$ and $p>2\ell - 1$,
\begin{equation}
\frac{1}{2\pi M^{(1+p-2\ell)/2}} <\mathrm{Pr}\{A\}<
\frac{1}{\frac{\pi}{4}M^{(1+p-2\ell)/2}}.
\end{equation}

(ii) Substituting $\mathrm{Pr}\{Z_k' \le
M^p\}=1-\mathrm{Pr}\{A\}$ into $\mathrm{Pr}\{Z_k \le M^p\}$ in \eqref{eq:thm1_cdf_zk}
of the proof in Theorem \ref{thm:thm1_asymp_bounds} and following
the proof of Theorem \ref{thm:thm1_asymp_bounds}, we have for
$p=2q+2\ell-1-\epsilon$ with arbitrarily small $\epsilon > 0$,
\begin{equation}
\mathbb{E}[\log(1+Z')] \gtrsim_M \log(1+M^p),
\end{equation}
and for $p=2q+2\ell-1+\epsilon$ with arbitrarily small $\epsilon > 0$,
\begin{equation}
\mathbb{E}[\log(1+Z')] \lesssim_M \log(1+M^p)
\end{equation}
provided that $\ell+q<1$ (this is required for the condition
$p\in(-1,1)$), where $x\gtrsim_M y$ indicates
$\lim_{M\to\infty}x/y \ge 1$. Therefore, we have
\eqref{eq:thm_multibeam_sus_main}.  $\hfill \square$
\end{proof}

\vspace{0.5em}

\begin{corollary} \label{corol:multiRDBfraction}
For  $K=M^q$, $S=M^{\ell}$ and any $\ell,q \in (0,1)$ such that
$\frac{1}{2} < \ell+q <1$, we have
\begin{equation} \label{eq:multiRDBfraction}
\lim_{M\to\infty}\frac{\mathbb{E}[\log(1+Z')]}{\log(1+M)} = 2(q +
\ell) -1.
\end{equation}
\end{corollary}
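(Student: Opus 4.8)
The plan is to derive this corollary as an immediate squeeze from the two-sided asymptotic bound established in Theorem~\ref{thm:thm_multibeam_sus}, in direct analogy with how the fraction $2q-1$ was extracted in the single-beam case. Set $r := 2(q+\ell)-1$. The hypothesis $\tfrac{1}{2} < q+\ell < 1$ is used precisely to guarantee $r \in (0,1)$; in particular $r>0$, which is the property that makes the normalization by $\log(1+M)$ the correct one.

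First I would record the elementary fact that for any fixed exponent $a>0$ one has $\log(1+M^a) \sim_M a\log M \sim_M a\log(1+M)$, hence $\lim_{M\to\infty}\log(1+M^a)/\log(1+M) = a$. Next, fix a small $\epsilon>0$ subject to $\epsilon < r$ (so that $r-\epsilon>0$) and $\epsilon < 1-r$ (so that $r+\epsilon<1$); both constraints define a nonempty interval of admissible $\epsilon$ because $\tfrac{1}{2}<q+\ell<1$. For such $\epsilon$ the perturbed exponent $p = 2q+2\ell-1\pm\epsilon$ lies in $(-1,1)$, so Theorem~\ref{thm:thm_multibeam_sus} applies and yields
\[
\log(1+M^{r-\epsilon}) \;\lesssim_M\; \mathbb{E}[\log(1+Z')] \;\lesssim_M\; \log(1+M^{r+\epsilon}).
\]

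Then I would divide through by $\log(1+M)$ and pass to the limit. Using the elementary asymptotics above, the left side tends to $r-\epsilon$ and the right side to $r+\epsilon$, so
\[
r-\epsilon \;\le\; \liminf_{M\to\infty}\frac{\mathbb{E}[\log(1+Z')]}{\log(1+M)} \;\le\; \limsup_{M\to\infty}\frac{\mathbb{E}[\log(1+Z')]}{\log(1+M)} \;\le\; r+\epsilon .
\]
Since $\epsilon>0$ can be taken arbitrarily small, letting $\epsilon\downarrow 0$ forces the $\liminf$ and $\limsup$ to coincide and equal $r = 2(q+\ell)-1$, which is \eqref{eq:multiRDBfraction}.

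There is essentially no hard step; the only point requiring care is the bookkeeping on $\epsilon$. It must be small enough that the perturbed exponent $2(q+\ell)-1\pm\epsilon$ still satisfies the hypothesis $p\in(-1,1)$ of Theorem~\ref{thm:thm_multibeam_sus}, and — more substantively — small enough that $r-\epsilon>0$, for otherwise $\log(1+M^{r-\epsilon})/\log(1+M)$ would not converge to $r-\epsilon$. Both requirements are simultaneously satisfiable exactly because the corollary strengthens $q+\ell<1$ to $\tfrac{1}{2}<q+\ell<1$, which is also the reason the limiting fraction is the stated positive constant rather than $0$.
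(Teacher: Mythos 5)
Your proposal is correct and is essentially the derivation the paper intends: the corollary is stated as an immediate consequence of Theorem~\ref{thm:thm_multibeam_sus}, obtained exactly by dividing the two-sided bound by $\log(1+M)$, using $\log(1+M^{a})/\log(1+M)\to a$ for $a>0$, and letting $\epsilon\downarrow 0$. Your bookkeeping on $\epsilon$ (needing $\epsilon<2(q+\ell)-1$ so the lower exponent stays positive, and keeping the perturbed exponent in $(-1,1)$) correctly identifies why the strengthened hypothesis $\frac{1}{2}<q+\ell<1$ is required.
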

\vspace{0.5em}

When the number $S$ of training beams is fixed, i.e., $\ell=0$,
Corollary \ref{corol:multiRDBfraction} reduces to the single beam result in  \eqref{eq:ratio_of_Rto_PerfCSI}.
  In the single user selection with
multiple training beams, as seen in \eqref{eq:multiRDBfraction},
the supremum of one for the achievable fraction can be achieved
arbitrarily closely by the combination of multiple users $q$ and
multiple training beams $\ell$.  Thus, when there exist not
sufficiently many users in the cell, multiple training beams can
be used to enhance the RDB performance. Note that even for $q=0$,
the optimal rate can be achieved with $\Rc_S$ by making $\ell
\uparrow 1$, as expected. (In fact, this case corresponds to the
case considered in the previous works on channel estimation for
sparse mm-wave MIMO channels, e.g.
\cite{Seo&Sung&Lee&Kim:15SPAWC}.)
 Note also that the effect of two terms
is not distinguishable at least in terms of the rate during the
data transmission period, although multiple training beams require
more training time. It can be shown that even with consideration of the random channel gain $\alpha_k \stackrel{i.i.d.}{\sim} \Cc\Nc(0,1)$, the same result as  (\ref{eq:multiRDBfraction}) is valid.

\subsection{The Multiple User Selection Case: Multiplexing Gain}

 In this section, we  consider multiple user
selection with RDB with multiple beams, aiming at multiplexing
gain, and investigate what can be achieved under the assumption of
$|\alpha_k|=1, ~\forall~k$ for simplicity. To do so, we
consider a simple user scheduling method based on the RBF method
\cite{Sharif&Hassibi:05IT}
 and then analyze the asymptotic performance of the
considered scheduling method, which gives an achievable
performance in the multi-beam multiple user selection case. The
considered scheduling method is basically the RBF scheme in
\cite{Sharif&Hassibi:05IT} with the $S$ random (asymptotically)
orthogonal beams given by \eqref{eq:equispaced_beam_def}. That is,
we choose a user that has  maximum
signal-to-interference-plus-noise ratio (SINR) for each beam
$\wbf_b$, $b=1,\cdots,S$, defined in
\eqref{eq:equispaced_beam_def}, and transmit $S$ independent data
streams to the $S$ selected users. In this case,  the received
signal of a selected user $\kappa_b$ is given by
\begin{equation}
y_{\kappa_b} = \sqrt{\frac{P_t}{S}}\hbf_{\kappa_b}^H\wbf_b +
\sqrt{\frac{P_t}{S}}\sum_{b^\prime \neq
b}\hbf_{\kappa_b}^H\wbf_{b^\prime} + n_{\kappa_b}, ~~b=1,\cdots,S,
\end{equation}
where $\kappa_b = \arg \max_{1 \le k \le K} \mathrm{SINR}_{k,b}$,
$ \mathrm{SINR}_{k,b} = \frac{\rho
M|\abf(\theta_k)^H\abf(\vartheta_b)|^2} {1+ \sum_{b'\neq b}\rho M
|\abf(\theta_k)^H\abf(\vartheta_{b'})|^2}$, and
$\rho=\frac{P_t}{S}$ is the per-user power of each scheduled user.
 The expected sum
rate of this scheduling method is given by
\begin{equation}\label{eq:sumrate_of_RM}
\Rc_M = \sum_{b=1}^S \Rc_{\kappa_b},
\end{equation}
where the data rate of each scheduled user $\kappa_b$ for beam $b$
is given by
\begin{equation}\label{eq:indivirate_of_RM}
\Rc_{\kappa_b}=  \mathbb{E}\left[ \log\left(1+ \max_{1\le k \le
K}\mathrm{SINR}_{k,b}\right)\right]=\mathbb{E}\left[\log\left(1+
\frac{\rho M|\abf(\theta_{\kappa_b})^H\abf(\vartheta_b)|^2} {1+
\sum_{b'\neq b}\rho M
|\abf(\theta_{\kappa_b})^H\abf(\vartheta_{b'})|^2}\right)\right].
\end{equation}
We first introduce the following lemma necessary to derive the
asymptotic result regarding \eqref{eq:sumrate_of_RM} and
\eqref{eq:indivirate_of_RM}:

\begin{lemma}\label{lem:lemma2_uppbdd_B}
For $|{\tilde{\theta}}_k| \in (0,1]$, we have an upper bound for
$F_M({\tilde{\theta}})$, given by $ F_M({\tilde{\theta}}_k) \le
\frac{1}{M|{\tilde{\theta}}_k|}$,  where $F_M(\cdot)$ is defined
in \eqref{eq:beam_patterngain}.
\end{lemma}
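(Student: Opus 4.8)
The plan is to reduce the claimed inequality directly to an elementary estimate on the denominator of the Fej\'er kernel. From \eqref{eq:beam_patterngain} we have $F_M({\tilde{\theta}}_k) = \frac{1}{M}\left|\frac{\sin(\pi{\tilde{\theta}}_k M/2)}{\sin(\pi{\tilde{\theta}}_k/2)}\right|$, so the desired bound $F_M({\tilde{\theta}}_k) \le \frac{1}{M|{\tilde{\theta}}_k|}$ is equivalent, after multiplying through by $M$, to
\[
|{\tilde{\theta}}_k|\cdot\left|\sin\tfrac{\pi{\tilde{\theta}}_k M}{2}\right| \le \left|\sin\tfrac{\pi{\tilde{\theta}}_k}{2}\right|.
\]
Since $\left|\sin\frac{\pi{\tilde{\theta}}_k M}{2}\right| \le 1$, it suffices to prove the sharper-looking statement $|{\tilde{\theta}}_k| \le \left|\sin\frac{\pi{\tilde{\theta}}_k}{2}\right|$ for all ${\tilde{\theta}}_k$ with $|{\tilde{\theta}}_k|\in(0,1]$. (By evenness of both sides in ${\tilde{\theta}}_k$ we may take ${\tilde{\theta}}_k\in(0,1]$ without loss of generality.)

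The key step is the concavity of $\sin$ on $[0,\pi/2]$: the graph of $\sin$ lies above the chord joining $(0,0)$ and $(\pi/2,1)$, i.e. $\sin x \ge \frac{2}{\pi}x$ for $x\in[0,\pi/2]$ (Jordan's inequality). For ${\tilde{\theta}}_k\in(0,1]$ we have $x:=\frac{\pi{\tilde{\theta}}_k}{2}\in(0,\pi/2]$, hence $\sin\frac{\pi{\tilde{\theta}}_k}{2} \ge \frac{2}{\pi}\cdot\frac{\pi{\tilde{\theta}}_k}{2} = {\tilde{\theta}}_k > 0$, which is exactly the bound needed. Combining this with $\left|\sin\frac{\pi{\tilde{\theta}}_k M}{2}\right|\le 1$ yields $F_M({\tilde{\theta}}_k) = \frac{1}{M}\frac{\left|\sin(\pi{\tilde{\theta}}_k M/2)\right|}{\sin(\pi{\tilde{\theta}}_k/2)} \le \frac{1}{M}\cdot\frac{1}{{\tilde{\theta}}_k}$, completing the argument.

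There is essentially no obstacle here; the only point requiring care is the range of ${\tilde{\theta}}_k$. The inequality $\left|\sin\frac{\pi{\tilde{\theta}}_k}{2}\right|\ge|{\tilde{\theta}}_k|$ is tight at $|{\tilde{\theta}}_k|=1$ and false for larger arguments, so the hypothesis $|{\tilde{\theta}}_k|\in(0,1]$ is used in full. Since $F_M(\cdot)$ is periodic with period $2$, restricting attention to $|{\tilde{\theta}}_k|\le 1$ is harmless for the later applications, where ${\tilde{\theta}}_k$ can always be reduced modulo $2$ into $(-1,1]$.
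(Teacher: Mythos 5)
Your proof is correct and follows essentially the same route as the paper: bound the numerator $\left|\sin\frac{\pi\tilde{\theta}_k M}{2}\right|$ by $1$ and then show $\sin\frac{\pi\tilde{\theta}_k}{2}\ge \tilde{\theta}_k$ on $(0,1]$, which the paper establishes by the same concavity argument (checking $f(0)=f(1)=0$ and $f''<0$) that underlies the Jordan inequality you invoke. No issues.
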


\begin{proof}
Since $F_M({\tilde{\theta}}_k)$ and $\frac{1}{M|{\tilde{\theta}}_k|}$ are even functions, it is enough
to consider ${\tilde{\theta}}_k \in (0,1]$ only.
From \eqref{eq:beam_patterngain}, we have an upper bound of $F_M({\tilde{\theta}}_k)$:
\begin{align*}
F_M({\tilde{\theta}}_k)
&\overset{(a)}{\le} \frac{1}{M} \frac{1}{\sin \frac{\pi {\tilde{\theta}}_k}{2}} \overset{(b)}{\le} \frac{1}{M{\tilde{\theta}}_k}
\end{align*}
where $(a)$ follows from $|\sin\frac{\pi {\tilde{\theta}}_k M}{2}| \le 1$ and
$\sin\frac{\pi {\tilde{\theta}}_k}{2}>0$ for ${\tilde{\theta}}_k \in (0,1]$, and $(b)$ follows from
\begin{align*}
&\frac{1}{{\tilde{\theta}}_k} - \frac{1}{\sin\frac{\pi
{\tilde{\theta}}_k}{2}} \ge 0 ~~\Longleftrightarrow~~
f({\tilde{\theta}}_k):=\sin\frac{\pi {\tilde{\theta}}_k}{2} -
{\tilde{\theta}}_k \ge 0.
\end{align*}
The RHS is true because $f(0)=f(1)=0$ with $f''({\tilde{\theta}}_k)=-\frac{\pi^2}{4}\sin\frac{\pi{\tilde{\theta}}_k}{2}<0$
for ${\tilde{\theta}}_k \in (0,1]$.
$\hfill \square$
\end{proof}
Now the following theorem shows the asymptotic result on
  \eqref{eq:sumrate_of_RM} and \eqref{eq:indivirate_of_RM} when the
total power $P_t$ is fixed regardless of $S$.

\vspace{0.5em}
\begin{theorem}[The case of fixed total transmit power $P_t=1$]\label{thm:thm_multi_user_selection}
For $K=M^q$, $S=M^{\ell}$  with $q \in (0,1)$ and $\ell \in
(0,q-\frac{\epsilon}{2})$,  asymptotic upper and lower bounds on
the per-user rate $\Rc_{\kappa_b}$ of selected user $\kappa_b$
for fixed total transmit power
$P_t=1$ are  given by
\begin{equation}
\log(1+M^{2q-1-\ell-\epsilon}) \lesssim_M \Rc_{\kappa_b}
\lesssim_M \log(1+M^{2q-1-\ell+\epsilon})
\end{equation}
for any $\epsilon >0$.
\end{theorem}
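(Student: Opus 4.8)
The plan is to reuse, essentially verbatim, the two-stage mechanism behind Lemma~\ref{lem:lemma1_bounds_of_Zk}, Theorem~\ref{thm:thm1_asymp_bounds}, and Theorem~\ref{thm:thm_multibeam_sus}. In stage (i) I would pin down, for a single user $k$, the order of $\mathrm{Pr}\{\mathrm{SINR}_{k,b}>M^p\}$ over a suitable range of exponents $p$; in stage (ii), since the $K=M^q$ users are i.i.d.\ (the law of $\mathrm{SINR}_{k,b}$ does not depend on $\vartheta$ by the periodicity of $F_M$, exactly as for $Z_k$ in the proof of Theorem~\ref{thm:thm_multibeam_sus}), I would compute $\mathrm{Pr}\{\max_k\mathrm{SINR}_{k,b}>M^p\}=1-(1-\mathrm{Pr}\{\mathrm{SINR}_{k,b}>M^p\})^{M^q}$ and feed the right choices of $p$ into the integral estimate of $\Rc_{\kappa_b}=\Ebb[\log(1+\max_k\mathrm{SINR}_{k,b})]$ exactly as in the proof of Theorem~\ref{thm:thm1_asymp_bounds}. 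Write $\rho=P_t/S=M^{-\ell}$, $Z_{k,b}:=M|\abf(\theta_k)^H\abf(\vartheta_b)|^2=MF_M^2(\vartheta_b-\theta_k)$ and $I_k:=\rho\sum_{b'\neq b}M|\abf(\theta_k)^H\abf(\vartheta_{b'})|^2$, so that $\mathrm{SINR}_{k,b}=\rho Z_{k,b}/(1+I_k)$; recall that $Z_{k,b}$ has the law of the $Z_k$ of Lemma~\ref{lem:lemma1_bounds_of_Zk}.

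The \emph{upper half} of stage (i) is immediate: $1+I_k\ge1$ gives $\mathrm{SINR}_{k,b}\le\rho Z_{k,b}$, hence $\{\mathrm{SINR}_{k,b}>M^p\}\subseteq\{Z_{k,b}>M^{p+\ell}\}$, and Lemma~\ref{lem:lemma1_bounds_of_Zk} gives $\mathrm{Pr}\{\mathrm{SINR}_{k,b}>M^p\}<\tfrac{4}{\pi}M^{-(1+p+\ell)/2}$ whenever $p+\ell\in(-1,1)$; the same bound also yields the deterministic cap $\max_k\mathrm{SINR}_{k,b}\le\rho M=M^{1-\ell}$, which I will need in the upper rate bound.

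The \emph{lower half} is the single genuinely new ingredient and the step I expect to be the main obstacle: I must show that the interference is asymptotically negligible on the event that the useful power is large, so that $\mathrm{SINR}_{k,b}$ behaves like $\rho Z_{k,b}$. Fix $\eta>0$ small and $p$ with $p>\ell-1$ and $p+\ell+\eta\in(-1,1)$. On $\{Z_{k,b}>M^{p+\ell+\eta}\}$, Lemma~\ref{lem:lemma1_bounds_of_Zk} forces $|\vartheta_b-\theta_k|<\tfrac{4}{\pi}M^{-(1+p+\ell+\eta)/2}$, which is $o(1/S)$ because $p>\ell-1$; since the $S$ training beams are equispaced by $2/S$ on the period-$2$ circle, this means (just as in the derivation of \eqref{eq:thm4_sus_proof}) that the $j$-th nearest of the remaining beams lies at circular distance at least of order $j/S$ from $\theta_k$. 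Applying Lemma~\ref{lem:lemma2_uppbdd_B} term by term, $MF_M^2(\vartheta_{b'}-\theta_k)\le 1/(M|\vartheta_{b'}-\theta_k|^2)$, so summing the resulting convergent $\sum j^{-2}$ series gives $I_k=O(\rho S^2/M)=O(M^{\ell-1})\to0$ since $\ell<1$. Hence on that event $\mathrm{SINR}_{k,b}\ge\rho Z_{k,b}/(1+o(1))>M^{p+\eta}/(1+o(1))>M^p$ for all large $M$, i.e.\ $\{Z_{k,b}>M^{p+\ell+\eta}\}\subseteq\{\mathrm{SINR}_{k,b}>M^p\}$, and Lemma~\ref{lem:lemma1_bounds_of_Zk} then gives $\mathrm{Pr}\{\mathrm{SINR}_{k,b}>M^p\}>\tfrac{1}{2\pi}M^{-(1+p+\ell+\eta)/2}$.

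Stage (ii) then follows the proof of Theorem~\ref{thm:thm1_asymp_bounds}. For the lower rate bound I would take $p=2q-1-\ell-\epsilon$: the hypothesis $\ell\in(0,q-\tfrac{\epsilon}{2})$ is precisely what makes $p>\ell-1$ (and $p<1-\ell$ since $q<1$), so stage (i) applies, and choosing $\eta<\epsilon$ gives $M^q\,\mathrm{Pr}\{\mathrm{SINR}_{k,b}>M^p\}\ge\tfrac{1}{2\pi}M^{(\epsilon-\eta)/2}\to\infty$, hence $\mathrm{Pr}\{\max_k\mathrm{SINR}_{k,b}>M^p\}\to1$ and $\Rc_{\kappa_b}\ge\log(1+M^p)\,\mathrm{Pr}\{\max_k\mathrm{SINR}_{k,b}>M^p\}\gtrsim_M\log(1+M^{2q-1-\ell-\epsilon})$. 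For the upper rate bound I would take $p=2q-1-\ell+\epsilon$: a union bound gives $\mathrm{Pr}\{\max_k\mathrm{SINR}_{k,b}>M^p\}\le M^q\cdot\tfrac{4}{\pi}M^{-(1+p+\ell)/2}=\tfrac{4}{\pi}M^{-\epsilon/2}\to0$, so with the cap $\max_k\mathrm{SINR}_{k,b}\le M^{1-\ell}$ one gets $\Rc_{\kappa_b}\le\log(1+M^{1-\ell})\,O(M^{-\epsilon/2})+\log(1+M^p)\lesssim_M\log(1+M^{2q-1-\ell+\epsilon})$, exactly as in the last display of the proof of Theorem~\ref{thm:thm1_asymp_bounds}. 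The two halves give the claim; and, as in Theorem~\ref{thm3:ratio_of_Rto_PerfCSI} and the remark after Corollary~\ref{corol:multiRDBfraction}, reinstating the random gains $\alpha_k\sim\Cc\Nc(0,1)$ does not change the conclusion.
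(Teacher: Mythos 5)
Your proposal is correct, and it reaches the same bounds through the same analytic core --- Lemma~\ref{lem:lemma1_bounds_of_Zk} for the signal tail, Lemma~\ref{lem:lemma2_uppbdd_B} together with the $2/S$ beam spacing to show the interference is $O(M^{\ell-1})$ on the event that the useful power is large, and the integral estimate from Theorem~\ref{thm:thm1_asymp_bounds} --- but it packages the argument differently. The paper never touches the per-user SINR distribution: for the lower bound it introduces a \emph{virtual} scheduler $\tilde{\kappa}_b=\arg\max_k M|\abf(\theta_k)^H\abf(\vartheta_b)|$, uses $\mathrm{SINR}_{\tilde{\kappa}_b,b}\le\mathrm{SINR}_{\kappa_b,b}$ to pass to the signal-power-maximizing user, and then conditions on $\{Z_{bb}\ge M^p\}$ (with $Z_{bb}$ the \emph{maximum} signal power over users) before controlling the interference of that one selected user; for the upper bound it simply drops the interference and applies the modified Theorem~\ref{thm:thm1_asymp_bounds} to $\rho Z_{bb}$. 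You instead sandwich the single-user event $\{\mathrm{SINR}_{k,b}>M^p\}$ between $\{Z_{k,b}>M^{p+\ell}\}$ and $\{Z_{k,b}>M^{p+\ell+\eta}\}$ and then run the extreme-value computation $1-(1-\mathrm{Pr}\{\cdot\})^{M^q}$ directly on the SINR. Your route buys a slightly cleaner symmetry (one tail characterization feeds both bounds, and no auxiliary scheduler is needed), at the price of the extra slack parameter $\eta$ and of having to argue that the inclusion $\{Z_{k,b}>M^{p+\ell+\eta}\}\subseteq\{\mathrm{SINR}_{k,b}>M^p\}$ holds uniformly for large $M$ --- which you do correctly, since the interference bound on that event is deterministic. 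The paper's route makes the operational meaning of the lower bound explicit (a signal-power-based scheduler already achieves it). Your bookkeeping of the parameter constraints ($p>\ell-1$ being exactly $\ell<q-\epsilon/2$ after substituting $p=2q-1-\ell-\epsilon$, and $p+\ell\in(-1,1)$ for Lemma~\ref{lem:lemma1_bounds_of_Zk}) matches the conditions the paper collects at the end of its proof.
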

\vspace{0.5em}

\begin{proof}
The flow of proof is to first find lower and upper bounds on
$\Rc_{\kappa_b}$, denoted by $L$ and $U$, respectively, and then
to show that  the bounds $L$ and $U$ are asymptotically bounded as
\begin{equation}
\log(1+M^{2q-1-\ell-\epsilon}) \lesssim_M L \le \Rc_{\kappa_b} \le U \lesssim_M
\log(1+M^{2q-1-\ell+\epsilon}).
\end{equation}
To find $L$ and $U$, we consider a virtual user selection method
based on maximizing signal power not SINR for each beam
$\abf(\vartheta_b)$, i.e.,
\[
\tilde{\kappa}_b = \mathop{\arg \max}_{1 \le k \le K} M
|\abf(\theta_k)^H\abf(\vartheta_b)| ~~\text{for}~~ b=1,\cdots,S.
\]
Since the user $\tilde{\kappa}_b$ is chosen based on maximizing signal power only,
we have $\mathrm{SINR}_{\tilde{\kappa}_b,b} \le \mathrm{SINR}_{\kappa_b,b}$.
Therefore, a lower bound on $\Rc_{\kappa_b}$ can be obtained as
\begin{equation} \label{eq:lower_rdb_multibeam}
\Rc_{\kappa_b} \ge \Rc_{\tilde{\kappa}_b} = \mathbb{E}\left[\log\left(
1+\frac{\rho Z_{bb}}{1+ \rho \sum_{b^\prime \neq b}Z_{bb^\prime}}
\right)\right] =: L
\end{equation}
where $Z_{bb^\prime} :=
M|\abf(\theta_{\tilde{\kappa}_b})^H\abf(\vartheta_{b^\prime}))|^2$
for $b^\prime=1, \cdots, S$. Furthermore, an upper bound on
$\Rc_{\kappa_b}$ can be obtained by simply ignoring the inter-beam
interference as
\begin{equation} \label{eq:theoremUfirst}
\Rc_{\kappa_b} \le \mathbb{E}[\log(1+\rho M|\abf(\theta_{\kappa_b})^H\abf(\vartheta_{b})|^2 )]
\le \mathbb{E}[\log(1+\rho Z_{bb})] =: U.
\end{equation}
By modifying Theorem  \ref{thm:thm1_asymp_bounds}  to include
$\rho=1/S=M^{-\ell}$ in front of $Z_{bb}$ and applying the
modified theorem to $U$ in \eqref{eq:theoremUfirst}, we obtain
\begin{equation}\label{eq:theo5upperbound}
U \lesssim_M \log(1+M^{2q-1-\ell + \epsilon}).
\end{equation}
Hence, the claim on the upper bound follows. 

Now consider the case of lower bound $L$.
From the fact that $\mathbb{E}[f(X)]=\int f(x) p(x) dx \ge \int f(x) p(x,A) dx
= p(A)\mathbb{E}[f(X|A)]$ for a non-negative function $f(X)$, $L$ in \eqref{eq:lower_rdb_multibeam} with $\rho=1/S=M^{-\ell}$ can be bounded as
\begin{align}
&\mathbb{E}\left[\log\left(
1+\frac{M^{-\ell}Z_{bb}}{1+M^{-\ell} \sum_{b^\prime \neq b}Z_{bb^\prime}}
\right)\right] \nonumber \\
&~~\ge\mathrm{Pr}\{Z_{bb} \ge M^p\}\cdot
\mathbb{E}\left[\log\left(
1+\frac{M^{-\ell} Z_{bb}}{1+M^{-\ell}  \sum_{b^\prime\neq b}Z_{bb^\prime}}
\right) \bigg| Z_{bb} \ge M^p\right] \label{eq:thm_proof_lower_bdd_Rm}.
\end{align}
Under the condition that $\{Z_{bb} \ge M^p\}$, we have
\[
|\theta_{\tilde{\kappa}_b} - \vartheta_b| \le \frac{1}{\frac{\pi}{4}M^{(1+p)/2}}
\]
by  \eqref{eq:lemma1_bounds_of_tildetheta} of Lemma
\ref{lem:lemma1_bounds_of_Zk}. Therefore,
$|\theta_{\tilde{\kappa}_b}-\vartheta_{b^\prime}| >
\left|\frac{2}{S} - \frac{1}{\frac{\pi}{4}M^{(1+p)/2}}\right|$,
$\forall b^\prime \neq b$. Furthermore, we can re-arrange the
indices of $\{\vartheta_{b^\prime}\}_{b^\prime \neq b}$ in the
order of closeness to $\vartheta_b$ with the new indices $\{j\}$.
Then, we have
\begin{equation}\label{eq:lower_bdd_neighborhood_direc}
|\theta_{\tilde{\kappa}_b}- \vartheta_{2j-1}|, |\theta_{\tilde{\kappa}_b}- \vartheta_{2j}|
> \left|
\frac{2j}{M^{\ell}} - \frac{1}{\frac{\pi}{4}M^{(1+p)/2}}
\right|,
\end{equation}
since $\frac{2}{S}=\frac{2}{M^{\ell}}$ is the angular spacing
between two adjacent beams. We now have a lower bound on $L$,
given by
\begin{align}
L
&\overset{(a)}{\ge} \mathrm{Pr}\{Z_{bb} \ge M^p\} \cdot
\mathbb{E}\left[\log\left(
1+\frac{M^{-\ell}Z_{bb}}{1+M^{-\ell} \sum_{j\neq b}\frac{1}{M|\theta_{\tilde{\kappa}_b}-\vartheta_j|^2}}
\right) \bigg| Z_{bb} \ge M^p\right]  \nonumber \\
&\overset{(b)}{\ge} \mathrm{Pr}\{Z_{bb} \ge M^p\} \cdot
\mathbb{E}\left[\log\left(
1+\frac{M^{-\ell}Z_{bb}}{1+M^{-\ell} \sum_{j=1}^{\frac{S}{2}}\frac{2}{M\left|\frac{2j}{S}-\frac{1}{\frac{\pi}{4}M^{(1+p)/2}}\right|^2}}
\right) \bigg| Z_{bb} \ge M^p\right]  \nonumber \\
&\overset{(c)}{\gtrsim}_M \mathrm{Pr}\{Z_{bb} \ge M^p\} \cdot
\mathbb{E}\left[\log\left(
1+\frac{M^{-\ell}Z_{bb}}{1+2M^{\ell-1} \sum_{j=1}^{\frac{S}{2}}\frac{1}{j^2}}
\right) \bigg| Z_{bb} \ge M^p\right] \nonumber \\
&\overset{(d)}{\ge} \mathrm{Pr}\{Z_{bb} \ge M^p\} \cdot
\log\left(1+\frac{M^{-\ell}M^p}{1+\frac{\pi^2 M^{\ell-1}}{3}}\right) \nonumber \\
&\sim_M \log(1+M^{p-\ell}), \label{eq:theo5lowerbound}
\end{align}
where $(a)$  holds by  \eqref{eq:thm_proof_lower_bdd_Rm} and Lemma
\ref{lem:lemma2_uppbdd_B} with
$Z_{bj}=MF_M^2(\theta_{\tilde{\kappa}_b}-\vartheta_j)$; $(b)$ holds by
\eqref{eq:lower_bdd_neighborhood_direc}; $(c)$ holds because
$\left|\frac{2j}{M^{\ell}} -
\frac{1}{\frac{\pi}{4}M^{(1+p)/2}}\right|^2 \gtrsim_M
\left|\frac{j}{M^{\ell}}\right|^2$ for large $M$ provided that
$\ell < (1+p)/2$; $(d)$ follows from $\sum_{j=1}^{\infty}
\frac{1}{j^2}=\frac{\pi^2}{6}$; and  the last step holds because
$\mathrm{Pr}\{Z_{bb} \ge M^p\}\to 1$ for $p<2q-1$ by
\eqref{eq:asymp_prop_Z_k} and $\frac{\pi^2 M^{\ell-1}}{3} \to 0$
for $\ell <1$.  Hence the claim on the lower bound follows.

Note that the conditions used to derive
\eqref{eq:theo5lowerbound} and \eqref{eq:theo5upperbound} are $p <
2q -1$, $\ell < (1+p)/2$, and $\ell < 1$, and $q$ and $\ell$ are
given.  Set $p=2q-1 - \epsilon$ for $\epsilon >0$. Then, $\ell < q
- \epsilon/2 < 1$ since $q \in (0,1)$. This concludes proof.
$\hfill \square$
\end{proof}

\begin{theorem}[The case of fixed per-user power $\rho=1$, i.e. $P_t = S$]\label{cor:corollary_ofthm5}
For $K=M^q$ and $S=M^{\ell}$  with $q \in (0,1)$ and $\ell \in (0,\min(q-\frac{\epsilon}{2},\frac{1}{2}))$,
 asymptotic upper and lower bounds on
the per-user rate $\Rc_{\kappa_b}$ of selected user $\kappa_b$
for $\rho=1$
are given by
\begin{equation}
\log(1+M^{2q-1-\epsilon}) \lesssim_M \Rc_{\kappa_b} \lesssim_M
\log(1+M^{2q-1+\epsilon})
\end{equation}
for any $\epsilon >0$.
\end{theorem}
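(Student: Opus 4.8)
The plan is to follow the proof of Theorem~\ref{thm:thm_multi_user_selection} almost verbatim, the only structural change being that the per-user power is now $\rho=P_t/S=1$ rather than $\rho=1/S=M^{-\ell}$. As in that proof I would sandwich $\Rc_{\kappa_b}$ between a lower bound $L$ and an upper bound $U$ built from the signal-power-maximizing virtual user $\tilde{\kappa}_b=\arg\max_{1\le k\le K}M|\abf(\theta_k)^H\abf(\vartheta_b)|$ (which satisfies $\mathrm{SINR}_{\tilde{\kappa}_b,b}\le\mathrm{SINR}_{\kappa_b,b}$), and then show that $L$ and $U$ are asymptotically trapped between $\log(1+M^{2q-1-\epsilon})$ and $\log(1+M^{2q-1+\epsilon})$.

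\textbf{Upper bound.} I would simply discard the inter-beam interference: since $\rho=1$, $\mathrm{SINR}_{k,b}\le M|\abf(\theta_k)^H\abf(\vartheta_b)|^2$, so $\Rc_{\kappa_b}=\mathbb{E}[\log(1+\max_k\mathrm{SINR}_{k,b})]\le\mathbb{E}[\log(1+Z_{bb})]=:U$, where $Z_{bb}:=\max_{1\le k\le K}M|\abf(\theta_k)^H\abf(\vartheta_b)|^2$. Because $\abf(\cdot)$ has period $2$ and $\tilde\theta_k=\vartheta_b-\theta_k\sim\mathrm{Unif}[-1,1]$, the variable $Z_{bb}$ has exactly the distribution of the $Z$ in Theorem~\ref{thm:thm1_asymp_bounds} with $K=M^q$ users, hence $U\lesssim_M\log(1+M^{2q-1+\epsilon})$ directly from Theorem~\ref{thm:thm1_asymp_bounds} — no rescaling of that theorem is needed here, precisely because $\rho=1$ (this is the difference with Theorem~\ref{thm:thm_multi_user_selection}, where the factor $M^{-\ell}$ had to be carried through).

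\textbf{Lower bound.} I would take $L=\Rc_{\tilde{\kappa}_b}=\mathbb{E}\big[\log\big(1+Z_{bb}/(1+\sum_{b'\neq b}Z_{bb'})\big)\big]$ with $Z_{bb'}:=M|\abf(\theta_{\tilde{\kappa}_b})^H\abf(\vartheta_{b'})|^2$, condition on $\{Z_{bb}\ge M^p\}$ with $p=2q-1-\epsilon$, and bound $L$ below by $\mathrm{Pr}\{Z_{bb}\ge M^p\}$ times the corresponding conditional expectation. On this event Lemma~\ref{lem:lemma1_bounds_of_Zk} localizes $\theta_{\tilde{\kappa}_b}$ to within $\tfrac{1}{(\pi/4)M^{(1+p)/2}}$ of $\vartheta_b$, so after reindexing the remaining beams by closeness one has $|\theta_{\tilde{\kappa}_b}-\vartheta_j|\ge|2j/M^\ell-\tfrac{1}{(\pi/4)M^{(1+p)/2}}|$, and Lemma~\ref{lem:lemma2_uppbdd_B} gives $Z_{bb'}\le 1/(M|\theta_{\tilde{\kappa}_b}-\vartheta_{b'}|^2)$. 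Summing over $b'$ and using $\ell<(1+p)/2=q-\tfrac{\epsilon}{2}$ so that the spacing term $2j/M^\ell$ dominates the localization width, I get $\sum_{b'\neq b}Z_{bb'}\lesssim_M 2M^{2\ell-1}\sum_{j\ge1}j^{-2}=\tfrac{\pi^2}{3}M^{2\ell-1}$. This is the one genuinely new point relative to Theorem~\ref{thm:thm_multi_user_selection}: without the $\rho=M^{-\ell}$ damping, the interference floor is of order $M^{2\ell-1}$ rather than $M^{\ell-1}$, and it vanishes only when $\ell<\tfrac12$. With that extra hypothesis the SINR denominator $\to1$, $\mathrm{Pr}\{Z_{bb}\ge M^p\}\to1$ by \eqref{eq:asymp_prop_Z_k} (valid for $p<2q-1$), and therefore $L\gtrsim_M\log(1+M^{2q-1-\epsilon})$.

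\textbf{Main obstacle.} There is essentially no hard analytic step; the only nontrivial point is the exponent bookkeeping that pins down the hypothesis $\ell\in(0,\min(q-\tfrac{\epsilon}{2},\tfrac12))$. The bound $\ell<q-\tfrac{\epsilon}{2}$ comes — exactly as in Theorem~\ref{thm:thm_multi_user_selection} — from needing the beam spacing $2j/M^\ell$ to dominate the $M^{-(1+p)/2}$ localization width in the interference sum, while the new bound $\ell<\tfrac12$ is forced by requiring the $M^{2\ell-1}$ interference floor to be asymptotically negligible. Combining $L\gtrsim_M\log(1+M^{2q-1-\epsilon})$ with $U\lesssim_M\log(1+M^{2q-1+\epsilon})$ gives the claimed two-sided bound for every sufficiently small $\epsilon>0$.
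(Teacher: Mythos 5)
Your proposal is correct and is exactly the adaptation the paper intends: the paper omits this proof, stating only that it is ``similar to that of Theorem \ref{thm:thm_multi_user_selection},'' and your argument carries out that adaptation faithfully — upper bound via Theorem \ref{thm:thm1_asymp_bounds} applied directly to $Z_{bb}$ (no $M^{-\ell}$ rescaling), lower bound via the signal-power-maximizing user $\tilde{\kappa}_b$ with the interference sum controlled by Lemmas \ref{lem:lemma1_bounds_of_Zk} and \ref{lem:lemma2_uppbdd_B}. In particular, you correctly identify the one substantive change, namely that without the $\rho=M^{-\ell}$ damping the interference floor is $\Theta(M^{2\ell-1})$ rather than $\Theta(M^{\ell-1})$, which is precisely what forces the additional hypothesis $\ell<\tfrac{1}{2}$ in the theorem statement.
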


\begin{proof}
Proof is similar to that of Theorem \ref{thm:thm_multi_user_selection} and omitted due to space limitation.
$\hfill \square$
\end{proof}

\vspace{0.5em}

In Theorems \ref{thm:thm_multi_user_selection} and
\ref{cor:corollary_ofthm5}, the condition $\ell < q- \epsilon/2 <
1$ guarantees that  the $S$ beams are  asymptotically orthonormal
by \eqref{eq:multibeamortho}  and there exist more users than the
number of beams in the cell by the difference in the fractional
orders $q$ and $\ell$. First, note that the per-user rate in
Theorem \ref{cor:corollary_ofthm5} is the same as that in Theorem
\ref{thm:thm1_asymp_bounds}, i.e., the same per-user rate as that
of the single beam case can be achieved in the multi-beam
multi-user selection case when per-user power is fixed and the
same. Now consider the sum rate in the multiple beam multi-user
selection case. The sum rate $\Rc_M$ corresponding to Theorem
\ref{thm:thm_multi_user_selection} behaves as $\Rc_M =
\Theta(M^\ell \log M^{2q-1-\ell})$ when $2q-1-\ell > 0$. Pick
$\ell = 1-\delta_1$ for some small $\delta_1
>0$ and pick $q=\ell+\epsilon/2+\delta_2$ for some small $\delta_2
> 0$ such that $\delta_1 -\delta_2 -\epsilon/2 > 0$ to have $q<1$
and $-\delta_1 + 2\delta_2 +\epsilon > 0$ to have $2q-1-\ell > 0$.
Then, we have $M^\ell \log M^{2q-1-\ell} = M^{1-\delta_1} \log
(M^{-\delta_1 + 2\delta_2 +\epsilon})$. Thus, sum rate behavior
arbitrarily close to linear scaling w.r.t. the number of antennas
is possible in the multi-beam multi-user selection case by random
beamforming (randomly-directional beamforming) with proper user
scheduling under the UR-LoS channel model. This is a significant
difference from the sum rate behavior \eqref{eq:RBF_sumrate} of
the RBF method \cite{Sharif&Hassibi:05IT} in large-scale MIMO,
i.e., $\lim_{K\rightarrow\infty} \frac{\log K}{M} =0$, under the
i.i.d. Rayleigh fading channel model \eqref{eq:richscatteringgkj}
representing rich scattering environments. The major performance
difference results from the difference in degrees-of-freedom in
the two channels: the UR-LoS channel \eqref{eq:ChModelURLOS} and
the i.i.d. Rayleigh fading channel \eqref{eq:richscatteringgkj}.
In the i.i.d. Rayleigh fading channel case, we have $M$
independent parameters and the channel vector is randomly located
within a ball in the $M$-dimensional space. Consider a cone around
each axis in the $M$-dimensional space so that channel vectors
each of which is contained in each of the cones are roughly
orthogonal, as shown in Fig. 3 of \cite{Lee&Sung:14ITsub}. Then,
the probability that a
 channel vector  generated randomly according to
\eqref{eq:richscatteringgkj} falls into such a cone is
exponentially decreasing as $M$ increases (See Appendix
\ref{appen:cone_probability}). Hence, if the number $K$ of users
randomly distributed within the ball does not increases
exponentially fast w.r.t. the dimension $M$ (i.e.,
$\lim_{K\rightarrow\infty} \frac{\log K}{M} =0$), it is difficult
to find $M$ users whose channel vectors are contained in the $M$
roughly-orthogonal cones (one for each)
\cite{Sharif&Hassibi:05IT,TomasoniEtAl:09ISIT,Hur&Tulino&Caire:12IT}
(the goal of  SUS \cite{Yoo&Goldsmith:06JSAC}, RBF
\cite{Sharif&Hassibi:05IT} or ReDOS-PBR \cite{Lee&Sung:14ITsub}
scheduling is to find such $M$ users\footnote{This is why it is
not easy to apply SUS, RBF, or ReDOS-PBR to finding roughly
orthogonal simultaneous users more than four to six in practical
 setup.}), and linear sum rate scaling by random beamforming w.r.t. the
dimension $M$ (i.e., the number of antennas) is not attainable. In
the considered mm-wave MIMO with the UR-LoS channel model,
however,
 the situation is quite different.  Theorems
\ref{thm:thm_multi_user_selection} and \ref{cor:corollary_ofthm5}
state that  sum rate scaling arbitrarily close to linear scaling
w.r.t. $M$ is possible in this case. This is because the
degree-of-freedom in the UR-LoS channel model
\eqref{eq:ChModelURLOS}  with $\alpha_k=1$ is {\em one} regardless
of the value of $M$. The orthogonality of the multiple transmit
beams is attained by simply dividing the line of the normalized
angle $\theta$ with length 2 by line segments each with length
$2/S=2/M^\ell$. Thus, if $K=M^q$ with $q > \ell$, there exists
many users in each line segment one of which is well matched to
the transmit beam direction associated with each line segment if
$2q-1-\ell > 0$. Thus, in this case user scheduling to select such
$S$ users is beneficial for random beamforming-based BS operation.
In fact, the channel matrix composed of the channel vectors of the
users scheduled in such a way satisfies the asymptotically
favorable propagation condition  in
\cite{Ngo&Larsson&Marzetta:14EUSIPCO}. Note that the fundamental
difference between the UR-LoS channel \eqref{eq:ChModelURLOS}
modeling high propagation directivity and the i.i.d. Rayleigh
fading channel model \eqref{eq:richscatteringgkj} for rich
scattering is that linear sum rate scaling w.r.t. the number $M$
of antennas by random beamforming is attainable with the number
$K$ of users increasing linearly w.r.t. $M$ in the UR-LoS channel
model, whereas linear sum rate scaling w.r.t. the number $M$ of
antennas by random beamforming is attainable with $K$ increasing
exponentially w.r.t. $M$ in the i.i.d. Rayleigh fading channel
model! Thus, high directivity is preferred to rich scattering for
opportunistic  random beamforming under massive MIMO situation.
This suggests that opportunistic random beamforming is a viable
choice for massive MIMO in the mm-wave band with high propagation
directivity.

\subsection{Performance comparison: The fractional rate order}

In this subsection, we compare the asymptotic performance of the
three schemes considered in the previous sections. Here, we assume
$\alpha_k=1, ~\forall~k$  and $P_t=1$. In order to compare the
relative performance, we define the {\em fractional rate order
(FRO) $\gamma$} as
\begin{equation}
\gamma:= \lim_{M\to\infty}\frac{\log \Rc}{\log M}.
\end{equation}
Note that $\Rc = \Theta(M^\gamma)$ for $\gamma \ne 0$. For $\gamma
>0$, $\Rc$ increases to infinity as $M \rightarrow \infty$,
whereas for $\gamma <0$, $\Rc$ decreases to zero as $M\rightarrow
\infty$.
 Now consider the three rates $\Rc=\Rc_1,\Rc_S$, and $\Rc_M$.
First,  for  the single beam RDB rate $\Rc_1$ we have
by Theorem \ref{thm:thm1_asymp_bounds}
that
\begin{equation}\label{eq:perf_region_R1}
\gamma_1=\lim_{M\to\infty}\frac{\log \Rc_1}{\log M}
=\left\{
\begin{array}{ll}
0, & ~\text{for}~ q \in (\frac{1}{2},1) \\
2q-1 & ~\text{for}~ q \in (0,\frac{1}{2}),
\end{array}
\right.
\end{equation}
where we used $\log(1+x)=x$ for small $x$ for the second part.
Next, for the multi-beam RDB scheme with  single-user selection,
we have
\begin{equation}\label{eq:perf_region_RS}
\gamma_S=\lim_{M\to\infty}\frac{\log \Rc_S}{\log M}
=0, ~~ \text{for} ~~ q \in(0,1)
\end{equation}
by Theorem \ref{thm:thm_multibeam_sus} with setting $\ell$ such
that $1/2 < \ell + q < 1$. Here, $\gamma_S=0$ is achieved even for
$q\in (0,1/2)$ because of added $\ell$. Finally, we consider the
multi-beam RDB strategy with multi-user selection. In this case,
$\Rc_M = \Theta( M^{\ell} \log(1+M^{2q-\ell-1}))$ from Theorem
\ref{thm:thm_multi_user_selection} and \eqref{eq:sumrate_of_RM}.
Using $M^\ell \log(1+M^{2q-\ell-1})=
\log(1+1/M^{-2q+\ell+1})^{M^\ell M^{-2q+\ell+1-(-2q+\ell+1)}} =
\log(1+\frac{1}{M^{-2q+\ell+1}})^{M^{-2q+\ell+1} M^{2q-1}} \sim_M
M^{2q-1}$ by setting $\ell$ such that $2q-1< \ell < q$, we obtain
\begin{equation}\label{eq:perf_region_RM}
\lim_{M\to\infty}\frac{\log \Rc_M}{\log M}
=2q-1, ~~ \text{for} ~~ q \in(0,1).
\end{equation}
\begin{figure}[h]
\begin{psfrags}
        \psfrag{1}[c]{\large  $1$} %
        \psfrag{-1}[c]{\large  $-1$} %
        \psfrag{q}[c]{ \large $q$} %
        \psfrag{p}[r]{
         \large $\underset{M \to \infty}{\lim} \frac{\log\Rc}{\log M}$ \hspace{-1em}} %
        \psfrag{h}[c]{\large  $\frac{1}{2}$} %
        \psfrag{0}[c]{\large  $0$} %
        \psfrag{r1}[l]{\large  $\Rc=\Rc_1$} %
        \psfrag{rs}[l]{\large  $\Rc=\Rc_S$} %
        \psfrag{rm}[l]{\large  $\Rc=\Rc_M$} %
        \centerline{ \scalefig{0.45} \epsfbox{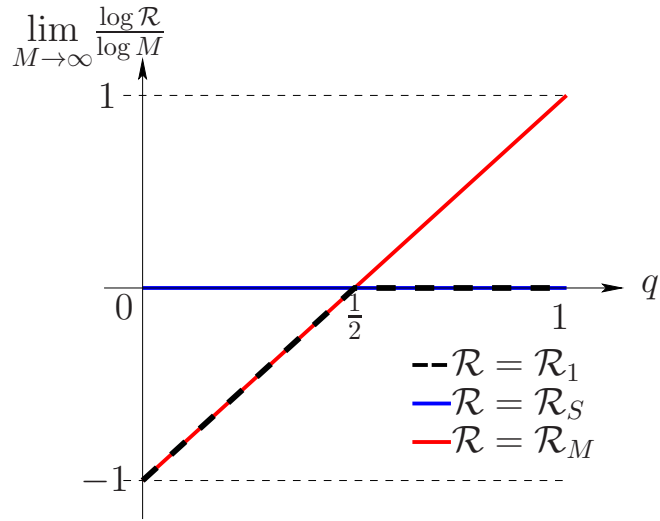} }
        \caption{Fractional rate order versus $q$}
        \label{fig:perf_region}
\end{psfrags}
\end{figure}
Fig. \ref{fig:perf_region} shows \eqref{eq:perf_region_R1},
\eqref{eq:perf_region_RS} and \eqref{eq:perf_region_RM} versus $q
\in (0,1)$, and shows which strategy among RDB should be used for
different $q$ determining the number of users in the cell relative
to the number of antenna elements. $\Rc_M$ has the largest FRO for
$q\in(\frac{1}{2},1)$, whereas $\Rc_S$ has the largest FRO for
$q\in(0,\frac{1}{2})$.  $\gamma_1$ is a lower bound on both
$\gamma_S$ and $\gamma_M$ for all $q \in (0,1)$, and $\gamma_M
\uparrow 1$ as $q \uparrow 1$, as mentioned already. Note that
$\gamma_M < 0$ for $q \in (0,\frac{1}{2})$, which implies $\Rc_M
\rightarrow 0$ as $M\rightarrow \infty$. This is because the total
number of users in the cell is not sufficient to find a user well
matched to each beam. The transition point of determining the
scarcity of users in the cell is $K=\Theta(\sqrt{M})$ under the
UR-LoS model, whereas the transition point is
$K=\Theta(\exp(\eta^2 M))$ for some $\eta^2 \in (0,1)$ or
equivalently $M=\Theta(\log K)$ in the i.i.d. Rayleigh fading
channel model. When $q \in (0,1/2)$, i.e., there exist not many
users in the cell, the best strategy is the multi-beam single-user
strategy. This implies that downlink channel estimation to
identify the user channel, i.e., user's propagation angle is
important in this regime. On the other hand, when $q \in (1/2,1)$,
i.e., there exist a sufficient number of users in the cell,
downlink channel estimation is less important, and user scheduling
based on equi-spaced random beams with an arbitrary angle offset
is sufficient to obtain good performance and achieves linear sum
rate scaling w.r.t. the number of antennas when the number of
users increases linearly w.r.t. the number of antennas.

\section{Numerical results}
\label{sec:NumericalResult}

In this section, we provide some numerical results
to validate our asymptotic analysis in the previous sections.
All the expectations in the below are average over 5000
channel realizations and we set to $P_t=1$.

\subsection{The Single Beam Case}

To verify the asymptotic analysis in Section
\ref{sec:singlerandombeam}, we considered a mm-wave MU-MISO
downlink system with the UR-LoS channel model. Fig.
\ref{fig:singlebeam_ratio} (a) and (b) shows the value of
$\frac{\mathbb{E}[1+Z]}{\log(1+M)}$ versus $q$ for
$M=100,500,1000,5000,10000$ for $\alpha_k=1$ and $\alpha_k \sim
\Cc\Nc(0,1)$, respectively. It is seen that the curve of
$\frac{\mathbb{E}[1+Z]}{\log(1+M)}$ versus $q$  gradually
converges to the theoretical line of $2q-1$ for $q>\frac{1}{2}$
and $0$ for $q \le \frac{1}{2}$ as $M$ increases. Note that there exist some gap between the theoretical asymptotic line and the finite-sample results. This results from the slow rate of convergence. 
\begin{figure}[ht]
\centerline{ \SetLabels
\L(0.25*-0.1) (a) \\
\L(0.75*-0.1) (b) \\
\endSetLabels
\leavevmode
\strut\AffixLabels{
\scalefig{0.5}\epsfbox{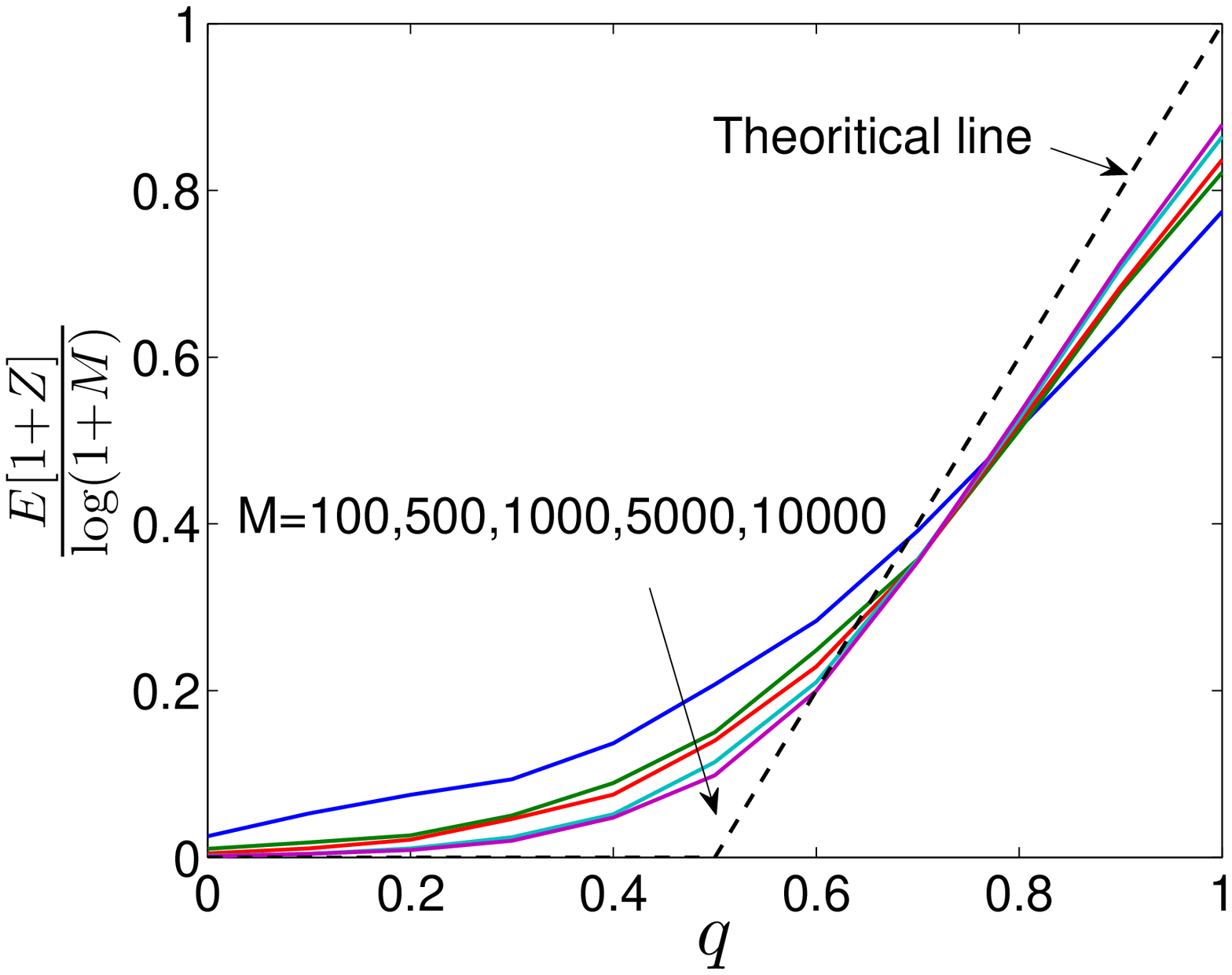}
\scalefig{0.5}\epsfbox{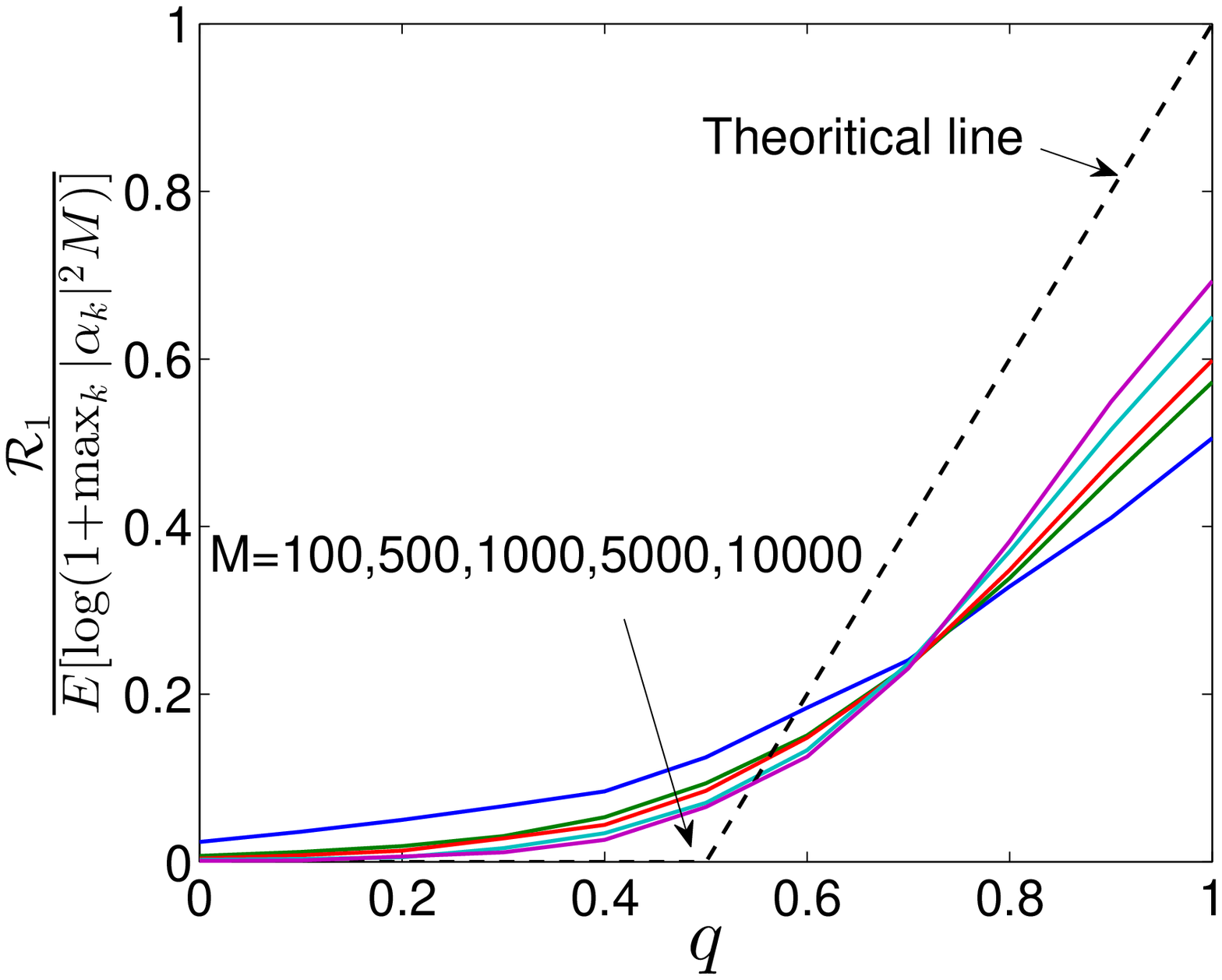}
} } \vspace{0.5cm} \caption{The ratio of the  RDB rate $\Rc_1$ to
the rate with perfect CSI $\mathbb{E}[\log(1+\max_k |\alpha_k|^2
M)]$ versus $q$ for different $M$: (a) $\alpha_k=1$ and (b)
$\alpha_k\sim\Cc\Nc(0,1)$} \label{fig:singlebeam_ratio}
\end{figure}
Fig. \ref{fig:singlebeam_rate} (a) and (b) show the actual RDB
rate  w.r.t. $M$  for $q=0.1$ to $0.5$ and $q=0.6$ to $1$,
respectively, in the case of $\alpha_k \sim \Cc\Nc(0,1)$. It is
seen in Fig. \ref{fig:singlebeam_rate} (a) that the RDB  rate for
$q$ below $0.5$ decreases as $M$ increases, but it almost remains
the same when $q=0.5$. On the other hand, it is seen in Fig.
\ref{fig:singlebeam_rate} (b) that  the RDB rate for  $q$ above
$0.5$ increases as $M$ increases. (Since x-axis is in log scale,
the rate curve is linear as expected by Theorem
\ref{thm3:ratio_of_Rto_PerfCSI} when $q>0.5$.) The results  in
Figs.  \ref{fig:singlebeam_ratio} and \ref{fig:singlebeam_rate}
coincide with Theorems \ref{thm:thm1_asymp_bounds} and
\ref{thm3:ratio_of_Rto_PerfCSI}.
\begin{figure}[ht]
\centerline{ \SetLabels
\L(0.25*-0.1) (a) \\
\L(0.75*-0.1) (b) \\
\endSetLabels
\leavevmode
\strut\AffixLabels{ \scalefig{0.5}\epsfbox{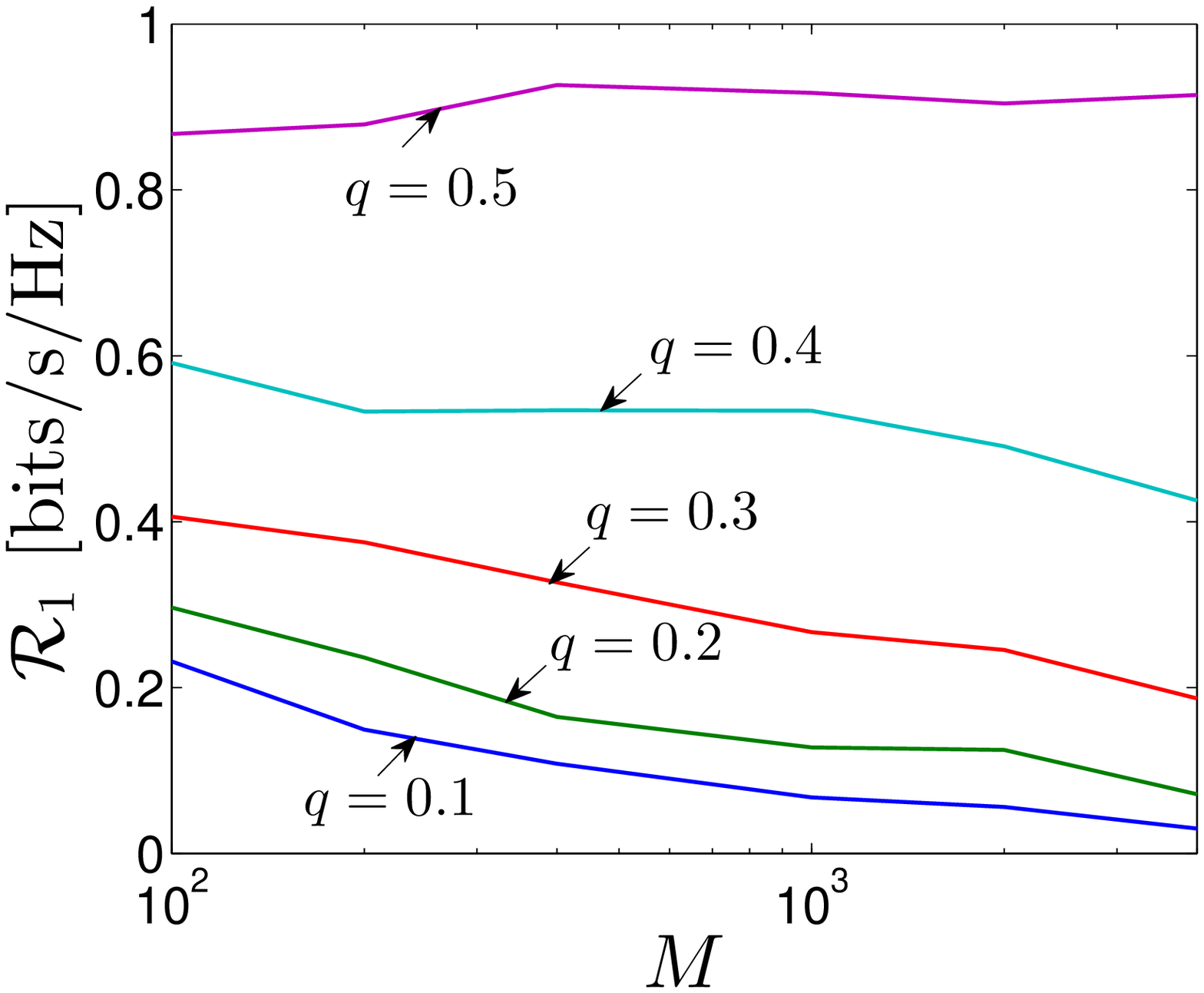}
\scalefig{0.5}\epsfbox{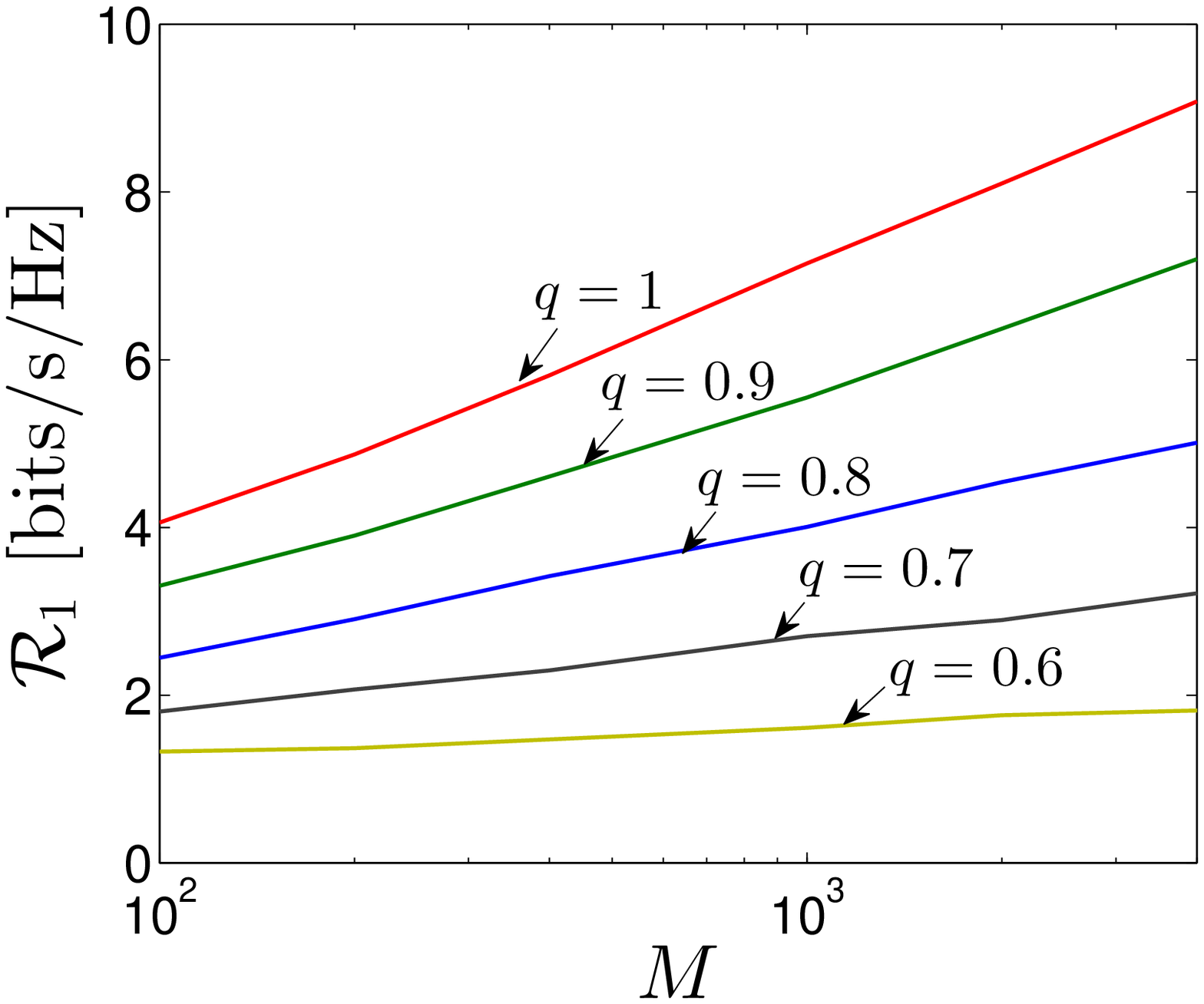} } }
\vspace{0.5cm} \caption{The RDB rate $\Rc_1$  versus $M$ with $\alpha_k\sim\Cc\Nc(0,1)$ for different $q$ (log scale on x-axis): (a) $q=0.1,0.2,\cdots,0.5$ and
(b) $q=0.6,0.7,\cdots,1$ } \label{fig:singlebeam_rate}
\end{figure}

\subsection{The Multiple Beam Case}

We first considered the multiple beam RDB with single user selection.
Fig. \ref{fig:multibeam_sus_ratio} (a) and (b) show
the ratio of the multiple beam RDB rate $\Rc_S$  with single-user selection to the rate with perfect CSI versus $q$ for different $\ell$ in the cases of $\alpha_k=1$ and $\alpha_k\sim\Cc\Nc(0,1)$, respectively, when $M=1000$.
It is seen that the simulation curves
  roughly match the theoretical lines.
 \begin{figure}[ht]
\centerline{ \SetLabels
\L(0.25*-0.1) (a) \\
\L(0.75*-0.1) (b) \\
\endSetLabels
\leavevmode
\strut\AffixLabels{ \scalefig{0.5}\epsfbox{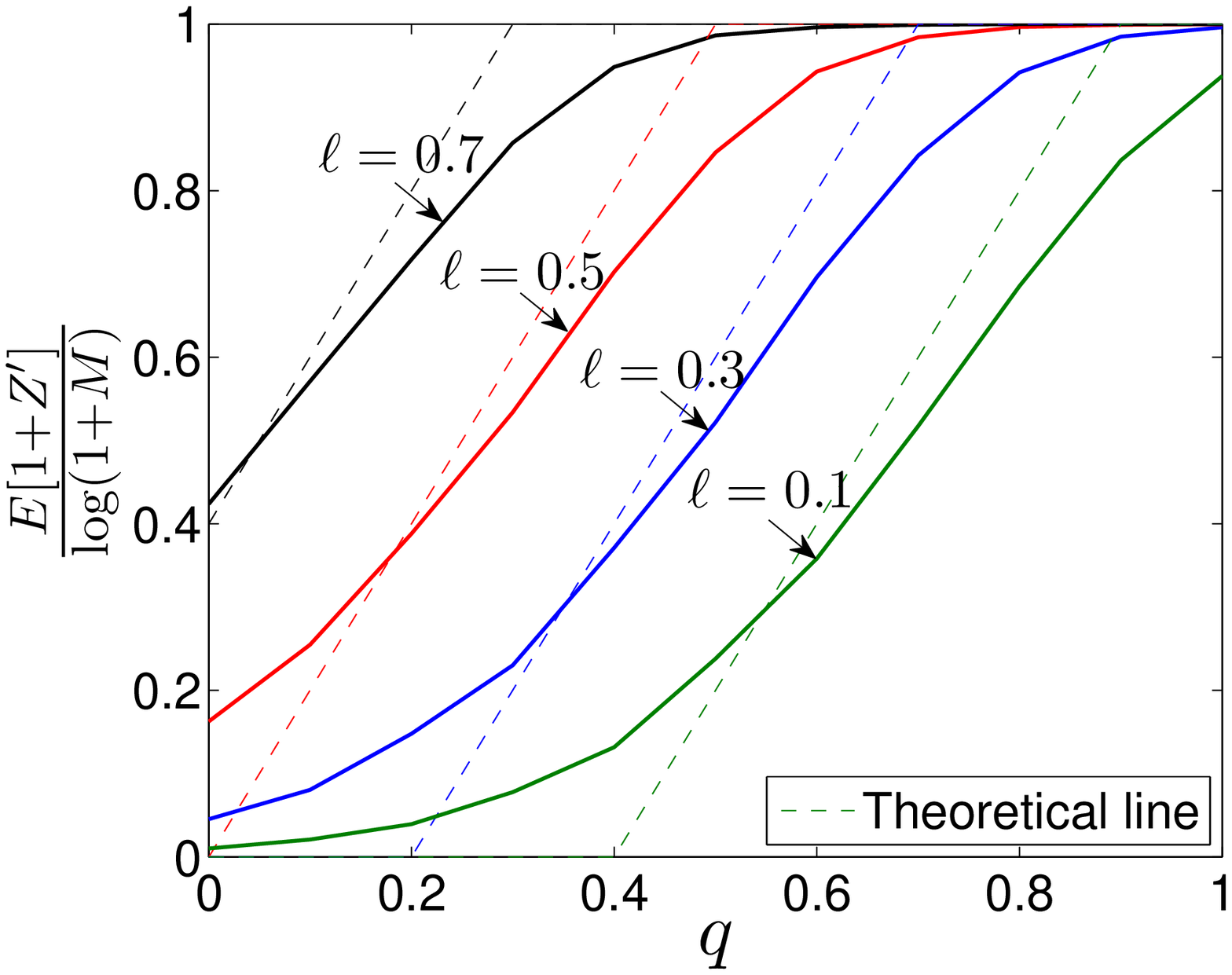}
\scalefig{0.5}\epsfbox{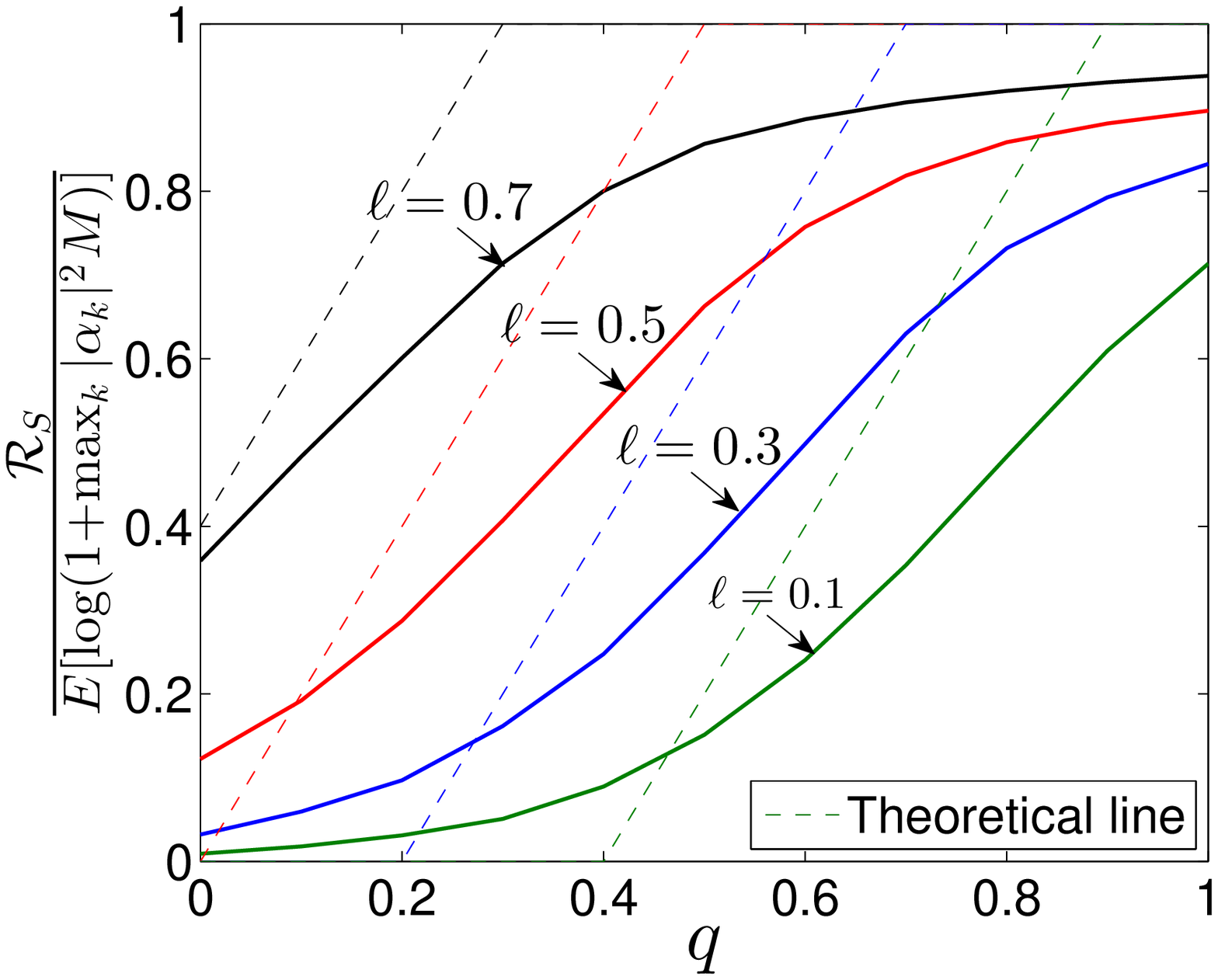} } }
\vspace{0.5cm} \caption{The ratio of $\Rc_S$ to the rate with perfect CSI
$\mathbb{E}[\log(1+\max_k|\alpha_k|^2M)]$ versus $q$ for different $\ell$: (a) $\alpha_k=1, \forall k$
and (b) $\alpha_k \sim \Cc\Nc(0,1), \forall k$} \label{fig:multibeam_sus_ratio}
\end{figure}
We then verified the rate $\Rc_S$ for $q = 0.3$ with different
$\ell$. It is seen in Fig. \ref{fig:multibeamCombine} (a) that
$\Rc_S$ increases as $M$ increases for the cases of $\ell>0.2$
(i.e., $q+\ell > 0.5$), as predicted by Theorem
\ref{thm:thm_multibeam_sus}. On the other hand, the rate decreases
for the case of $\ell < 0.2$ as $M$ increases. Finally, we
verified the multi-beam multi-user selection RDB. We set to
$q=0.7$ and used $\alpha_k \sim \Cc\Nc(0,1)$, $\forall~k$. Fig.
\ref{fig:multibeamCombine} (b) shows the per-user rate
$\Rc_{\kappa_b}$ in Theorem \ref{thm:thm_multi_user_selection}
versus $M$ for different $\ell$. It is seen that the per-user rate
$\Rc_{\kappa_b}$ increases when $\ell < 0.4$, whereas it decreases
when $\ell > 0.4$, as $M$ increases, as predicted by Theorem
\ref{thm:thm_multi_user_selection} (i.e., $2q-1-\ell > 0$ or
$2q-1-\ell < 0$).

 \begin{figure}[ht]
\centerline{ \SetLabels
\L(0.25*-0.1) (a) \\
\L(0.75*-0.1) (b) \\
\endSetLabels
\leavevmode
\strut\AffixLabels{
\scalefig{0.5}\epsfbox{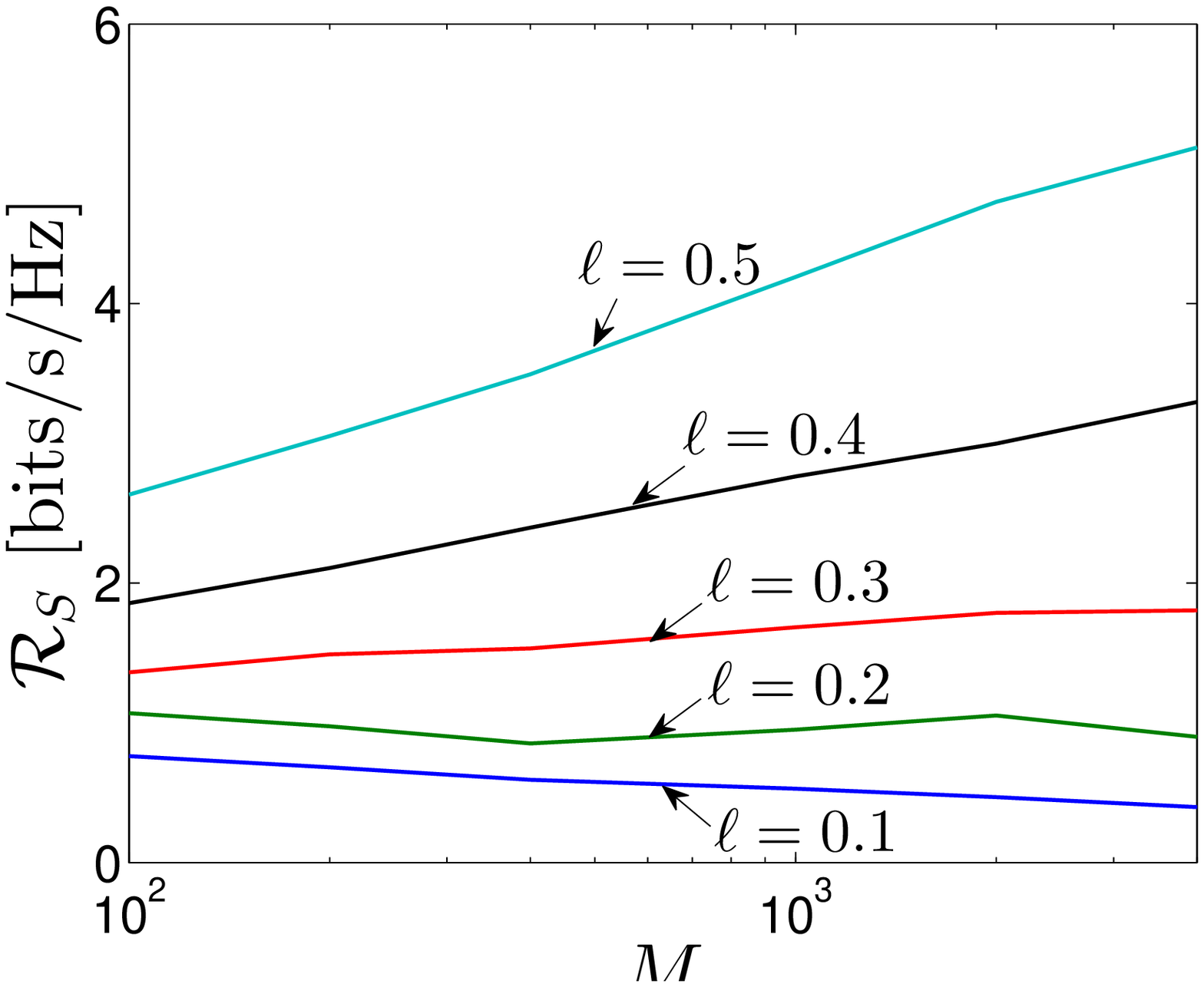}
\scalefig{0.5}\epsfbox{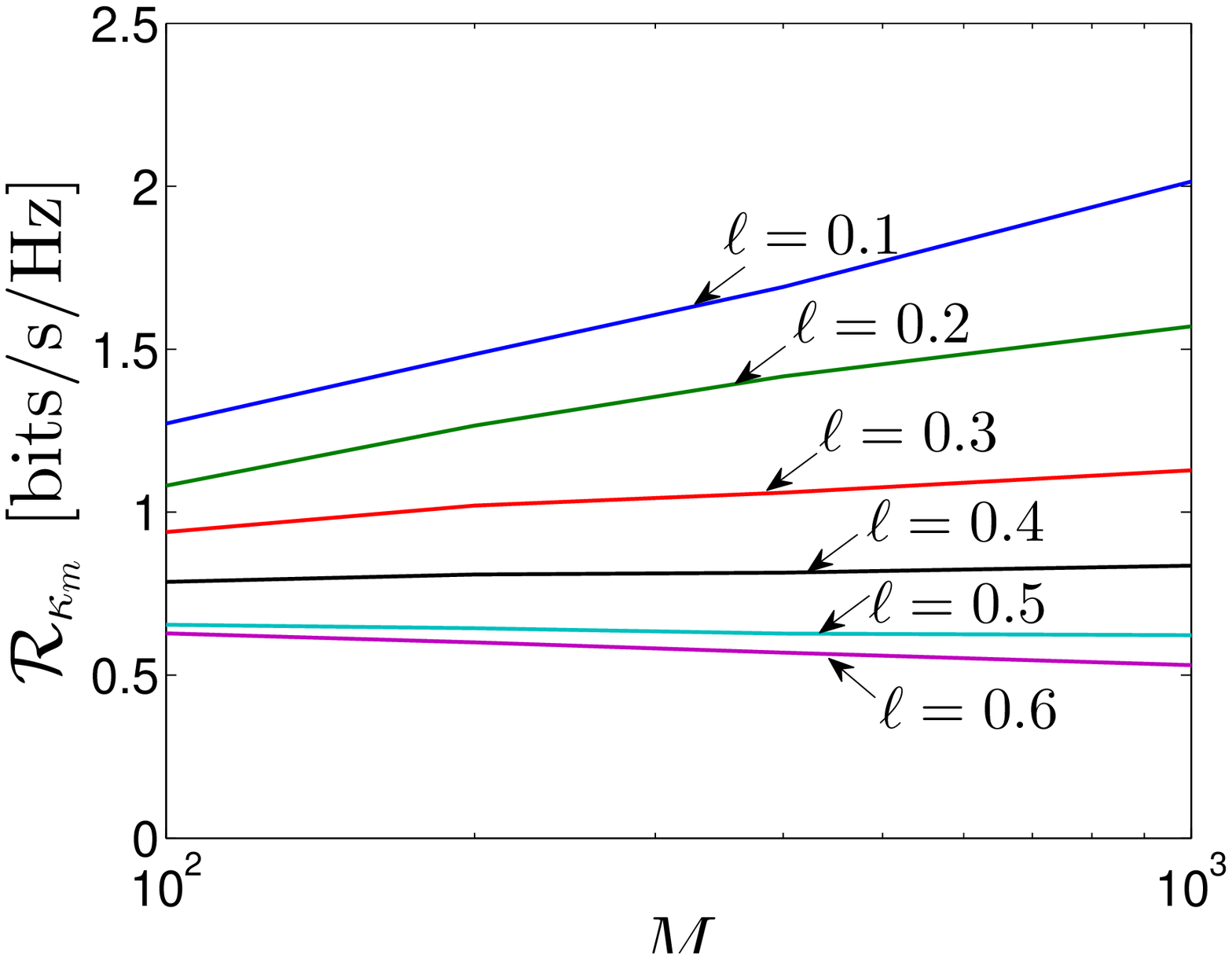}
} } \vspace{0.5cm} \caption{(a) $\Rc_S$  versus $M$ ($q=0.3$ and
$\alpha_k \sim \Cc\Nc(0,1)$) (log scale on x-axis) and (b) the
per-user rate of the multi-beam multi-user selection RDB versus
$M$ ($q=0.7$ and $\alpha_k \sim \Cc\Nc(0,1)$)}
\label{fig:multibeamCombine}
\end{figure}

\section{Conclusion}
\label{sec:conclusion}

We have considered RDB for millimeter-wave MU-MISO and examined
the associated MU gain, using asymptotic performance analysis
based on the UR-LoS channel model which well captures radio
propagation channels in the mm-wave band. We have shown that there
exists a transition point on the number of users relative to the
number of antenna elements for non-trivial performance of the RDB
scheme and have identified the case in which downlink training and
channel estimation are important for good performance. We have
also shown that sum rate scaling arbitrarily close to linear
scaling w.r.t. the number of antenna elements can be achieved
under the UR-LoS channel model  by random beamforming based on
multiple beams equi-spaced in the angle domain and proper user
scheduling, if the number of users in the cell increases linearly
w.r.t. the number of antenna elements. We have compared three RDB
schemes composed of beamforming and user scheduling based on the
newly defined fractional rate order, yielding insights into the
most effective beamforming and scheduling choices for mm-wave
MU-MISO in various operating conditions. Simulation results
validate the analysis based on asymptotic techniques for finite
cases. The results here is based on the simplified UR-LoS
channel model capturing high propagation directivity, and thus
extension to a general channel model is left as future work.

\appendices

\section{Distribution of $\vartheta - \theta_k$}  \label{append:uniformdist}

Since $\vartheta,\theta_k \overset{\text{i.i.d.}}{\sim} \mathrm{Unif}[-1,1]$,
the difference random variable ${\tilde{\theta}}_k$ has the distribution, given by
\begin{equation}
p({\tilde{\theta}}) = \left\{
\begin{array}{ll}
\frac{1}{4}{\tilde{\theta}} + \frac{1}{2}, & -2 \le {\tilde{\theta}} \le 0 \\
-\frac{1}{4}{\tilde{\theta}} + \frac{1}{2}, &~~ 0 \le {\tilde{\theta}} \le 2. \\
\end{array}
\right.
\end{equation}
For any function $f(\tilde{\theta})$ with the periodicity of period two,
we have $f(\tilde{\theta}) = f(\tilde{\theta} + 2)$ for
$\tilde{\theta} \in [-1,0]$ and $f(\tilde{\theta}) = f(\tilde{\theta} - 2)$ for
$\tilde{\theta} \in [0,1]$.
Therefore,
we can regard $p(\tilde{\theta})$ on the function $f(\tilde{\theta})$
as
\begin{equation}
p(\tilde{\theta}) =
\left\{
\begin{array}{ll}
\frac{1}{4}\tilde{\theta}+\frac{1}{2} - \frac{1}{4}\tilde{\theta}, & ~ -1 \le \tilde{\theta} \le 0 \\
-\frac{1}{4}\tilde{\theta} + \frac{1}{2} + \frac{1}{4}\tilde{\theta}, & ~ ~~~ 0\le
\tilde{\theta} \le 1
\end{array}
\right.
\end{equation}
i.e., $\tilde{\theta} \sim \mathrm{Unif}[-1,1]$.

\section{Proof of Theorem \ref{thm3:ratio_of_Rto_PerfCSI}}
\label{appen:thm3}

Before proving Theorem \ref{thm3:ratio_of_Rto_PerfCSI}, we prove another interesting lemma of which proof is partly used in proof of Theorem \ref{thm3:ratio_of_Rto_PerfCSI}.

\begin{lemma}
\label{thm2:ratio_of_R}
For $K = M^q$, $q \in (\frac{1}{2},1)$ and $\alpha_k \overset{\text{i.i.d.}}{\sim} \Cc\Nc(0,1)$,
we have
\begin{equation}
\lim_{M \to \infty}\frac{\Rc_1}{\mathbb{E}\left[\log(1+|\alpha_{k'}|^2Z_{k'})\right]} = 1,
\end{equation}
where  $k' = \arg \max_k Z_k$, and $\Rc_1$ is the optimal RDB rate in \eqref{eq:expected_rate_s1} considering the random path gain.
\end{lemma}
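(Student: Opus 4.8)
The plan is to pin the ratio at $1$ by a two‑sided sandwich. Write $Z_k = M|\abf(\theta_k)^H\abf(\vartheta)|^2$, $Z = Z_{k'} = \max_k Z_k$, and recall that the index $k'=\arg\max_k Z_k$ is a function of $\{\theta_k\}$ and $\vartheta$ only, hence is independent of the path gains $\{\alpha_k\}$. Consequently $|\alpha_{k'}|^2$ is still $\mathrm{Exp}(1)$‑distributed and independent of $Z$; this is the one structural fact the whole argument rests on.

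\emph{Lower bound (trivial direction).} Since $\Rc_1=\mathbb{E}[\log(1+\max_k|\alpha_k|^2 Z_k)]$ and the term at $k=k'$ is one of the maximands, $\Rc_1 \ge \mathbb{E}[\log(1+|\alpha_{k'}|^2 Z_{k'})]$, so the ratio is $\ge 1$ for every $M$.

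\emph{Upper bound.} First decouple the gains from the beam pattern: $\max_k|\alpha_k|^2 Z_k \le (\max_k|\alpha_k|^2)(\max_k Z_k) = (\max_k|\alpha_k|^2)\,Z$, and then apply $\log(1+ab)\le\log(1+a)+\log(1+b)$ for $a,b\ge0$ to get
\[
\Rc_1 \;\le\; \mathbb{E}[\log(1+Z)] \;+\; \mathbb{E}\!\left[\log\!\big(1+\textstyle\max_k|\alpha_k|^2\big)\right].
\]
Because the $|\alpha_k|^2$ are i.i.d.\ $\mathrm{Exp}(1)$, a union bound gives $\Pr\{\max_k|\alpha_k|^2>t\}\le Ke^{-t}$, hence $\mathbb{E}[\max_k|\alpha_k|^2]\le\log K+1$ and, by Jensen, $\mathbb{E}[\log(1+\max_k|\alpha_k|^2)]\le\log(2+\log K)=O(\log\log M)$ for $K=M^q$. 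By Theorem~\ref{thm:thm1_asymp_bounds}, $\mathbb{E}[\log(1+Z)]=(2q-1)\log M\,(1+o(1))$, which grows like $\log M$ for $q>\frac12$ and therefore dominates the $O(\log\log M)$ correction. Thus $\Rc_1 \le \mathbb{E}[\log(1+Z)]\,(1+o(1))$.

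\emph{Comparing the denominator with $\mathbb{E}[\log(1+Z)]$.} Using the one‑sided bound $\log(1+ab)\ge\log(1+a)$ when $b\ge1$ and $\log(1+ab)\ge\log(1+a)+\log b$ when $0<b<1$ (so in all cases $\log(1+ab)\ge\log(1+a)-(\log b)^-$), together with independence of $|\alpha_{k'}|^2$ and $Z$ and the finite constant $c:=\mathbb{E}[(\log|\alpha_{k'}|^2)^-]=\int_0^1(-\log x)e^{-x}\,dx<\infty$, one obtains $\mathbb{E}[\log(1+|\alpha_{k'}|^2 Z)]\ge\mathbb{E}[\log(1+Z)]-c$. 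Combining the three estimates,
\[
1 \;\le\; \frac{\Rc_1}{\mathbb{E}[\log(1+|\alpha_{k'}|^2 Z_{k'})]} \;\le\; \frac{\mathbb{E}[\log(1+Z)]\,(1+o(1))}{\mathbb{E}[\log(1+Z)]-c},
\]
and since $\mathbb{E}[\log(1+Z)]\to\infty$ (as $2q-1>0$ for $q>\frac12$), the right‑hand side tends to $1$, which proves the lemma.

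The only genuinely delicate point is the denominator estimate: one must observe that the argmax index is determined by the angles alone, so $|\alpha_{k'}|^2$ is an ordinary $\mathrm{Exp}(1)$ variable independent of $Z$ and its negative‑log part contributes only a bounded constant, rather than something that could scale with the diverging quantity $Z$. Everything else — the two $\log(1+ab)$ inequalities, the tail bound for $\max_k|\alpha_k|^2$, and the invocation of Theorem~\ref{thm:thm1_asymp_bounds} to identify $\mathbb{E}[\log(1+Z)]=\Theta(\log M)$ — is routine.
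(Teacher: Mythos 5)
Your proof is correct, and while it follows the same overall sandwich strategy as the paper (trap $\Rc_1$ between the denominator and a quantity involving $\max_k|\alpha_k|^2$, then show the gain contributes only lower-order terms relative to the $\Theta(\log M)$ main term from Theorem~\ref{thm:thm1_asymp_bounds}), your execution is genuinely different and arguably cleaner. The paper conditions on the gain, invokes a $\beta$-parametrized modification of Theorem~\ref{thm:thm1_asymp_bounds} (its Eq.~\eqref{eq:Append2Lemma3ULB1}) inside an iterated expectation with random $\beta$, and then closes the argument by showing the ratio is at most $(2q-1+\epsilon)/(2q-1-\epsilon)$ for every $\epsilon>0$; this requires some implicit uniformity in $\beta$ when applying the asymptotic bound conditionally. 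You instead decouple the gains from $Z$ entirely via the pointwise inequalities $\log(1+ab)\le\log(1+a)+\log(1+b)$ and $\log(1+ab)\ge\log(1+a)-(\log b)^-$, so that both numerator and denominator are compared directly to the single diverging quantity $\mathbb{E}[\log(1+Z)]$, with corrections of size $O(\log\log M)$ and the fixed constant $c=\mathbb{E}[(\log|\alpha_{k'}|^2)^-]<\infty$ respectively. This avoids both the $\epsilon$-bookkeeping and the conditional application of the modified theorem. You also make explicit the one structural fact the paper uses only tacitly: $k'=\arg\max_k Z_k$ is measurable with respect to the angles alone, so $|\alpha_{k'}|^2$ remains $\mathrm{Exp}(1)$ and independent of $Z$, which is exactly what keeps $c$ a constant rather than a quantity growing with $K$. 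The trade-off is that the paper's route produces, as a by-product, the explicit two-sided rate estimates \eqref{eq:appen1_lower} and \eqref{eq:appen1_upper} that are reused verbatim in the proof of Theorem~\ref{thm3:ratio_of_Rto_PerfCSI}, whereas your argument would need the $(2q-1)\log M\,(1+o(1))$ asymptotics restated separately for that purpose.
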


\begin{proof}
$\Rc_1$ is bounded as
\begin{equation}\label{eq:appen1_bounds}
\mathbb{E}[\log (1+|\alpha_{k'}|^2 Z_{k'})]
\le \Rc_1 \le \mathbb{E}\left[
\log\left(1+\left(\max_k |\alpha_k|^2\right) Z_{k'}\right)
\right].
\end{equation}
Eq. \eqref{eq:thm1_asymp_bounds} in Theorem \ref{thm:thm1_asymp_bounds} can easily be modified to
\begin{equation}   \label{eq:Append2Lemma3ULB1}
\log(1+\beta M^{2q-1-\epsilon}) ~\lesssim_M~ \mathbb{E}\left[\log(1+\beta Z)\right] ~\lesssim_M~
\log(1+\beta M^{2q-1+\epsilon})
\end{equation}
for $q \in (\frac{1}{2},1)$ and $\beta > 0$.  Note that
$\mathbb{E}[\log (1+|\alpha_{k'}|^2 Z_{k'})]= \mathbb{E}[\mathbb{E}[\log (1+|\alpha_{k'}|^2 Z_{k'})~|~|\alpha_{k'}|^2 ]]$ by the law of iterated expectations.  Applying the lower bound in \eqref{eq:Append2Lemma3ULB1} to $\mathbb{E}[\log (1+|\alpha_{k'}|^2 Z_{k'})~|~|\alpha_{k'}|^2 ]$, we have
\begin{equation}
\mathbb{E}[\log(1+|\alpha_{k'}|^2 M^{2q-1-\epsilon})] \lesssim_M
\mathbb{E}[\log (1+|\alpha_{k'}|^2 Z_{k'})].
\end{equation}
For $q \in (\frac{1}{2},1)$, we have
\begin{align}
\mathbb{E}[\log(1+|\alpha_{k'}|^2 M^{2q-1-\epsilon})]
&\sim_M \mathbb{E}[\log(|\alpha_{k'}|^2 M^{2q-1-\epsilon})] \nonumber \\
&=\mathbb{E}[\log|\alpha_{k'}|^2] + (2q-1-\epsilon)\log M \nonumber\\
& \sim_M  (2q-1-\epsilon)\log M  \label{eq:appen1_lower}
\end{align}
for any sufficiently small $\epsilon>0$ such that $2q-1-\epsilon >0$. Since $|\alpha_{k'}|^2 \sim \chi^2(2)$, $\mathbb{E}[\log|\alpha_{k'}|^2]$ is a constant.

Now consider the upper bound in \eqref{eq:appen1_bounds}. Again applying the law of iterated expectations and the upper bound in \eqref{eq:Append2Lemma3ULB1}, we have $\mathbb{E}\left[
\log\left(1+\left(\max_k |\alpha_k|^2\right) Z_{k'}\right)
\right] \le \mathbb{E}\left[
\log\left(1+\left(\max_k |\alpha_k|^2\right) M^{2q-1+\epsilon}\right)
\right]$.  From the fact that
$\mathbb{E}[\log(1+\max_k|\alpha_k|^2)] \sim_M \log(\log K)$ \cite{Sharif&Hassibi:05IT}, the above bound can further be simplified as
\begin{align}
\mathbb{E}\left[
\log\left(1+\left(\max_k |\alpha_k|^2\right) Z_{k'}\right)
\right] &\lesssim_M \log(M^{2q-1+\epsilon}\log K) \nonumber\\
&\sim_M (2q-1+\epsilon)\log M + \log(\log M). \label{eq:appen1_upper}
\end{align}
Dividing \eqref{eq:appen1_bounds} by
$\mathbb{E}[\log (1+|\alpha_{k'}|^2 Z_{k'})]$, we have
\begin{align}
1 \le \frac{\Rc_1}{\mathbb{E}[\log (1+|\alpha_{k'}|^2 Z_{k'})]} &\le \frac{\mathbb{E}\left[
\log\left(1+\left(\max_k |\alpha_k|^2\right) Z_{k'}\right)
\right]}{\mathbb{E}[\log (1+|\alpha_{k'}|^2 Z_{k'})]} \nonumber\\
&\overset{(a)}{\lesssim}_M \frac{(2q-1+\epsilon)\log M + \log\log M}{ (2q-1-\epsilon)\log M} \nonumber \\
&\sim_M \frac{2q-1+\epsilon}{2q-1-\epsilon} \label{eq:appen1_final}
\end{align}
where step $(a)$ follows from \eqref{eq:appen1_lower} and \eqref{eq:appen1_upper}.
Since \eqref{eq:appen1_final} holds for any  small $\epsilon>0$, the claim follows.
$\hfill \square$
\end{proof}

{\em Proof of Theorem \ref{thm3:ratio_of_Rto_PerfCSI}:} ~~~ By
\eqref{eq:appen1_bounds}, \eqref{eq:appen1_lower}, and
\eqref{eq:appen1_upper}, we have
\[
(2q-1-\epsilon)\log M \lesssim_M \Rc_1 \lesssim_M
(2q-1+\epsilon)\log M + \log\log M.
\]
Dividing the above equation by
$\mathbb{E}[\log(1+M\max_k|\alpha_k|^2)]$ and  using the fact that
$\mathbb{E}[\log(1+M\max_k|\alpha_k|^2)] \sim_M \log M + \log\log
M$ \cite{Sharif&Hassibi:05IT}, we have
\[
\frac{(2q-1-\epsilon)\log M}{\log M + \log\log M}\lesssim_M
\frac{\Rc_1}{\mathbb{E}[\log(1+M\max_k|\alpha_k|^2)]} \lesssim_M
\frac{(2q-1+\epsilon)\log M + \log\log M}{\log M + \log\log M}
\]
for arbitrarily and sufficiently small $\epsilon>0$. Hence, we
have
\begin{equation}
2q-1-\epsilon \lesssim_M
\frac{\Rc_1}{\mathbb{E}[\log(1+M\max_k|\alpha_k|^2)]} \lesssim_M
2q-1+\epsilon.
\end{equation}

Now consider the case of $q\in(0,\frac{1}{2})$. In this case,
Eq. \eqref{eq:thm1_asymp_bounds} in Theorem \ref{thm:thm1_asymp_bounds} can
be modified to
\begin{equation}\label{eq:appendB_83}
\beta M^{2q-1-\epsilon} \lesssim_M \mathbb{E}[\log(1+\beta Z)] \lesssim_M \beta M^{2q-1+\epsilon}
\end{equation}
for $\beta>0$ and sufficiently small $\epsilon >0$ such that $2q-1+\epsilon < 0$, since $\log(1+\beta M^{2q-1\pm \epsilon}) \sim_M \beta M^{2q-1\pm \epsilon} $ from $\log(1+x)\to x$ as $x \to 0$. Again applying the law of iterated expectations and
the upper bound in \eqref{eq:appendB_83}, $\Rc_1$ is upper bounded as
\begin{align}
\Rc_1 &\le \mathbb{E}\left[\log\left(1+ \left(\max_k |\alpha_k|^2 \right) Z_{k'} \right) \right] \\
&\lesssim_M {\mathbb{E}}\left(\max_k |\alpha_k^2 |\right) M^{2q-1+\epsilon} \\
&\sim_M (q\log  M) M^{2q-1+\epsilon} \to 0
\end{align}
as $M \to \infty$.
This concludes the proof. $\hfill \square$

\section{}
\label{appen:cone_probability}

A double cone (or cone)
$\Cc_j$ around each axis $j$ in the $M$-dimensional space
is defined as
\begin{equation}
\Cc_j(\eta) = \left\{\hbf_k :
\frac{|\hbf_k^H\ebf_j|}{\|\hbf_k\|}\ge \eta \right\},
\end{equation}
where $\hbf_k \sim \Cc\Nc({\bf 0}, \Ibf_M)$, $\ebf_j$ is the
$j$-th column of the $M \times M$ identity matrix, and $\eta \in
(0,1)$. The probability that the channel vector $\hbf_k$ is
contained in the cone $\Cc_j$ is  given by
\begin{align}
\mathrm{Pr}\{\hbf_k \in \Cc_j(\eta)\}
&= \mathrm{Pr}\{|h_{k,j}| \ge \eta \|\hbf_k\|\} \nonumber\\
&\overset{(a)}{\approx} \mathrm{Pr}\{|h_{k,j}|^2 \ge \eta^2 M \} \nonumber\\
&\overset{(b)}{=}e^{-\eta^2 M}
\end{align}
where $(a)$ becomes tight for large $M$ due to $\|\hbf_k\|^2/M \to
1$, and $(b)$ holds by $|h_{k,j}|^2 \sim \chi^2(2)$. Therefore,
the probability that the cone $\Cc_j$ contains at least one out of
the $K$ channel vectors is given by
\begin{align}
\mathrm{Pr}\{\Cc_j \neq \emptyset\}
&= 1 - \mathrm{Pr}\{\Cc_j = \emptyset\}
= 1 - \mathrm{Pr}\{\hbf_k \notin \Cc_j\}^K  \nonumber \\
&\approx 1 - \left(1-\frac{1}{e^{\eta^2 M}} \right)^K \to \left\{
\begin{array}{ll}
1, & ~~ \text{for} ~~ \lim_{M,K\to\infty}\frac{\log K}{M} = \infty \\
c_1, & ~~ \text{for} ~~ K=\Theta(\exp(\eta^2M)), ~\text{or}~ M = \Theta(\log K) \\
0, & ~~ \text{for} ~~ \lim_{M,K\to\infty}\frac{\log K}{M} = 0
\end{array}
\right.
\end{align}
as $M,K \to \infty$, where $c_1 \in (0,1)$ is a constant. This is
the physical intuition behind the results in
\cite{Sharif&Hassibi:05IT}.




\end{document}

\begin{figure}[ht]
    \centerline{
\scalefig{0.5}
\epsfbox{figures/multibeam_sus_M100to4000_q3_l1to5_rate.eps} }
    \caption{The data rate of $\Rc_S$ with $\alpha_k \sim \Cc\Nc(0,1)$ w.r.t. $M$  when $q=0.3$ (log scale on x-axis)}
    \label{fig:multibeam_sus_rate}
\end{figure}

Finally, we verified the multiple beam RDB with multiple user
selection with the proposed scheduling  method. We set to $q=0.7$
and used $\alpha_k \sim \Cc\Nc(0,1)$, $\forall~k$. Fig.
\ref{fig:multibeam_peruser_RMrate}  shows the per-user rate
$\Rc_{\kappa_m}$ in Theorem \ref{thm:thm_multi_user_selection}
versus $M$ for different $\ell$. It is seen that the per-user rate
$\Rc_{\kappa_m}$ increases when $\ell < 0.4$, whereas it decreases
when $\ell > 0.4$, as $M$ increases, as predicted by Theorem
\ref{thm:thm_multi_user_selection} (i.e., $2q-1-\ell > 0$ or
$2q-1-\ell < 0$).

\begin{figure}[ht]
    \centerline{
\scalefig{0.5}
\epsfbox{figures/multibeam_mus_M100to1000_q7_l1to6_eachrate.eps} }
    \caption{The per-user rate of the multiple beam RDB with the proposed multi-user scheduling
versus $M$: $q=0.7$}
    \label{fig:multibeam_peruser_RMrate}
\end{figure}